\newif\ifspringer\springerfalse
\newif\ifarxiv\arxivtrue
\definecolor{myred}{RGB}{220,43,25}
\definecolor{mygreen}{RGB}{0,146,64}
\definecolor{myblue}{RGB}{0,143,224}
\tikzstyle{normalNodeS}=[circle, color=black!75!white, fill, draw, inner sep = 0.1em, minimum size = 1.5 em, scale=0.5]
\tikzstyle{labeledNodeS}=[circle, color=black!75!white, draw, inner sep = 0.1em, minimum size = 1.5em, scale=1.25]
\tikzstyle{normalEdgeF}=[line width=1.6pt, color=black!75!white, >=stealth]
\tikzstyle{holdoverEdge}=[normalEdgeF, deeporange, thick]
\tikzstyle{demandNode}=[inner sep=0em]
\tikzstyle{demandEdge}=[>=stealth, double, thick, transparent]
\tikzstyle{normalNode}=[circle, fill=black, draw, inner sep = 0.1em, minimum size = 1em, scale=0.75]
\tikzstyle{labeledNode}=[circle, draw, inner sep = 0.1em, minimum size = 1.5em, scale=0.75, fill=white]
\tikzstyle{normalEdge}=[very thick, >=stealth]
\tikzstyle{holdoverEdge}=[normalEdge, red]
\tikzstyle{demandNode}=[inner sep=0em]
\tikzstyle{demandEdge}=[>=stealth, double, thick]
\definecolor{thirdcolor}{RGB}{0,76,153}
\definecolor{graphcolor}{RGB}{70,70,70}
\tikzstyle{aElement}=[circle, draw, fill=white, inner sep=0.1em, minimum size=1.2em, draw]
\tikzstyle{aElementT}=[aElement, minimum size=2em]
\tikzstyle{clNode}=[circle, draw=graphcolor, fill=graphcolor, scale=0.6]
\tikzstyle{ifNode}=[rectangle, draw=graphcolor, fill=white, scale=0.9]
\tikzstyle{afNode}=[rectangle, draw=graphcolor, fill=maincolor, scale=0.9]
\tikzstyle{boxnode}=[rectangle split, draw, minimum width=.5cm]
\tikzstyle{namedVertex} = [shape=circle,draw, color=black,fill=white]
\tikzstyle{vertex} = [shape=circle,draw=black]
\tikzstyle{edge} = [draw,->,thick]
\tikzstyle{blueEdge} = [edge,color=blue]
\tikzstyle{redEdge} = [edge,color=red]
\definecolor{purplekeywords}{rgb}{0.5,0,0.33}
\definecolor{greencomments}{rgb}{0,0.5,0}
\definecolor{bluestrings}{rgb}{0,0,1}
\definecolor{types}{rgb}{0.17,0.57,0.68}
\crefname{cons}{constraint}{constraints}
\Crefname{cons}{Constraint}{Constraints}
\theoremstyle{definition}
\newtheorem{defn}{Definition}[section]
\newtheorem{example}[defn]{Example}
\theoremstyle{plain}
\newtheorem{lemma}[defn]{Lemma}
\newtheorem{cor}[defn]{Corollary}
\newtheorem{theorem}[defn]{Theorem}
\newtheorem{claim}{Claim}
\newenvironment{proofClaim}[1][]{\ifthenelse{\equal{#1}{}}{\begin{proof}}{\begin{proof}[#1]}}{\end{proof}}
\theoremstyle{remark}
\newtheorem{remark}[defn]{Remark}
\newtheorem{obs}[defn]{Observation}
\newcommand{\IN}{\mathbb{N}}	
\newcommand{\IR}{\mathbb{R}}
\newcommand{\R}{\mathbb{R}}	
\newcommand{\BigO}{\mathcal{O}}
\newcommand{\abs}[1]{\left|#1\right|}
\newcommand{\noqed}{\let\qed\relax}
\definecolor{darkgreen}{rgb}{0.2,0.8,0.55} 
\newcommand{\keywords}[1]{\relax}
\newcommand{\subclass}[1]{\relax}
\title{Dynamic Flows with Adaptive Route Choice}
\author{Lukas Graf}
\author{Tobias Harks\thanks{The research of the authors was funded by the Deutsche Forschungsgemeinschaft (DFG, German Research Foundation) - HA 8041/1-1 and HA 8041/4-1.}}
\affil{\small Augsburg University, Institute of Mathematics, 86135 Augsburg\\
	\href{mailto:lukas.graf@math.uni-augsburg.de}{\{\texttt{lukas.graf,tobias.harks\}@math.uni-augsburg.de}}}
\author{Leon Sering}
\affil{\small Technische Universität Berlin, Institute of Mathematics, 10587 Berlin\\ \href{mailto:sering@math.tu-berlin.de}{sering@math.tu-berlin.de}}
	\newcommand{\revised}[1]{#1}
	\newcommand{\revised}[1]{{\color{revColor}#1}}
\newcommand*\diff{\,\mathop{}\!\mathrm{d}} 
\DeclarePairedDelimiterX{\scalar}[2]{\langle}{\rangle}{#1, #2} 
\newcommand{\VI}{\textrm{VI}} 
\newcommand{\A}{\mathcal{A}} 
\renewcommand{\l}{\ell}
\DeclareMathOperator{\supp}{supp}
\newcommand{\tauMinDiff}{\tau_\Delta}
\newcommand{\beforeN}{<}
\newcommand{\beforeNq}{\leq}
\newcommand{\IRnn}{\revised{\IR_{>0}}}
\newcommand{\edgeLoad}[1][]{%
	\ifthenelse{\equal{#1}{}}
	{F^{\Delta}}
	{F^{\Delta}_{#1}}
}
\definecolor{revColor}{rgb}{0.75,0,0}
\begin{document}
\maketitle
\begin{abstract}
We study dynamic network flows and introduce a notion of \emph{instantaneous dynamic
equilibrium (IDE)} requiring that for any positive inflow 
into an edge, this edge must lie on a currently shortest path
towards the respective sink. We measure current shortest path length
by current waiting times in queues plus physical travel times.
As our main results, we show:
\begin{enumerate}
 \item existence and constructive computation of IDE flows for multi-source single-sink networks assuming constant network inflow rates,
\item finite termination of IDE flows  for multi-source single-sink networks assuming bounded and finitely lasting inflow rates,
\item the existence of IDE flows for multi-source multi-sink instances assuming general measurable network inflow rates,
\item the existence of a complex single-source multi-sink instance in which any IDE flow is caught in cycles and flow remains forever in the network.
\end{enumerate}
\keywords{Dynamic Flows \and Flows over Time \and Equilibria \and Networks}
\subclass{05C21 \and 90C35 \and 65K10}
\end{abstract}

\section{Introduction}

Dynamic network flows have been studied for decades
in the optimization and transportation literature, see the classical book of Ford and Fulkerson~\cite{Ford62} or the more recent surveys of Skutella~\cite{Skutella08}
and Peeta~\cite{Peeta01}.
A fundamental model describing the dynamic flow propagation process
is the so-called \emph{deterministic queue model}, see Vickrey~\cite{Vickrey69}.
Here, a directed graph $G=(V,E)$ is given, where edges
$e\in E$ are associated with a queue with positive rate capacity $\nu_e\in\IRnn$ and a physical transit time $\tau_e\in \IRnn$.
If the total inflow into an edge  $e=vw\in E$ exceeds the rate capacity
$\nu_e$, a queue builds up and agents  need to 
wait in the queue before they are forwarded along the edge. The total travel time
along $e$  is thus composed of the waiting time spent in the queue plus the physical transit time $\tau_e$.
A schematic illustration of the inflow and outflow mechanics of an edge $e$ is given in \Cref{fig:queue}.

\begin{figure}[h!]
	\begin{center}
		\begin{adjustbox}{max width=0.7\textwidth}
			\begin{tikzpicture}[scale=1.2]

		\begin{scope}
		
		\node (s1) at (0, 0) {};
		\draw (s1) +(0, 0) node[labeledNode] (s1) {$v$};
		\node[rectangle, color=gray!50,draw, minimum width=3cm, minimum height=0.1cm,fill={gray!50}] at (1.59,0.2) {};
		\node[rectangle, color=gray!50,draw, minimum width=0.5cm, minimum height=1cm,fill={gray!50}] at (0.55,0.55) {};

		\draw (s1) +(4, 0) node[labeledNode] (s3) {$w$}
		edge[normalEdgeF, <-] node[below] {$\tau_e$}
		node[above] {} (s1)

		;
		;
		\node at (-0.75,0) [] (inflow) { inflow};
		\node at (1.25,1.25) [] (queue) { queue $q_e(\theta)$};
		\node at (0.5,1.0) [] (qu) { };
		\draw (s1) edge[thick,>=stealth,dashed,bend left=90,->] node [left] {$f_e^+(\theta)$} (qu);
		\node at (6,1) [] (outflow) { outflow};
		\node at (4,0.2) [] (out) { };
		\draw (out) edge[thick,>=stealth,dashed,bend left=60,->] node [left] {$f_e^-(\theta)$} (outflow);
		\draw [<->, line width=0.5pt] (2.95,0.1) -- (2.95,0.3);
		\node at (3.2,0.2) [] (nu) { $\nu_e$};
			\end{scope}
\end{tikzpicture}			
		\end{adjustbox}
	\end{center}
	\vspace{-0.5cm}
	\caption[format=hang]{An edge $e=vw$ with a nonempty queue at time $\theta$.}
	\label{fig:queue}
	
\end{figure}
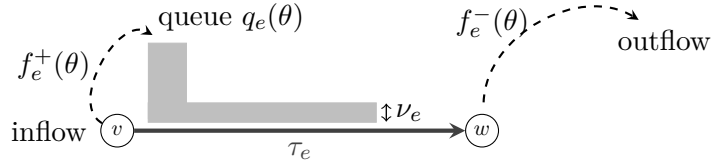

The fluid queue model has been mostly studied 
from a game-theoretic perspective, where
it is assumed that agents act selfishly and travel along shortest routes under prevailing  conditions.
This behavioral model is known as \emph{dynamic equilibrium} and has been analyzed in the transportation science literature for decades, see Friesz et al.~\cite{Friesz93}, Meunier and Wagner~\cite{MeunierW10} and Zhu and Marcotte~\cite{ZhuM00}.  In the past years, however, several new exciting developments have emerged: Koch and Skutella~\cite{Koch11}  elegantly characterized
dynamic equilibria by their derivatives, which gives a template
for their computation. Subsequently, Cominetti, Correa and Larr\'{e}~\cite{CominettiCL15}
derived alternative characterizations and
proved existence and uniqueness in terms of experienced travel times of equilibria  even for multi-commodity networks. Very recently, Cominetti, Correa and Olver~\cite{CominettiCO17} shed light on the  behavior of steady state queues
assuming single commodity networks and constant inflow rates. Sering and Vargas-Koch~\cite{Sering2019}
analyzed the impact of spillbacks in the fluid queuing model
and Bhaskar et al.~\cite{BhaskarFA15} devised Stackelberg strategies
in order to improve the efficiency of dynamic equilibria.

The concept `dynamic equilibrium'
assumes
\emph{complete knowledge and simultaneous route choice} by all
travelers.  Complete knowledge requires that
a traveler is able to exactly forecast future travel times
along the chosen path effectively anticipating the
whole evolution of the flow propagation process across the network.
This assumption has been justified by letting travelers
learn good routes over several trips and
a dynamic equilibrium then corresponds to an attractor of the underlying learning dynamic.
While certainly relevant, this concept may not  accurately reflect the behavioral
changes caused by the
wide-spread use of navigation devices.  As also discussed in Marcotte et al.~\cite{Marcotte04}, Hamdouch et al.~\cite{Hamdouch2004} and Unnikrishnan and Waller~\cite{UnnikrishnanW09}, drivers may not always learn
good routes over several trips but are now informed in
real-time about the current traffic situations and, if beneficial,
reroute instantaneously no matter how good or bad that route
was in hindsight. Also,  the information available to a navigation device 
is usually not complete, that is, congestion information is available
only as an aggregate (estimated waiting times for road traversal) but the individual
routes and/or source and destinations of travelers
are  unknown -- for good reason.\footnote{Some navigation systems have a large
	population of customers from which they infer even personalized
	information, but complete knowledge over all travelers seems unrealistic.}

In this paper, we consider
an adaptive route choice model, where at every node (intersection), travelers may
alter their route  depending on the current network conditions,
that is, based on current travel times and queuing delays.
The needed information is anonymous and indeed available by navigation devices.
We assume that, if a traveler arrives at the end of an edge, she may change
the current route and opt for a currently shorter one. This type of reasoning does
neither rely on personalized information nor on the capability of unraveling the future flow propagation process.
We term a dynamic flow an \emph{instantaneous dynamic equilibrium (IDE)},
if for every point in time and every edge with positive inflow (of some commodity), this
edge lies on a currently shortest path towards the respective sink.
In the following, we illustrate IDE in comparison to classical dynamic equilibrium
with an example.

\subsection{An Example}\label{subsec:Example}

Consider the network in \Cref{fig:ide} (left). There are two source nodes $s_1$ and $s_2$ with constant inflow rates $u_1(\theta) \equiv 3$ for times $\theta \in [0,1)$ and $u_2(\theta)\equiv 4$ for $\theta\in [1,2)$. Commodity $1$ (red) has two simple paths connecting $s_1$ with the sink $t$. Since both have equal length ($\sum_e \tau_e =3$), in an IDE both can be used by commodity $1$. In \Cref{fig:ide},  the flow takes the direct edge to $t$ with a rate of one, while edge $s_1v$ is used at a rate of two. This is actually the only split possible in an IDE, since any other split (different in more than just a subset of measure zero of $[0,1)$) would result in a queue forming on one of the two edges, which would make the respective path longer than the other one.
At time $\theta=1$, the inflow at $s_1$ stops and a new inflow of commodity 2 (blue) at $s_2$ starts. This new flow again has two possible paths to $t$, however, here the direct path ($\sum_e \tau_e =1$) is shorter than the alternative ($\sum_e \tau_e =4$). So all flow enters edge $s_2t$ and starts to form a queue. 
At time $\theta= 2$, the first flow particles of commodity $1$ arrive at $s_2$ with a rate of $2$. Since the flow of commodity $2$ has built up a queue of length $3$ on edge $s_2t$ by this time, the estimated travel times $\sum_e (\tau_e + q_e(\theta))$ are the same on both simple $s_2$-$t$ paths. Thus, the red flow is split evenly between both possible paths. This results in the queue-length on edge $s_2t$ remaining constant and therefore this split gives us an IDE flow for the interval~$[2,3)$.
At time $\theta=3$, red particles will arrive at $s_1$ again, thus, completing a full cycle (namely $s_1,v,s_2,s_1$). This example shows that IDE flows may involve a flow decomposition along cycles.  In contrast, the (classical) dynamic equilibrium flow will just send more of the red flow along the direct path $s_1,t$ since the future queue growth at edge $s_2t$ of the alternative path is already anticipated.

\begin{figure}[t!]
	\begin{center}
		\begin{adjustbox}{max width=1\textwidth}
			\begin{tikzpicture}[scale=1.2]
			
	\newcommand{\exampleGraph}[2]{		
		\node[labeledNodeS] (s1) at (0, 4) {$s_1$};
		\draw (s1) ++(4, 0) node[labeledNodeS] (v) {$v$};
		\draw (s1) ++(0, -4) node[labeledNodeS] (t) {$t$};
		\draw (v) ++(0, -4) node[labeledNodeS] (s2) {$s_2$};
		
		\draw[normalEdge,->] (s1) -- node[below] {\ifthenelse{\equal{#2}{labels}}{\Large$(1,2)$}{}} (v);
		\draw[normalEdge,->] (s1) -- node[right] {\ifthenelse{\equal{#2}{labels}}{\Large$(3,1)$}{}} (t);
		\draw[normalEdge,->] (v)  -- node[left] {\ifthenelse{\equal{#2}{labels}}{\Large$(1,2)$}{}} (s2);
		\draw[normalEdge,->] (s2) -- node[below] {\ifthenelse{\equal{#2}{labels}}{\Large$(\tau_{s_2t},\nu_{s_2t})=(1,1)$}{}} (t);
		\draw[normalEdge,->] (s2) -- node[below=.5cm] {\ifthenelse{\equal{#2}{labels}}{\Large$(1,1)$}{}} (s1);
		
		\node at (2,5) [rectangle,draw] (theta0) {\Large$\theta=#1$:};
	}
	
	\newcommand{\colComRed}{red!30}
	\newcommand{\colComBlue}{blue!60}
	
	\begin{scope}
	
		\exampleGraph{0}{labels}
		
		\draw (s1) +(0,1) node[rectangle, draw, minimum width=1.5cm, minimum height=1cm,fill=\colComRed] (s1u) {\Large $u_1=3$}; 
		\draw [->, line width=2pt] (s1u) -- (s1);
		
		\end{scope}

		\begin{scope}[xshift=5.2cm]
		
		\node[rectangle, color=\colComRed,draw, minimum width=1.33cm,minimum height=.2cm, fill=\colComRed,rotate around={90:(0,0)}] at (0,3.1) {};
		\node[rectangle, color=\colComRed,draw, minimum width=4cm,minimum height=.4cm, fill=\colComRed] at (2,4) {};
		
		\exampleGraph{1}{}	
		
		\draw (s2) ++(0,-1) node[rectangle, draw, minimum width=1.5cm, minimum height=1cm,fill=\colComBlue] (s2u) {\Large $u_2=4$}; 
		\draw [->, line width=2pt] (s2u) -- (s2);
	
	\end{scope}

	\begin{scope}[xshift=10.4cm]
	
		\node[rectangle, color=\colComRed,  draw, minimum width=1.33cm,minimum height=.2cm, fill=\colComRed,rotate around={90:(0,0)}] at (0,2) {};
		\node[rectangle, color=\colComRed,  draw, minimum width=4cm,minimum height=.4cm, fill=\colComRed,rotate around={90:(0,0)}] at (4,2) {};
		\node[rectangle, color=\colComBlue, draw, minimum width=4cm,minimum height=.2cm, fill=\colComBlue] at (2,0) {};
		\node[rectangle, color=\colComBlue, draw, minimum width=.5cm,minimum height=1.5cm, fill=\colComBlue] at (3.5,-.6) {};
		\node at (2,-.8) [] () {\Large $q_{s_2t}(2)=3$};
		
		\exampleGraph{2}{}
	
	\end{scope}

	\begin{scope}[xshift=15.6cm]
	
		\node[rectangle, color=\colComRed,draw, minimum width=1.33cm,minimum height=.2cm, fill=\colComRed,rotate around={90:(0,0)}] at (0,.9) {};
		\node[rectangle, color=\colComRed,draw, minimum width=5.9cm,minimum height=.2cm, fill=\colComRed,rotate around={-45:(0,0)}] at (2,2) {};
		\node[rectangle, color=\colComBlue,draw, minimum width=4cm,minimum height=.2cm, fill=\colComBlue] at (2,0) {};
		\node[rectangle, color=\colComBlue,draw, minimum width=.5cm,minimum height=1cm, fill=\colComBlue] at (3.5,-.4) {};
		\node[rectangle, color=\colComRed,draw, minimum width=.5cm,minimum height=.5cm, fill=\colComRed] at (3.5,-1) {};
		\node at (2,-.8) [] () {\Large $q_{s_2t}(3)=3$};
		
		\exampleGraph{3}{}
	
	\end{scope}		
\end{tikzpicture}
		\end{adjustbox}
	\end{center}
	\vspace{-0.5cm}
	\caption[format=hang]{The evolution of an IDE flow over the time horizon $[0,3]$.}
	\label{fig:ide}
\end{figure}

Note, that cycles can appear even in the case of only a single commodity (and therefore a single source and sink) and a constant inflow rate over a single interval - see \Cref{ex:IDEComputation} for such an instance. This shows that the differences between the two equilibrium concepts are quite fundamental and occur even in very simple examples.

\subsection{Related Work}

In the transportation science literature,  the idea of an instantaneous user or dynamic equilibrium has already been proposed since the late 80's, see 
Ran and Boyce~\cite[\S~VII-IX]{Ran96}, Boyce, Ran and LeBlanc~\cite{BoyceRL95,RanBL93}, Friesz et al.~\cite{FrieszLTW89}. These works develop an optimal control-theoretic formulation and
characterize instantaneous user equilibria by Pontryagin's optimality conditions.
However, not much is known regarding IDE existence and their structural properties.
In fact, the underlying equilibrium concept of Boyce, Ran and LeBlanc~\cite{BoyceRL95,RanBL93} and Friesz et al.~\cite{FrieszLTW89}
is different from ours. While the verbally written concept
of an IDE is similar to the one
we use here, the mathematical definition of an IDE in~\cite{BoyceRL95,FrieszLTW89,RanBL93} requires
that instantaneous travel times are minimal only for \emph{used paths} towards
the sink. A path is used, if every arc of the path has positive flow.
As, for instance, the authors in Boyce, Ran and LeBlanc~\cite[p.130]{BoyceRL95}
admit: ``Specifically with our definition of a used route, it is
possible that no route is ever `used' because vehicles stop entering the route before vehicles arrive at
the last link on the route. Thus, for some networks
every flow can be in equilibrium.'' Ran and Boyce~\cite[\S~VII, pp.148 ]{Ran96}
present a link-based definition of IDE. They define node labels at nodes $v\in V$
indicating the current shortest travel time from the source node
to some intermediate node $v$ and  require that whenever edge $vw$ has positive
flow, edge $vw$ must be contained in a shortest $s$-$w$ path, where $s$ is the flow's source. This is different
from our definition of an IDE, because we require that whenever there is positive
inflow into an edge $vw$, it must be contained in a currently shortest $v$-$t$ path,
where $t$ is the sink of the considered inflow.

Another important difference to our model is  
the assumed time horizon. The previous works~\cite{BoyceRL95,FrieszLTW89,Ran96,RanBL93} all assume a \emph{finite
	time horizon} on which the control problems are defined,
thus, only describing the flow trajectories over the given time
horizon. All numerical studies and simulation results
appearing in these works further implicitly assume that
for given finitely lasting bounded inflow rates, there exists a finite time horizon $[0,T]$
with $T$ large enough so that eventually all travelers reach their destination.
Our results reveal that this is in fact not true:
there are multi-commodity instances with finitely lasting bounded inflows that admit
IDE flows cycling forever. For the discrete version of this model using the natural discrete version of our equilibrium concept, such a behavior was already discovered in Ismaili \cite[Theorem 8]{ismaili2017}, though his instance makes critical use of edges with zero transit time (which we do not allow) and the instantaneous travel time always increases when there is positive inflow into an edge, even when the edge inflow rate is smaller than the rate capacity. In particular, players may observe increased instantaneous travel time, although no player is in fact delayed. This is not possible in our model. Based on this flexibility, the generated instance is considerably simpler than ours but one can show that IDE flows in our sense do terminate in that instance.

\subsection{Our Results}

In this paper we introduce the concept of an \emph{instantaneous dynamic equilibrium} (\emph{IDE} for short).
We call a feasible flow over time an IDE, if at any point in time, every edge with positive inflow (of some commodity) lies on a currently shortest path towards the respective sink.

Our first main result (Theorem~\ref{thm:existence}) shows that  IDE 
exist for  multi-source single-sink networks with piecewise constant inflow rates (generating
the volume of agents originating at the sources). The existence proof relies
on a constructive method extending any IDE flow up to time $\theta$
to an IDE flow on a strictly larger interval $\theta+\epsilon$ for some $\epsilon>0$.
The key insight for the extension procedure relies on solving a sequence of nonlinear programs,
each associated with finding the right outflow split for given node inflows. We also show that such solutions can be found by a simple water filling procedure.
With the  extension property we can apply a limit argument on the real numbers
implying the existence of IDE on the whole $\R_{\geq 0}$.

Given that, unlike the classical dynamic equilibrium,
IDE flows may involve cycling behavior (see the example in \Cref{fig:ide}), we turn
to the  issue of whether it is possible that positive flow volume remains forever in the network
(assuming finitely lasting bounded inflows). 
Our second main result (Theorem~\ref{thm:Termination_SingleSink}) shows that for multi-source single-sink networks, this
is impossible: Even for arbitrary bounded and finitely lasting inflow rate functions,
there always exists a finite time $T>0$
at which the network is cleared, that is, all flow particles have reached their destination. 

We then turn to general multi-commodity networks. For given piece-wise constant network inflow rates, we also can
extend an IDE flow to a strictly larger time interval by determining the edge inflow rates and the derivatives of the current shortest
path distances. This extension relies on a solution of a  system of equations, which is guaranteed to exist by Kakutani's fixed point theorem.
The equations are inspired by those defining a ``thin flows with resetting'' for Nash flows over time that were introduced by Koch and Skutella \cite{Koch11} and refined by Cominetti et al.\ \cite{cominetti2011existence,CominettiCL15}.
Again, a limit argument proves the existence of IDE flows over the whole $\R_{\geq 0}$
(\Cref{theorem:existence_multiThinFlow}). Furthermore, we give a brief idea, how these extensions can be
obtained by a mixed integer formulation. We then extend this result to arbitrary locally-integrable network inflow
rate functions by using a theorem about the existence of a solution to a variational inequality in
infinite-dimensional function spaces (\Cref{theorem:existence_arbitrary_inrates}).

 Finally, we show that for bounded 
and finitely lasting inflow rates, termination in finite time is \emph{not} guaranteed anymore as soon as the network contains more than one sink (\Cref{thm:NonTermination}). 
We construct a quite complex instance  where all IDE flows are caught in cycles and travel forever. 
This instance reveals that the assumption of a finite time horizon $[0,T]$
made previously in the transportation literature cannot be made 
without loss of generality. We also show that the instance can be modified in such a way that only a single-source is needed.

\section{The Flow Model}

In the following, we describe a  fluid queuing model as used before by Koch and Skutella~\cite{Koch11} and Cominetti, Correa and Larr\'{e}~\cite{CominettiCL15} and introduce the notation we will use throughout this paper.

We consider a finite directed graph\footnote{Note that all results of this paper also hold for multigraphs.} $G=(V,E)$ with positive rate capacities $\nu_e\in \IRnn$
and positive transit times $\tau_e\in \IRnn$ for every $e \in E$.\footnote{\revised{We exclude edges of zero travel time since -- intuitively -- our existence proofs all require that there is always some non-zero time between two decisions of a particle, see Remark~\ref{rem:zero} for a more detailed discussion.}} There is a finite set of commodities~$I=\{1,\dots,n\}$, each with a commodity-specific source node $s_i \in V$ and a commodity-specific sink node $t_i \in V$.
We will always assume that there is at least one $s_i$-$t_i$ path for each $i \in I$.
The (infinitesimally small) agents of every commodity $i\in I$ enter the network according to a integrable network inflow rate function $u_i:\IR_{\geq 0}\rightarrow\IR_{\geq 0}$. 

A \emph{flow over time} is a tuple $f = (f^+,f^-)$, where
$f^+, f^-: \R_{\geq 0} \times E \times I \to \R_{\geq 0}$ 
are integrable functions modeling the edge inflow rate $f^+_{i,e}(\theta)$ and edge outflow rate $f^-_{i,e}(\theta)$ 
of commodity $i$ of an edge $e\in E$ at time $\theta\geq 0$.

The \emph{queue length} of edge $e$ at time $\theta$ is given by
	\begin{align}	
		q_e(\theta) \coloneqq \sum_{i \in I}F^+_{i,e}(\theta) - \sum_{i \in I}F_{i,e}^-(\theta+\tau_e) & \quad \text{ for all } \theta \in \IR_{\geq 0} \label[cons]{eq:Cont-FlowDefProperties-QueueLengthWithF},
	\end{align}
where
	\[F_{i,e}^+(\theta):=\int_{0}^{\theta} f_{i,e}^+(z)dz \quad \text{ and } \quad F_{i,e}^-(\theta):=\int_{0}^{\theta}  f_{i,e}^-(z)dz\]
denote the \emph{cumulative (edge) inflow} and \emph{cumulative (edge) outflow}. We implicitly assume $f_{i,e}^-(\theta)=0$ for all $\theta \in [0,\tau_e)$, which will ensure together with \Cref{eq:Cont-FlowDefProperties-OpAtCap} (see below) that the queue lengths are always non-negative. Furthermore, we define the cumulative network inflow rate by $U_i(\theta):=\int_{0}^{\theta}u_i(z)dz$ and,
for the sake of simplicity, we denote the aggregated flow over all commodities by $f^+_e \coloneqq \sum_{i \in I}f^+_{i,e}$ and $f^-_e \coloneqq \sum_{i \in I}f^-_{i,e}$, as well as, $F^+_e \coloneqq \sum_{i \in I}F^+_{i,e}$ and $F^-_e \coloneqq \sum_{i \in I}F^-_{i,e}$.

A \emph{feasible} flow over time satisfies the following conditions \eqref{eq:Cont-FlowDefProperties-FlowCons}, \eqref{eq:Cont-FlowDefProperties-FlowConsSink}, \eqref{eq:Cont-FlowDefProperties-OpAtCap}, and \revised{\eqref{eq:FIFO}}.
The \emph{flow conservation constraints} are modeled for a commodity $i \in I$ and all nodes $v \neq t_i$ as
	\begin{align}\label[cons]{eq:Cont-FlowDefProperties-FlowCons} 
		\sum_{e \in \delta^+_v}f^+_{i,e}(\theta) - \sum_{e \in \delta^-_v}f^-_{i,e}(\theta) = \begin{cases}
			u_i(\theta), &\text{ if }v = s_i \\
			0,			 &\text{ if }v \neq s_i
		\end{cases} & \quad \text{ for all } \theta \in \IR_{\geq 0},
	\end{align}
where $\delta^+_v := \set{vu \in E}$ and $\delta^-_v := \set{uv \in E}$ are the sets of outgoing edges from $v$ and incoming edges into $v$, respectively. For the sink node $t_i$ of commodity $i$ we require
	\begin{align}\label[cons]{eq:Cont-FlowDefProperties-FlowConsSink} 
		\sum_{e \in \delta^+_{t_i}}f^+_{i,e}(\theta) - \sum_{e \in \delta^-_{t_i}}f^-_{i,e}(\theta) \leq 0 & \quad \text{ for all } \theta \in \IR_{\geq 0}.
	\end{align}
We assume that the queue operates at capacity which can be modeled by
	\begin{align}	\label[cons]{eq:Cont-FlowDefProperties-OpAtCap} 
		f_e^-(\theta + \tau_e) = \begin{cases}
		\nu_e, & \text{ if } q_e(\theta) > 0 \\
		\min\set {f^+_e(\theta), \nu_e}, & \text{ else }
		\end{cases}  & \quad \text{ for all }  e \in E, \theta \in \IR_{\geq 0}. 
	\end{align}
Since $q'_e(\theta)=\sum_{i \in I}f^+_{i,e}(\theta)-\sum_{i \in I}f^-_{i,e}(\theta+\tau_e)$, this condition is equivalent to the following equation describing  the queue length dynamics (cf. \cite[Section~2.2]{CominettiCL15}):
	\begin{equation}\label{eq:queue-dynamic}
		q'_e(\theta)=\begin{cases} f^+_e(\theta)-\nu_e,& \text{ if } q_e(\theta)>0\\
		\max\{0, f^+_e(\theta)-\nu_e\},& \text{ else } \end{cases}\quad \text{ for all }  e \in E, \theta \in \IR_{\geq 0}.
	\end{equation}

Finally we want the flow to follow a strict FIFO principle on the queues, which can be formalized by the following condition (see \cite{Sering2018}): 
	\begin{equation} \label[cons]{eq:FIFO}
		f^-_{i, e} (\theta) \revised{=} 
			\begin{cases}
				f^-_e (\theta) \cdot \frac{f^+_{i, e}(\vartheta)}{f^+_e(\vartheta)} & \text{ if }  f^+_e(\vartheta) > 0,\\
				 0 & \text{ else,}
			\end{cases}
	\end{equation}
where $\vartheta \coloneqq \min\set{\vartheta\leq \theta | \vartheta+\tau_e+\frac{q_e(\vartheta)}{\nu_e}=\theta}$ is the earliest point in time a particle can enter edge $e$ in order to leave it at time $\theta$. The quotient $\frac{q_e(\theta)}{\nu_e}$ is hereby the current waiting time to be spent in the queue of edge $e$. 
In other words, \Cref{eq:FIFO} ensures that the share of commodity $i$ of the aggregated outflow rate at some point in time $\theta$ equals the share of commodity $i$ of the aggregated inflow rate at the time the particles entered the edge. 

Note however, that all results within this paper also hold if we relax the FIFO condition \eqref{eq:FIFO} to the following condition
\begin{align}	\label[cons]{eq:Cont-FlowDefProperties-PlayerIdentityInQueues}
	F_{i,e}^-(\theta) \leq F_{i,e}^+(\theta-\tau_e)   \text{ for all } i \in I, e \in E, \theta \in \IR_{\geq 0}. 
\end{align}
This condition allows for overtaking within a queue and only prevents flow from changing its commodity on an edge by requiring the total amount of flow of every commodity that has left an edge to not exceed the total amount of flow of this commodity that has reached the head of this edge up to that point in time.

We assume that, whenever an agent arrives at an intermediate node $v$ at time $\theta$, she is given the information about the current queue lengths and transit times $q_e(\theta),\tau_e, e\in E$, and, based on this information, she computes a shortest $v$-$t$ path and enters the first edge on this path (breaking potential ties arbitrarily). We define the \emph{instantaneous travel time} of an edge $e$ at time $\theta$ as
	\begin{align}
		c_e(\theta) = \tau_e + \frac{q_e(\theta)}{\nu_e}.\label{eq:DefEdgeTravelTime}
	\end{align}
We can now define commodity-specific node labels $\ell_{i,v}(\theta)$ corresponding to current shortest path distances from $v$ to the sink $t_i$.
For $i\in I, v\in V$ and $\theta\in \R_{\geq 0}$, define
\begin{equation}\label{eq:current_shortest_path_distances}
\ell_{i,v}(\theta)\coloneqq\begin{cases} 0, & \text{ for } v=t_i\\
\min\limits_{e=vw\in E} \{\ell_{i,w}(\theta)+c_{e}(\theta)\}, & \text{ else.}\end{cases}
\end{equation}
We say that edge $e=vw$
is \emph{active} for $i\in I$ at time $\theta$,
if   $ \ell_{i,v}(\theta) = \ell_{i,w}(\theta)+c_{e}(\theta)$ and we denote the set of active edges for commodity $i$ by $E_\theta^i\subseteq E$. We call a $v$-$t_i$ path $P$ an \emph{active $v$-$t_i$ path for commodity $i$ at time $\theta$}, if all edges of $P$ are active for $i$ at $\theta$ or, equivalently, $\sum_{e \in P}c_e(\theta) = \ell_{i,v}(\theta)$. For differentiation we call paths that are minimal with respect to the transit times $\tau$ \emph{physical shortest paths}.

Now we are ready to formally define an instantaneous dynamic equilibrium for multi-commodity flows over time:
\begin{defn}
	A feasible flow over time $f$ is an \emph{instantaneous dynamic equilibrium (IDE)}, if
	for all  $i\in I, \theta\in \R_{\geq 0}$ and $e\in E$ it satisfies 
	\begin{align}\label[cons]{eq:Cont-FlowDefProperties-OnlyUseSP}
		f_{i,e}^+(\theta)>0 \Rightarrow e\in E_\theta^i.
	\end{align}
\end{defn}
In other words, a feasible flow over time $f$ is an IDE, if, whenever flow of commodity $i$ enters an edge $e=vw$ at some point $\theta$, this edge is contained in the set of active edges $E_\theta^i$, i.e., $e$ lies on a currently shortest path from $v$ to $t_i$.

Note that, while the set of active edges $E_{\theta}^i$ changes over time, the set of nodes, from which $t_i$ is reachable via $E_{\theta}^i$ (i.e. $\set{v \in V | \ell_{i,v}(\theta)<\infty}$) does not. In particular, this means that, whenever flow of commodity $i$ enters an active edge $uv, v \neq t_i$, at least one of the edges in $\delta_v^+$ will be active by the time the flow reaches $v$ -- although possibly different edges than those active when the flow left $u$.

\section{Existence and Computation of IDE Flows in Single-Sink Networks}\label{sec:ExistenceSingleSink}

In this and the following chapter we only consider single-sink networks, i.e., networks where all commodities have one common sink node $t$. In this case all commodities have the same label function, which we will denote by $\ell_v$. Since the origin of a particle in the network is not important for its route to the sink, we do not distinguish the commodities, and instead, only consider the aggregated flow functions $f^+_e$ and $f^-_e$. Furthermore, we restrict the network inflow functions $u_i$ to be right-constant, where a function \mbox{$u: [a,b) \to \IR$} is \emph{right-constant}, if for every $\theta \in [a,b)$ there exists an $\varepsilon > 0$ such that $u$ is constant on~$[\theta,\theta+\varepsilon)$.
For this case, we will now describe an algorithm computing an IDE flow.

Let $f=(f^+,f^-)$ denote a feasible flow over time. We denote by
\begin{equation} 
b^-_v(\theta):= \sum_{e\in\delta^-_v}f^-_e(\theta) + \sum_{i \in I: s_i=v} u_i(\theta)
\end{equation}
the current inflow at node $v$ at time $\theta$. Moreover, let $\delta_v^+(\theta):=\delta^+_v\cap E_\theta$ denote the set of outgoing edges of $v$ that are active at time $\theta$.
The main idea of our algorithm works as follows. Starting from time $\theta=0$ we compute inductively a sequence of intervals $[0,\theta_1),[\theta_1,\theta_2),\dots$ with $0<\theta_i<\theta_{i+1}$ and corresponding \emph{constant} edge inflows  $(f^+_e(\theta))_e$ for $\theta\in [\theta_i,\theta_{i+1})$ that form together with the corresponding edge outflows $(f^-_e(\theta))_e$ an IDE. Suppose we are given an \emph{IDE flow up to time $\theta_k$}, that is,  a tuple $(f^+,f^-)$
of right-constant functions $f_e^+:[0,\theta_k)\rightarrow \R_{\geq 0}$ and $f_e^-:[0,\theta_k+\tau_e)\rightarrow \R_{\geq 0}$ satisfying~\Cref{eq:Cont-FlowDefProperties-FlowCons,eq:Cont-FlowDefProperties-FlowConsSink,eq:Cont-FlowDefProperties-OpAtCap,eq:Cont-FlowDefProperties-OnlyUseSP}. Note that this is enough information to compute $F^+_e(\theta_k)$ and $F^-_e(\theta_k +\tau_e)$, and thus also $q_e(\theta_k), c_e(\theta_k)$ and $\ell_v(\theta_k)$ for all $e \in E$ and $v \in V$. 
We now describe how to extend this feasible flow over time to the interval $[\theta_k,\theta_k+\varepsilon)$ for some $\varepsilon>0$. The idea is that whenever there is positive inflow $b_v^-(\theta_k)>0$ into some node $v\in V$, we assign this inflow to outgoing edges that are currently active. Since the node labels at the heads of these edges depend themselves on queue dynamics at other nodes along a currently shortest path towards $t$, we need to handle \emph{time-varying} labels $\ell_w(\theta)$ when distributing the flow among the edges in $\delta_v^+(\theta)$. In the following, we describe how to define the flow-split in order to maintain the invariant that flow is only assigned to edges that are active for at least some interval even if adjacent labels vary linearly over time.

Assume that $b_v^-(\theta)$ is constant for $\theta\in [\theta_k,\theta_k+\varepsilon)$ for a node $v\in V$ and some $\varepsilon>0$. Moreover, let $\delta_v^+(\theta_k)=\{vw_1,vw_2,\dots,vw_{p_k}\}$ for some $p_k \geq 1$
and $[p_k]:=\{1,\dots,p_k\}$.
Thus, we have
\begin{align}\label{eq:InvariantAtTimeThetaK}
\ell_v(\theta_k)=c_{vw_i}(\theta_k)+\ell_{w_i}(\theta_k) \quad \text{ for all }i\in [p_k].
\end{align}
Suppose that the labels of nodes $w_i$ change linearly
after $\theta_k$, that is, there are constants $a_{w_i}\in \R$ for $i\in [p_k]$ with
\[ \ell_{w_i}(\theta)=\ell_{w_i}(\theta_k)+a_{w_i}(\theta - \theta_k) \quad  \text{ for all }\theta\in [\theta_k,\theta_k+\varepsilon).\]
Our goal is to find constant edge inflow rates $f^+_{vw_i}(\theta)$ during $[\theta_k,\theta_k+\varepsilon)$ satisfying the supply $b_v^-(\theta)$
and,  for some $\varepsilon' > 0$, fulfilling the following invariant for all $i\in [p_k]$ and $\theta\in [\theta_k,\theta_k+\varepsilon')$:
\begin{align}
f^+_{vw_i}(\theta)&>0\quad\Rightarrow\quad \ell_v(\theta)=c_{vw_i}(\theta)+\ell_{w_i}(\theta), \label{eq:invariant1}\\
f^+_{vw_i}(\theta)&=0\quad\Rightarrow\quad \ell_v(\theta)\leq c_{vw_i}(\theta)+\ell_{w_i}(\theta).\label{eq:invariant2} 
\end{align}
Note that a constant inflow rate $f^+_{vw_i}$ implies by \eqref{eq:queue-dynamic} that the queue length $q_{vw_i}$ is piecewise linear. Hence, the instantaneous travel time $c_{vw_i}$ is also piecewise linear on $[\theta_k,\theta_k+\varepsilon')$ for some $\varepsilon' > 0$, with derivative
\[c'_{vw_i}(\theta) = \frac{q' _{vw_i}(\theta)}{\nu_{vw_i}}.\]
	Since all edges $vw_i$ are active at time $\theta_k$, we have $\ell_v(\theta_k) = c_{vw_i}(\theta_k) + \ell_{w_i}(\theta_k)$ and, thus, a flow with constant inflow rates satisfies \eqref{eq:invariant1} and \eqref{eq:invariant2} for all
	$\theta\in [\theta_k,\theta_k+\varepsilon')$, if
	\begin{align}\label[cons]{eq:diff-invariant1}
	f^+_{vw_i}(\theta_k)&>0\quad\Rightarrow\quad \ell'_v(\theta_k)=c'_{vw_i}(\theta_k)+\ell'_{w_i}(\theta_k) \\\label[cons]{eq:diff-invariant2}
	f^+_{vw_i}(\theta_k)&=0\quad\Rightarrow\quad \ell'_v(\theta_k)\leq c'_{vw_i}(\theta_k)+\ell'_{w_i}(\theta_k).
	\end{align}
This condition ensures that whenever an edge $vw_i$ has positive
inflow, the remaining distance towards $t$ grows from $\theta_k$ onwards at the lowest speed.

For a given value $b_v^-(\theta_k)$ and given vector $(a_{w_i})_{i \in [p_k]}$ we consider the following optimization problem in variables $x_{vw_i}$ for $i\in [p_k]$ in order to obtain inflow rates that
satisfy the conditions~\eqref{eq:diff-invariant1} and~\eqref{eq:diff-invariant2}. 
\begin{align}
\min\quad&\sum_{i=1}^{p_k} \int_0^{x_{vw_i}}\frac{g_{vw_i}(z)}{\nu_{vw_i}}+a_{w_i} dz \label{OPT:bvThetaK}\tag{OPT-$b_v^-(\theta_k)$}\\
\text{s.t.}\quad&\label{eq:lagrange}  \sum_{i=1}^{p_k} x_{vw_i}=b_v^-(\theta_k)\\\notag
& x_{vw_i}\geq 0 \text{ for all } i\in [p_k].\notag
\end{align}
Each function $g_{vw_i}$ maps an edge inflow rate $z$ to the change of the queue size if a constant flow with rate $z$ would enter the edge $vw_i$, i.e.,
\begin{equation} g_{vw_i}(z):=\begin{cases}z-\nu_{vw_i}& \text{ if } q_{vw_i}(\theta_k)>0,\\
\max\set{z-\nu_{vw_i},0}& \text{ if } q_{vw_i}(\theta_k)=0.\end{cases}
\end{equation}
Hence, $g_{vw_i}(f_{vw_i}^+(\theta_k))$ is the derivative of $q_{vw_i}$ at $\theta_k$ (cf. \Cref{eq:queue-dynamic}). 
\begin{lemma}\label{lemma:ExistenceOfOptSolutionBv}
	There exists an optimal solution $(x_{vw_i})_{i\in [p_k]}$  to \eqref{OPT:bvThetaK}
	and for every optimal solution
	$f^+_{vw_i}(\theta_k)\coloneqq x_{vw_i}$ satisfies~\eqref{eq:diff-invariant1} and~\eqref{eq:diff-invariant2} for all $i\in [p_k]$.
\end{lemma}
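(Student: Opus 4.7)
The plan is to approach \eqref{OPT:bvThetaK} as a convex program and derive \eqref{eq:diff-invariant1}--\eqref{eq:diff-invariant2} from its Karush--Kuhn--Tucker conditions. First, observe that each $g_{vw_i}:[0,\infty)\to\IR$ is continuous and non-decreasing (it is either the affine function $z\mapsto z-\nu_{vw_i}$ or its positive part), so each integrand $z\mapsto g_{vw_i}(z)/\nu_{vw_i}+a_{w_i}$ is continuous and non-decreasing and consequently each partial objective $h_i(x):=\int_0^x(g_{vw_i}(z)/\nu_{vw_i}+a_{w_i})\,dz$ is continuously differentiable and convex. The total objective is therefore continuous and convex, and since the feasible set is a nonempty compact simplex, Weierstrass delivers an optimum.

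Since the program is convex with linear constraints (Slater's condition is trivial), KKT is necessary and sufficient at any optimum. Introducing a multiplier $\lambda\in\IR$ for \eqref{eq:lagrange} and $\mu_i\geq 0$ for the non-negativity bounds, stationarity $h_i'(x_{vw_i})=\lambda+\mu_i$ together with complementary slackness $\mu_i x_{vw_i}=0$ collapses to the dichotomy
\begin{align*}
x_{vw_i}>0 &\;\Longrightarrow\; g_{vw_i}(x_{vw_i})/\nu_{vw_i}+a_{w_i}=\lambda,\\
x_{vw_i}=0 &\;\Longrightarrow\; g_{vw_i}(0)/\nu_{vw_i}+a_{w_i}\geq\lambda.
\end{align*}
Setting $f^+_{vw_i}(\theta_k):=x_{vw_i}$, the queue dynamics \eqref{eq:queue-dynamic} give $q'_{vw_i}(\theta_k)=g_{vw_i}(x_{vw_i})$, hence $c'_{vw_i}(\theta_k)=g_{vw_i}(x_{vw_i})/\nu_{vw_i}$; combined with $a_{w_i}=\ell'_{w_i}(\theta_k)$ the left-hand sides above are exactly $c'_{vw_i}(\theta_k)+\ell'_{w_i}(\theta_k)$, so the dichotomy becomes \eqref{eq:diff-invariant1}--\eqref{eq:diff-invariant2} as soon as one identifies $\lambda$ with $\ell'_v(\theta_k^+)$.

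To perform that identification I would differentiate the Bellman relation $\ell_v(\theta)=\min_{e=vu\in\delta_v^+}\{c_e(\theta)+\ell_u(\theta)\}$ from the right at $\theta_k$. Under the constant-inflow choice each term $c_{vw_i}(\theta)+\ell_{w_i}(\theta)$ is affine in $\theta$ on a small interval $[\theta_k,\theta_k+\varepsilon')$ with slope $c'_{vw_i}(\theta_k)+a_{w_i}$; by \eqref{eq:InvariantAtTimeThetaK} exactly the edges $vw_i$ with $i\in[p_k]$ realise the minimum at $\theta_k$, while any remaining outgoing edge from $v$ has strictly larger value at $\theta_k$ and therefore does not enter the right-derivative on a sufficiently short interval. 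A standard envelope argument then yields $\ell'_v(\theta_k^+)=\min_{i\in[p_k]}\bigl(c'_{vw_i}(\theta_k)+a_{w_i}\bigr)$; whenever $b_v^-(\theta_k)>0$ at least one $x_{vw_i}$ is positive, so by KKT this minimum equals $\lambda$. The invariants \eqref{eq:diff-invariant1}--\eqref{eq:diff-invariant2} then follow immediately; the degenerate case $b_v^-(\theta_k)=0$ forces $x\equiv 0$, in which case \eqref{eq:diff-invariant1} is vacuous and \eqref{eq:diff-invariant2} reduces to the envelope identity itself.

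The main obstacle is exactly this last step: $\ell'_v$ is not given as data but determined by the very flow we are constructing, so the identification $\lambda=\ell'_v(\theta_k^+)$ must be extracted from the envelope computation above and relies crucially on the linearity hypothesis imposed on the neighbouring labels $\ell_{w_i}$. A secondary technicality is the non-smooth kink of $g_{vw_i}$ at $z=\nu_{vw_i}$ when the queue is initially empty; continuity of $g_{vw_i}$ keeps $h_i$ continuously differentiable so that no subdifferential calculus is needed, but one does have to check that the chosen constant inflow $x_{vw_i}$ indeed produces the claimed piecewise-linear behaviour of $c_{vw_i}$ on $[\theta_k,\theta_k+\varepsilon')$, which is immediate from \eqref{eq:queue-dynamic} and the definition of $c_e$.
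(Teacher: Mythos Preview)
Your approach is correct and essentially identical to the paper's: existence via Weierstra\ss{} on the compact simplex, then first-order (KKT) conditions yielding the dichotomy with the Lagrange multiplier identified as $\ell'_v(\theta_k)$. The paper's proof is terser on the last step---it simply sets $\ell'_v(\theta_k) := -\lambda$ (note the opposite sign convention for the multiplier) without spelling out the envelope argument you provide; your additional care on the degenerate case $b_v^-(\theta_k)=0$ and on the piecewise-linearity of $c_{vw_i}$ is correct but goes beyond what the paper records.
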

\begin{proof}
	The objective function is continuous and
	the feasible region is non-empty and compact.
	Hence, by the theorem of Weierstra\ss{}
	at least one optimal solution exists.
	Moreover, the objective is differentiable over the feasible 
	domain, thus, first order optimality conditions
	hold. Assigning a multiplier $\lambda\in \R$ to \eqref{eq:lagrange}
	and taking partial derivatives of the Lagrangian over the positive orthant,
	we obtain
	\begin{align} x_{vw_i}&>0\quad\Rightarrow\quad \frac{g _{vw_i}(x_{vw_i})}{\nu_{vw_i}}+a_{w_i} +\lambda= 0\\
	x_{vw_i}&=0\quad\Rightarrow\quad \frac{g _{vw_i}(x_{vw_i})}{\nu_{vw_i}}+a_{w_i} +\lambda\geq 0.
	\end{align}
	These conditions imply~\eqref{eq:diff-invariant1} and~\eqref{eq:diff-invariant2} with $\ell'_v(\theta_k) := -\lambda$.
\end{proof}

\begin{lemma}\label{lemma:ExtendingContIUEFlows}
	Let $f=(f^+,f^-)$ be an IDE flow up to time $\theta_k\geq 0$ and suppose there are constant inflow rate functions $b_v^-:[\theta_k,\theta_k+\varepsilon)\rightarrow \R_{\geq 0}$ for some $\varepsilon>0$ and all nodes $v\in V$ (in particular, this means $\varepsilon \leq \min\set{\tau_e | e \in E}$). 	
	Then, there exists some $\varepsilon' > 0$ such that $f$ can be extended to an IDE flow up to time $\theta_k+\varepsilon'$ with all functions $f^+_e$ constant on the interval $[\theta_k,\theta_k+\varepsilon')$ and all functions $f^-_e$ right-constant on the intervals $[\theta_k+\tau_e,\theta_k+\tau_e+\varepsilon')$.
\end{lemma}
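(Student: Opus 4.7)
My plan is to determine the new edge inflow rates node by node, in a topological order induced by the labels $\ell_v(\theta_k)$, applying \Cref{lemma:ExistenceOfOptSolutionBv} at each node and then propagating the choice through the queue dynamics to obtain the outflow rates.

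First, because $\tau_e > 0$ for every edge, every active edge $vw$ at time $\theta_k$ satisfies $\ell_v(\theta_k) = c_e(\theta_k) + \ell_w(\theta_k) > \ell_w(\theta_k)$, so the active subgraph is acyclic. I would enumerate the nodes from which $t$ is reachable via active edges as $t = v_0, v_1, \dots, v_m$ in strictly increasing order of $\ell_{v_j}(\theta_k)$, breaking ties arbitrarily. Nodes from which $t$ is unreachable can be handled trivially: such a node has neither active incoming nor active outgoing edges, and since $f$ is an IDE up to $\theta_k$ and every source reaches its sink, no flow is routed into them, so $b_v^-(\theta_k) = 0$ there.

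Next I process the nodes in this order. For the sink I set $a_t := 0$, consistent with $\ell_t \equiv 0$. For a subsequent $v_j$, every active outgoing edge $v_j w_i$ leads to a node $w_i$ with strictly smaller label and hence already processed, so the derivative $a_{w_i}$ is available. Applying \Cref{lemma:ExistenceOfOptSolutionBv} with these values and the given constant $b_{v_j}^-(\theta_k)$ yields rates $x_{v_j w_i}$ satisfying \eqref{eq:diff-invariant1} and \eqref{eq:diff-invariant2}; I set $f^+_{v_j w_i}(\theta) := x_{v_j w_i}$ on $[\theta_k,\theta_k+\varepsilon')$, $f^+_e \equiv 0$ on the non-active outgoing edges of $v_j$, and record $a_{v_j} := -\lambda$ for the associated Lagrange multiplier. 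Once $f^+_e$ is fixed as a constant on $[\theta_k,\theta_k+\varepsilon')$, \eqref{eq:queue-dynamic} makes each $q_e$ piecewise linear, \eqref{eq:Cont-FlowDefProperties-OpAtCap} makes $f^-_e$ right-constant on $[\theta_k+\tau_e, \theta_k+\tau_e+\varepsilon')$, and \eqref{eq:FIFO} fixes the commodity split of $f^-_e$.

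The main obstacle will be choosing a single positive $\varepsilon'$ on which the resulting extension is an IDE. With all $f^+_e$ constant, each $c_e$ and each $\ell_v$ is piecewise linear, so the IDE condition \eqref{eq:Cont-FlowDefProperties-OnlyUseSP} holds at $\theta_k$ by construction and is implied on a short right-neighborhood by the invariants \eqref{eq:invariant1}--\eqref{eq:invariant2}; these follow on active edges from their derivative versions \eqref{eq:diff-invariant1}--\eqref{eq:diff-invariant2} together with the equality $\ell_v(\theta_k) = c_e(\theta_k) + \ell_w(\theta_k)$, while on edges that were strictly non-active at $\theta_k$ they persist by continuity of the piecewise-linear gap $c_e(\theta) + \ell_w(\theta) - \ell_v(\theta)$. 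A workable $\varepsilon'$ is therefore the minimum of $\varepsilon$, the first time any edge with $q_e(\theta_k) > 0$ and $f^+_e(\theta_k) < \nu_e$ would drain, and the first time any currently non-active edge would turn active — finitely many strictly positive thresholds — which yields the desired extension.
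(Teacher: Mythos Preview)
Your proposal is correct and follows essentially the same approach as the paper: process the nodes in increasing order of $\ell_v(\theta_k)$, invoke \Cref{lemma:ExistenceOfOptSolutionBv} at each to fix constant inflow rates on the active outgoing edges (zero on the rest), derive the outflow rates from the queue dynamics, and then restrict to an $\varepsilon'>0$ small enough that no non-active edge turns active and no draining queue hits zero. The paper carries out the same induction, shrinking $\varepsilon'$ incrementally while processing the nodes rather than computing it once at the end; your explicit inclusion of the queue-depletion threshold is a point the paper leaves largely implicit.
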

\begin{proof}
	First, it is possible to determine the queue lengths at time $\theta_k$ using \Cref{eq:Cont-FlowDefProperties-QueueLengthWithF} and from those the labels $\ell_v(\theta_k)$ can be obtained. 
  Applying \Cref{lemma:ExistenceOfOptSolutionBv} on the nodes in order of increasing $\ell_v(\theta_k)$ values, we obtain the outflow rates and, therefore, the slope $a_v$ of label~$\ell_v$ for some interval right after $\theta_k$.
	More precisely, we start with $t$ (since $\ell_t(\theta_k)=0$) for which we can define $f^+_e(\theta)=f^-_e(\theta+\tau_e)=0$ for all outgoing edges $e \in \delta^+_t$ and all times $\theta \in [\theta_k,\theta_k+\varepsilon')$, where $\varepsilon' \coloneqq \varepsilon$. Now we take some node $v$ such that there exists an $\varepsilon' > 0$ and for all nodes $w$ with strictly smaller label at time $\theta_k$ and all edges $e \in \delta^+_w$, we have already defined $f^+_e$ on some interval $[\theta_k,\theta_k+\varepsilon')$ and $f^-_e$ on some interval $[\theta_k+\tau_e,\theta_k+\tau_e+\varepsilon')$ in such a way that on the interval $[\theta_k,\theta_k+\varepsilon')$
	\begin{enumerate}
		\item\label{enum:1} the labels $\ell_w(\theta)$ change linearly with slope $a_w$, 
		\item no additional edges are added to the sets $\delta^+_w(\theta)$ of active edges leaving $w$,
		\item the functions $f^+_e$ and $f^-_e$ for $e\in \delta^+_w$ are constant and right-constant, respectively, and
		\item\label{enum:4} the functions $f^+_e$ and $f^-_e$ for $e\in \delta^+_w$ satisfy \Cref{eq:Cont-FlowDefProperties-FlowCons,eq:Cont-FlowDefProperties-OpAtCap,eq:Cont-FlowDefProperties-OnlyUseSP}.
	\end{enumerate}
	Let $\delta_v^+(\theta_k):=\{vw_1,vw_2,\dots,vw_{p_k}\}$ be the set of active edges at $v$ at time $\theta_k$. Then, at time $\theta_k$, each $w_i$ must have a strictly smaller label than $v$. Hence,
	they satisfy Properties~\ref{enum:1}.-\ref{enum:4}.
	We can now apply \Cref{lemma:ExistenceOfOptSolutionBv} to determine the flows $f^+_{vw_i}(\theta_k)$. Additionally, we set $f^+_e(\theta_k)=0$ for all non-active edges leaving $v$.	
	Assuming that this flow remains constant on the whole interval $[\theta_k,\theta_k+\varepsilon')$, we can determine the first time $\hat{\theta} \geq \theta_k$, where an additional edge $vw \in \delta_v^+$ or $wv \in \delta_w^+$ with $w \in V$ a node with already defined label $\ell_w$ for the coming time interval becomes newly active. This can only happen after some positive amount of time has passed, i.e., for some $\hat{\theta} > \theta_k$, because
	\begin{itemize}
		\item at time $\theta_k$ the edge was non-active, and therefore $\ell_v(\theta_k) > c_{vw}(\theta_k) + \ell_w(\theta_k)$ or $\ell_w(\theta_k) > c_{wv}(\theta_k) + \ell_v(\theta_k)$, respectively,
		\item all labels change linearly (and thus continuously) and
		\item $c_{vw}$ or $c_{wv}$ is changing piecewise linearly, since the length of its queue does so as well (as both $f^+_{vw}$ and $f^-_{wv}$ are piecewise constant).
	\end{itemize}
	If the difference $\hat{\theta}-\theta_k$ is smaller than the current $\varepsilon'$, we take it as our new $\varepsilon'$, otherwise we keep it as it is. In both cases, we extend $f^+_e$ onto the interval $[\theta_k,\theta_k+\varepsilon')$ for all $e \in \delta_v^+$ by setting $f^+_e(\theta) = f^+_e(\theta_k)$ for all $\theta \in [\theta_k,\theta_k+\varepsilon')$. 	
	This guarantees that the label of $v$ changes linearly on this interval, no additional edges become active and the functions $f^+_e$ are constant. Also, $f^+_e$ satisfies \Cref{eq:Cont-FlowDefProperties-FlowCons,eq:Cont-FlowDefProperties-OnlyUseSP} by definition. Finally, we define $f^-_e$ as follows:
	\[f^-_e(\theta+\tau_e) := \begin{cases}
	\nu_e,			& \text{if } q_e(\theta_k) + (\theta-\theta_k)(f^+_e(\theta_k)-\nu_e) > 0, \\
	f^+_e(\theta), 	& \text{else.}
	\end{cases}\] 
	Then, $f^-_e$ is right-constant and together with $f^+_e$ satisfies \Cref{eq:Cont-FlowDefProperties-OpAtCap}.	
	In summary, using this procedure we can extend $f$ node by node to an IDE flow up to $\theta_k+\varepsilon'$ for some $\varepsilon' > 0$.
\end{proof}
\begin{remark}\label{rem:zero}
The above lemma is the key for the following existence result
of IDE flows. For the extension  property, we used that travel times 
are strictly positive, because otherwise, for a given IDE flow
up to time $\theta_k$, the edge outflow rates are not well-defined
for any proper interval after $\theta_k$ assuming zero travel times.
Existence of IDE flows is not ruled out for $\tau_e \in \IR_{\geq 0}$, but this would require the use of a more intricate way of describing the extensions or
a completely different approach.
\end{remark}

\begin{theorem}\label{thm:existence}
	For any multi-source single-sink network with right-constant network inflow rate functions, there exists an IDE flow $f$ with right-constant functions $f^+_e$ and $f^-_e, e\in E$.
\end{theorem}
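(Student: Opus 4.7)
The plan is to construct the IDE flow by iterated application of \Cref{lemma:ExtendingContIUEFlows}, starting from the trivial IDE flow up to time $\theta_0 = 0$ (where $f^+_e$ is defined on the empty interval $[0,0)$ and $f^-_e \equiv 0$ on $[0, \tau_e)$, which vacuously satisfies all the required conditions). The crucial observation enabling iteration is that the lemma produces right-constant $f^+_e$ and $f^-_e$, which together with the assumed right-constancy of the $u_i$ guarantees that the node inflow rates $b_v^-(\theta) = \sum_{e \in \delta_v^-} f^-_e(\theta) + \sum_{i : s_i = v} u_i(\theta)$ are themselves right-constant. Thus at the right endpoint $\theta_k$ of any previously constructed IDE flow, there is an $\varepsilon > 0$ such that $b_v^-$ is constant on $[\theta_k, \theta_k + \varepsilon)$ for every $v \in V$, which is exactly the hypothesis needed to invoke the extension lemma once more.

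To reach all of $\R_{\geq 0}$ I would use a supremum argument. Let $\mathcal{T}$ denote the set of $T \geq 0$ for which there exists an IDE flow up to time $T$ with right-constant $f^+_e$ and $f^-_e$, and set $T^* := \sup \mathcal{T}$. The extension lemma ensures $\mathcal{T} \neq \emptyset$ and is downward-closed, so it is an interval of the form $[0, T^*)$ or $[0, T^*]$. I claim $T^* = \infty$. Suppose for contradiction $T^* < \infty$. Take an increasing sequence $(\theta_k)_{k \in \IN}$ with $\theta_k \in \mathcal{T}$ and $\theta_k \uparrow T^*$; by concatenating the corresponding right-constant pieces one obtains well-defined right-constant functions $f^+_e$ on $[0, T^*)$ and $f^-_e$ on $[0, T^* + \tau_e)$. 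On each sub-interval these satisfy \Cref{eq:Cont-FlowDefProperties-FlowCons,eq:Cont-FlowDefProperties-FlowConsSink,eq:Cont-FlowDefProperties-OpAtCap,eq:Cont-FlowDefProperties-OnlyUseSP} by construction, hence they do so on $[0, T^*)$ as a whole, yielding an IDE flow up to time $T^*$.

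Finally, I would apply the extension lemma at the point $T^*$ itself: since $u_i$ is right-constant at $T^*$ and the outflow functions $f^-_e$ have just been built as right-constant functions on $[0, T^* + \tau_e)$, there exists $\varepsilon > 0$ with $b_v^-$ constant on $[T^*, T^* + \varepsilon)$ for every node $v$. The lemma then produces $\varepsilon' > 0$ and an IDE extension up to $T^* + \varepsilon' > T^*$, contradicting the maximality of $T^*$. Therefore $T^* = \infty$ and the constructed flow is an IDE on all of $\R_{\geq 0}$.

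The hard part is the limit step at $T^*$: one must verify that the concatenation of countably many (potentially Zeno-like) extensions yields a genuine IDE flow on $[0, T^*)$ whose outflow functions remain right-constant on $[0, T^* + \tau_e)$. This requires checking that the piecewise definitions are consistent across the countably many breakpoints and that the edge dynamics \Cref{eq:Cont-FlowDefProperties-OpAtCap} -- in particular the correct relation between $f_e^+(\theta)$, $q_e(\theta)$ and $f_e^-(\theta + \tau_e)$ -- together with the activeness condition \Cref{eq:Cont-FlowDefProperties-OnlyUseSP} carry over to the whole interval. Once this technical step is handled, the supremum argument closes the proof.
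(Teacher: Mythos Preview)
Your overall strategy matches the paper's: a supremum argument combined with the extension lemma. However, there is a genuine gap in your limit step. You define $\mathcal{T}$ as the set of times $T$ for which \emph{some} IDE flow up to $T$ exists, pick $\theta_k \uparrow T^*$, and then ``concatenate the corresponding right-constant pieces''. But IDE flows are not unique (the paper exhibits this explicitly right after the theorem), so the witnesses $f_k$ for different $\theta_k$ need not agree on their common domain. There is nothing to concatenate: $f_3$ restricted to $[0,\theta_2)$ may be a completely different IDE flow than $f_2$. Without consistency you cannot produce a single flow on $[0,T^*)$, and the final application of \Cref{lemma:ExtendingContIUEFlows} at $T^*$ has no flow to extend.

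The paper closes exactly this gap by a nested construction. After taking the global supremum $\hat\theta_0$ and fixing a flow $f_1$ up to $\theta_1 = \hat\theta_0/2$, it restricts attention to the set $\mathfrak{F}_1$ of IDE flows that \emph{extend} $f_1$, takes the new supremum $\hat\theta_1$ over this restricted set (which is $>\theta_1$ by the extension lemma), fixes $f_2 \in \mathfrak{F}_1$ reaching halfway to $\hat\theta_1$, and iterates. This forces $f_{k+1}|_{[0,\theta_k)} = f_k$ by construction, so the pointwise limit $f^*$ is well-defined and is an IDE up to the common limit $\theta^*$ of the two sequences. Applying the extension lemma to $f^*$ then yields the desired contradiction. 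To repair your argument you need precisely this nesting (or some other device guaranteeing a coherent family of flows); the bare supremum over $\mathcal{T}$ does not suffice.
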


\begin{proof}
	Let $\mathfrak{F}_0$ be the set of tuples $(f,\theta)$, with $\theta \in \R_{\geq 0}\cup\set{\infty}$ and $f$ an IDE flow up to time $\theta$ with right-constant functions $f^+_e$ and $f^-_e$. Define $\hat\theta_0 := \sup\set{\theta | \exists f \text{ s.t. } (f, \theta) \in \mathfrak{F}_0}$. If $\hat\theta_0 = \infty$ we are done, so suppose $\hat\theta_0 < \infty$. 
	There exists an IDE flow $f_1$ with $\theta_1 := \hat\theta_0/2$ such that $(f_1,\theta_1) \in \mathfrak{F}_0$. Now we define 
	\[\mathfrak{F}_1 := \set{(f, \theta) \in  \mathfrak{F}_0 | f\big|_{[0,\theta_1)} = f_1}.\]
	This set is not empty so we set $\hat\theta_1 := \sup\set{\theta | \exists f \text{ s.t. } (f, \theta) \in \mathfrak{F}_1}$. By \Cref{lemma:ExtendingContIUEFlows} we know that $\hat\theta_1 > \theta_1$, and therefore $\hat\theta_1 \in (\theta_1, \hat\theta_0]$.
	Let $\theta_2 := (\hat\theta_1 - \theta_1)/2$.
	Going on we get a strict monotone increasing sequence $(\theta_i)_{i \in \IN}$ and a non-increasing sequence $(\hat\theta_i)_{i \in \IN}$ with $\theta_i < \hat\theta_i$ for all $i \in \IN$ and $\hat\theta_i - \theta_i \leq \hat\theta_0/2^i \to 0$ for $i \to \infty$. Let $\theta^*$ be the limit of these two sequences. By taking pointwise limits of the sequence $(f_i)_{i \in \IN}$ we can construct a flow $f^*$ such that $(f^*,\theta^*) \in \mathfrak{F}_0$. 
	By \Cref{lemma:ExtendingContIUEFlows} we can extend $f^*$ by some $\varepsilon$ but this is a contradiction to the definition of $\hat\theta_i$ for all $i$ with $\hat\theta_i \in [\theta^*, \theta^* + \varepsilon)$. Hence, $\hat\theta_0 = \infty$, which finishes the proof.
\end{proof}

\begin{remark}
	While there always exists an IDE flow, this flow does not have to be unique. In fact, neither the flow $f$ itself nor the label functions $\ell_v$ and the time of termination need to be unique. This is in contrast to dynamic equilibria, where at least the label functions are uniquely determined (see Cominetti et al.~\cite[Theorem 6]{CominettiCL15}). An example for non-uniqueness is illustrated by the instance in \Cref{fig:NonUniquenessOfIDEFlows}.
	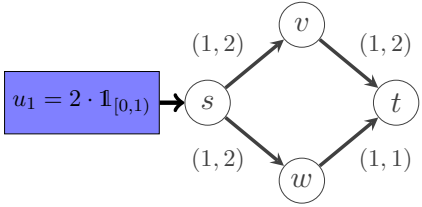
\begin{figure}[h!] 
		\floatbox[{\capbeside\thisfloatsetup{capbesideposition={right,top},capbesidewidth=.55\textwidth}}]{figure}[\FBwidth]
		{\caption{An example of a graph with infinitely many distinct flows in IDE (all with different label functions and termination times). Two such flows are: All flow uses the above path or all flow uses the bottom path. Since the first edge on both paths has rate capacity $2$, no queues will form on those edges. So both edges will lie on a shortest path as long as the flow has not arrived at node $w$.}\label{fig:NonUniquenessOfIDEFlows}}
		{
			\begin{adjustbox}{max width=.38\textwidth}
				\begin{tikzpicture}[scale=1.0]
	\begin{scope}
		\node (s) at (0, 0) {};
		\draw (s) +(0, 0) node[labeledNodeS] (s) {$s$};
		\draw (s) +(1.5, 1.25) node[labeledNodeS] (v) {$v$}
		edge[normalEdgeF, <-] node[above=.3cm,left] {$(1,2)$} (s);
		\draw (s) +(1.5, -1.25) node[labeledNodeS] (w) {$w$}
		edge[normalEdgeF, <-] node[below=.3cm,left] {$(1,2)$} (s);
		\draw (s) +(3, 0) node[labeledNodeS] (t) {$t$}
		edge[normalEdgeF, <-] node[above=.3cm,right] {$(1,2)$} (v)
		edge[normalEdgeF, <-] node[below=.3cm,right] {$(1,1)$} (w);
		
		\node[rectangle, draw, minimum width=.75cm, minimum height=1cm,fill={blue!50}] at (-2,0) {$u_1 = 2 \cdot \mathds{1}_{[0, 1)}$}
		edge[->, line width=2pt] (s); 
	\end{scope}
\end{tikzpicture}
			\end{adjustbox}
		}
	\end{figure}
\end{remark}

\begin{remark} \label{remark:IDE_thin_flows_single_sink}
The task to find suitable inflow rates $x_e$ at a given time $\theta_k$ can also be formulated globally as follows:
Find two vectors $(x_e)_{e \in E_{\theta_k}}, (a_v)_{v \in V}$ such that:
  \begin{alignat*}{2}
  \sum_{vw \in E_{\theta_k}} x_e &= b_v^-(\theta_k)&& \text{ for all } v \in V \setminus \set{t},\\
  x_e & \geq 0&& \text{ for all } e \in E_{\theta_k},\\
  a_t &= 0, &&\\ 
  a_v &= \min_{e = vw \in E_{\theta_k}} \frac{g_e(x_e)}{\nu_e} + a_w \quad 
  && \text{ for all } v \in V\setminus\Set{t},\\
  a_v &= \frac{g_e(x_e)}{\nu_e} + a_w
  && \text{ for all } e = vw \in E_{\theta_k} \text{ with } x_e > 0,
  \end{alignat*}
  where \[g_e(x_e) \coloneqq \begin{cases}
  x_e- \nu_e & \text{ if } q_e(\theta_k) > 0\\
  \max\Set{x_e- \nu_e, 0} & \text{ if } q_e(\theta_k) = 0.
  \end{cases}\]
  This formulation is very similar to the \emph{thin flow with resetting} formulation of dynamic equilibria (see \cite{Koch11}, \cite{CominettiCL15}) and in the same way the existence of a solution can be shown via Kakutani's fixed point theorem. We will extend this formulation to a multi-commodity version in order to show the existence of IDE flows in a multiple sink setting in \Cref{sec:ExistenceMultiSink}.
\end{remark}

We have proven now that an IDE flow always exists, but in order to actually compute an IDE flow we need to make \Cref{lemma:ExistenceOfOptSolutionBv} constructive, i.e. provide an algorithm to find an optimal solution to \eqref{OPT:bvThetaK} or, equivalently, a distribution of the inflow $b^-_v(\theta_k)$ satisfying \Cref{eq:diff-invariant1,eq:diff-invariant2}. Since \eqref{OPT:bvThetaK} is a convex program there are various methods available -- but thanks to the particular nice structure of \eqref{OPT:bvThetaK} even a simple water filling procedure suffices:

\ifarxiv
First, note that we have
\[
c'_{vw_i}(\theta_k)+\ell'_{w_i}(\theta_k) = 
\begin{cases}
\frac{f^+_{vw_i}(\theta_k) - \nu_{vw_i}}{\nu_{vw_i}} + a_{w_i}, &\text{ if } q_{vw_i}(\theta_k) > 0\\
\frac{\max\set{f^+_{vw_i}(\theta_k) - \nu_{vw_i},0}}{\nu_{vw_i}} + a_{w_i}, &\text{ if } q_{vw_i}(\theta_k) = 0
\end{cases}
\]
and thus, setting \[\beta_i \coloneqq \begin{cases}a_{w_i}-1, &q_{vw_i}(\theta_k)>0\\a_{w_i}, &q_{vw_i}(\theta_k)=0,\end{cases} \quad \gamma_i \coloneqq \begin{cases}0, &q_{vw_i}(\theta_k)>0\\\nu_{vw_i}, &q_{vw_i}(\theta_k)=0,\end{cases} \quad \alpha_i \coloneqq \nu_{vw_i} \text{ and }\]
\begin{align}
h_{vw_i}(z) \coloneqq \begin{cases}\beta_i, &z \leq \gamma_i \\\beta_i + \frac{1}{\alpha_i}(z - \gamma_i), &z\geq \gamma_i,\end{cases}\label{eq:FunctionsForWaterFillingAlg}
\end{align}
we obtain $h_{vw_i}(f^+_{vw_i}(\theta_k)) = c'_{vw_i}(\theta_k)+\ell'_{w_i}(\theta_k)$. Without loss of generality we assume that the nodes $w_i$ are sorted such that $\beta_1 \leq \beta_2 \leq \dots \leq \beta_{p_k}$. Then, \Cref{Alg:FindExtension} computes a distribution $b_v^-(\theta_k)=\sum_{i=1}^{p_k}f_{vw_i}(\theta_k)$ satisfying \Cref{eq:diff-invariant1,eq:diff-invariant2}.

\begin{algorithm}[H] 
	\SetKwInOut{Input}{Input}\SetKwInOut{Output}{Output}
	\caption{Water filling procedure for flow distribution}\label{Alg:FindExtension}
	
	\Input{A number $b_v^-(\theta_k) \geq 0$ and functions $h_i: \IR_{\geq 0} \to \IR_{\geq 0}$ of the form \eqref{eq:FunctionsForWaterFillingAlg} with $\alpha_i > 0$ for $i = 1, \dots, {p_k}$ and $\beta_1 \leq \beta_2 \leq \dots \leq \beta_{p_k}$.}
	\Output{Values $z_i \geq 0$ such that $\sum_{i=1}^{p_k} z_i = b_v^-(\theta_k)$ and for some $r' \leq p_k$ satisfying $h_0(z_0)= \dots = h_{r'}(z_{r'}) \leq \beta_{r'+1}$, $z_i>0$ for $i \leq r'$ and $z_i=0$ for $i > r'$.}
	
	\vspace{.3em}\hrule\vspace{.3em}
	
	Find the maximal $r \in \set{0,1, \dots, p_k}$ with $\sum_{i=1}^{r}\max\set{z|h_i(z) \leq \beta_r} \leq b_v^-(\theta_k)$
	
	\eIf{$r<p_k$ and $\sum_{i=1}^{r}\max\set{z|h_i(z) \leq \beta_{r+1}} \leq b_v^-(\theta_k)$}{
		
		Set $z_i \leftarrow \begin{cases}
		\max\set{z|h_i(z) \leq \beta_{r+1}}, 	&i \leq r\\
		b^-_v(\theta_k) - \sum_{i<r}z_i,					&i = r+1\\
		0									&i > r+1
		\end{cases}$ \label{Alg:FindExtension:DefZi1}
	}{
		Set $z_i \leftarrow \begin{cases}
		\max\set{z|h_i(z) \leq \beta_{r}}, 	&i \leq r\\
		0									&i > r
		\end{cases}$ and $b' \leftarrow b^-_v(\theta_k) - \sum_{i=1}^{k}z_i$ \label{Alg:FindExtension:DefZi2a}
		
		Set $z_i \leftarrow z_i + \frac{\alpha_i}{\sum_{j=1}^{r-1}\alpha_j}b'$ for all $i \leq r$. \label{Alg:FindExtension:DefZi2b}
	}
	
	\KwRet{$z_1, \dots, z_{p_k}$}
\end{algorithm}

\begin{lemma}
	\Cref{Alg:FindExtension} computes in $\BigO(\abs{E}^2)$ edge inflow rates $z_i$ such that by setting $f^+_{vw_i}(\theta_k) \coloneqq z_i$ we get a flow distribution satisfying $\sum_{i \in [p_k]}f^+_{vw_i}(\theta_k)=b^-_v(\theta_k)$ as well as \Cref{eq:diff-invariant1,eq:diff-invariant2}. 
\end{lemma}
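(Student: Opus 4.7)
The plan is to verify three things for the output $(z_1,\dots,z_{p_k})$: (i) that $\sum_i z_i = b_v^-(\theta_k)$, (ii) that there is a common level $\mu$ such that $h_{vw_i}(z_i) = \mu$ for all $i \le r'$ and $\mu \le \beta_i$ for all $i > r'$, and (iii) that the procedure runs in $\BigO(|E|^2)$ time. Property (ii) is what we actually need, because $h_{vw_i}(f^+_{vw_i}(\theta_k)) = c'_{vw_i}(\theta_k) + \ell'_{w_i}(\theta_k)$ by construction, so setting $\ell'_v(\theta_k) := \mu$ turns (ii) into exactly \eqref{eq:diff-invariant1} on the support, and — using that each $h_{vw_i}$ is nondecreasing with $h_{vw_i}(0) = \beta_i \ge \mu$ for $i > r'$ — into \eqref{eq:diff-invariant2} off the support.

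The key structural observation I would use is that each $h_{vw_i}$ is flat at height $\beta_i$ on $[0,\gamma_i]$ and then rises linearly with slope $1/\alpha_i$, so $\max\{z : h_{vw_i}(z) \le c\} = \gamma_i + \alpha_i(c-\beta_i)$ for every $c \ge \beta_i$. Since the $\beta_i$ are sorted, the algorithm is effectively a water-filling sweep: it identifies the last index $r$ whose flat top can be filled, then decides whether the remaining supply is enough to bring a further edge into play ("if" branch) or only to raise the common level between $\beta_r$ and $\beta_{r+1}$ ("else" branch).

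In the "if" branch I would take $r' := r+1$ and argue $\mu = \beta_{r+1}$. For $i \le r$ this holds by construction ($z_i$ is the largest $z$ with $h_i(z) \le \beta_{r+1}$). For $i = r+1$ the maximality of $r$ forces $\sum_{i=1}^{r+1}\max\{z : h_i(z) \le \beta_{r+1}\} > b_v^-(\theta_k)$, whence $z_{r+1} < \gamma_{r+1}$ and therefore $h_{r+1}(z_{r+1}) = \beta_{r+1}$; the sorted ordering then gives $\mu = \beta_{r+1} \le \beta_i$ for $i > r+1$. In the "else" branch I would take $r' := r$. After the initial assignment each $z_i$ for $i \le r$ sits at $\gamma_i + \alpha_i(\beta_r - \beta_i)$, the top of the flat part of $h_i$, so $h_i(z_i) = \beta_r$; distributing the leftover $b'$ proportionally to the slopes $\alpha_i$ keeps the $h_i$-values equal and lifts them to $\mu = \beta_r + b'/\sum_{j \le r}\alpha_j$.

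The main obstacle — and the single nontrivial inequality in the proof — is showing $\mu \le \beta_{r+1}$ in the else branch (when $r < p_k$). The failure of the "if" branch's condition reads $\sum_{i=1}^{r}\max\{z : h_i(z) \le \beta_{r+1}\} > b_v^-(\theta_k)$, and substituting the closed form for the maxima rearranges this to $\sum_{i=1}^{r}\alpha_i(\beta_{r+1} - \beta_r) > b'$, which is exactly $\mu - \beta_r < \beta_{r+1} - \beta_r$. This guarantees that the distribution stays within the linear regime of each $h_i$ and that no edge with $h_i(0) \le \mu$ is excluded from the support. For the complexity bound it suffices to note that finding the maximal $r$ requires evaluating a sum of at most $p_k$ terms at each of $p_k$ candidate indices, and all remaining steps are linear in $p_k \le |E|$, giving $\BigO(|E|^2)$ overall.
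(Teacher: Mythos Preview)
Your proposal is correct and follows the same water-filling approach as the paper; in fact, you carry out in detail precisely the ``straightforward calculation'' that the paper's own proof omits, including the key inequality $\mu \le \beta_{r+1}$ in the else branch and the translation of the common-level property back to \eqref{eq:diff-invariant1}--\eqref{eq:diff-invariant2} via $h_{vw_i}(f^+_{vw_i}(\theta_k)) = c'_{vw_i}(\theta_k) + \ell'_{w_i}(\theta_k)$.
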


\begin{proof}
	First, note that the functions $h_i$ are non-decreasing, continuous and unbounded on $\IR_{\geq 0}$. Thus, all maxima in \Cref{Alg:FindExtension} are well defined. Since the functions are even piecewise linear, all these maxima can actually be determined in constant time.
	
	It remains to show that the returned $z_i$ do indeed satisfy $\sum_{i=1}^{p_k} z_i = b_v^-(\theta_k)$ and for some $r' \leq p_k$, we have $h_0(z_0)=h_1(z_1)= \dots = h_{r'}(z_{r'}) \leq \beta_{r'+1}$ and $z_i=0$ for all $i > r'$, which is a straight forward calculation.
\end{proof}

Thus, we can indeed compute an IDE flow for single-sink networks with right constant network inflow rates, provided that 
\begin{enumerate}
	\item the IDE flow eventually terminates and
	\item the $\varepsilon$ in \Cref{lemma:ExtendingContIUEFlows} does not become arbitrarily small (and therefore no limit process is needed in \Cref{thm:existence}).
\end{enumerate}
While the general validity of the second assumption currently remains an open question, we will show in the following \nameCref{sec:SingleSinkTermination} that the first assumption holds for all IDE flows in single-sink networks. We now give an example for the computation of an IDE:

\begin{example}\label{ex:IDEComputation}
	We use the same graph as in the example in \Cref{subsec:Example}, but with different rate capacities and only a single commodity with source node $s$, sink node $t$ and a constant network inflow rate of $16$ over the interval $[0,1]$ (see the top left picture in \Cref{fig:ide2}).
	
	
	\newcommand{\HIgraph}[5]{\begin{tikzpicture}[anchor=center,scale=.5,solid,black,
		declare function={
			f(\x)= and(\x>=0, \x<=#3) * #2   
			+ (\x>#3) * (#2 + (#1) * (\x-#3));
		}
		]
		\begin{axis}[xmin=0,xmax=15,ymax=6, ymin=-2, samples=500,width=10cm,height=5cm]
		\addplot[blue, ultra thick,domain=0:15] {f(x)};
		\addplot[red,ultra thick] coordinates {(#5, -2) (#5, 6)};
		\end{axis}
		\node[blue]() at (3,1.5) {$h_{#4}$};
		\end{tikzpicture}}
	
	At time $\theta_0=0$, we have $b^-_v(0) = b^-_w(0) = b^-_t(0) = 0$ and $b^-_s(0)=16$, so we only need to distribute outflow at node $s$. Both edges $st$ and $sv$ are active at time $\theta_0$ and we have $a_{t} = a_{v} = 0$, thus, the functions $h_{st}$ and $h_{sv}$ are of the form displayed in \Cref{fig:IDEComputationHIAt0}.
	\begin{figure}[h!]
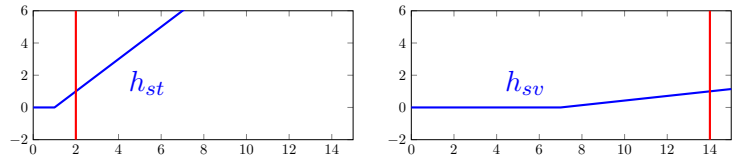

		\begin{center}
			\begin{adjustbox}{max width=.9\textwidth}
				\HIgraph{1}{0}{1}{st}{2}\hspace{1em}\HIgraph{1/7}{0}{7}{sv}{14}
			\end{adjustbox}
		\end{center}
		\vspace{-0.5cm}
		\caption[format=hang]{The functions $h_{st}$ and $h_{sv}$ at time $\theta_0=0$. The vertical red line indicates the $z_i$ values determined by \Cref{Alg:FindExtension}.}
		\label{fig:IDEComputationHIAt0}
	\end{figure}
	\Cref{Alg:FindExtension} then gives us the edge inflow rates $f^+_{st}(0) = 1 + \frac{1}{1+7}\cdot 8 = 1 + 1$ and $f^+_{sv}(0) = 7 + 7$. This ensures that both edges remain active up to time $\theta_1=1$, where the inflow into node $s$ stops and the first flow particles arrive at node $v$. Then, the only node with positive $b_v^-$ is $v$, but since $v$ has only one edge leaving this node, no flow distribution is necessary. At time $\theta_2=2$, the flow arriving at node $w$ only has one active edge, and therefore enters the edge towards $t$ at a rate of $7$. At time $\theta_3=2.5$, the edge $ws$ becomes active and the outflow from node $w$ (at rate $7$) needs to be redistributed. \Cref{Alg:FindExtension} gives us the inflow rates $f^+_{wt}(2.5) = 1$ and $f^+_{ws}(2.5) = 6$. From time $\theta_4=3.5$ onward, flow arrives at node $s$ at rate $6$ and at node $w$ at rate $7$. Since $s$ is currently closer  to $t$ than $w$ ($\ell_s(3.5)=3 < 4 = \ell_{w}(3.5)$), we start by distributing the outflow of $s$. From $s$ only the edge towards $t$ is active, so all flow enters this edge. This means that a queue will start to build up on the edge $st$ at a rate of $5$, and therefore $a_{s} = 5$. Thus, \Cref{Alg:FindExtension} gives us the inflow rates $f^+_{wt}(3.5) = 6$ and $f^+_{ws}(3.5) = 1$. At time $\theta_5=4$, the only active edge for the flow arriving at node $s$ is $st$. At time $\theta_6=4.5$, the edge $sv$ becomes active, too.  Thus, we need to compute a new distribution for the flow arriving at node $s$ (at rate $1$). Since we have no outflow from node $w$, the queue on edge $wt$ decreases at a rate of $1$ leading to $a_w = -1$, and therefore $a_v=-1$. \Cref{Alg:FindExtension} then gives us the inflow rates $f^+_{st}(4.5) = 0$ and $f^+_{sv}(4.5) = 1$. From time $\theta_7=5$ on only the physically shortest paths are active, so all flow particles will stay on their currently chosen path towards $t$. The IDE flow computed by this procedure is displayed in \Cref{fig:ide2}.
	
	\begin{figure}[ht!]
		\begin{center}
			\begin{adjustbox}{max width=1\textwidth}
				\begin{tikzpicture}[scale=1.2, font=\Large]

	\newcommand{\exampleGraph}[2]{		
		\node[labeledNodeS] (s1) at (0, 4) {$s$};
		\draw (s1) ++(4, 0) node[labeledNodeS] (v) {$v$};
		\draw (s1) ++(0, -4) node[labeledNodeS] (t) {$t$};
		\draw (v) ++(0, -4) node[labeledNodeS] (w) {$w$};
		
		\draw[normalEdge,->] (s1) -- node[below] {\ifthenelse{\equal{#2}{labels}}{\Large$(1,7)$}{}} (v);
		\draw[normalEdge,->] (s1) -- node[right] {\ifthenelse{\equal{#2}{labels}}{\Large$(3,1)$}{}} (t);
		\draw[normalEdge,->] (v)  -- node[left] {\ifthenelse{\equal{#2}{labels}}{\Large$(1,7)$}{}} (w);
		\draw[normalEdge,->] (w) -- node[below] {\ifthenelse{\equal{#2}{labels}}{\Large$(\tau_{wt},\nu_{wt})=(1,1)$}{}} (t);
		\draw[normalEdge,->] (w) -- node[below=.5cm] {\ifthenelse{\equal{#2}{labels}}{\Large$(1,6)$}{}} (s1);
		
		\node at (2,5) [rectangle,draw] (theta0) {\Large$\theta=#1$:};
	}
	
	\newcommand{\colComRed}{red!30}
	\newcommand{\colComBlue}{blue!60}
	
	\begin{scope}
	
	\exampleGraph{0}{labels}
	
	\draw (s1) +(0,1.5) node[rectangle, draw, minimum width=1.5cm, minimum height=1cm,fill=\colComRed] (s1u) {\Large $u=16 \cdot \mathds{1}_{[0, 1)}$}; 
	\draw [->, line width=2pt] (s1u) -- (s1);
	
	\end{scope}

	\begin{scope}[xshift=7cm]
	
		\node[rectangle, color=\colComRed,draw, minimum width=1.33cm,minimum height=.2cm, fill=\colComRed,rotate around={90:(0,0)}] at (0,3.1) {};
		\node[rectangle, color=\colComRed,draw, minimum width=.5cm,minimum height=.5cm, inner sep=0, fill=\colComRed,rotate around={90:(0,0)}] at (-.2,3.4) {};
		\node at (-1.5,3.5) [] () {\Large $q_{st}(1)=1$};
		\node[rectangle, color=\colComRed,draw, minimum width=4cm,minimum height=1.4cm, fill=\colComRed] at (2,4) {};
		\node[rectangle, color=\colComRed,draw, minimum width=.5cm,minimum height=3.5cm, fill=\colComRed] at (.55,5.2) {};
		\node at (-1,4.5) [] () {\Large $q_{sv}(1)=7$};
		
		\exampleGraph{1}{}	
	
	\end{scope}

	\begin{scope}[xshift=14cm]
	
		\node[rectangle, color=\colComRed,  draw, minimum width=2.66cm,minimum height=.2cm, fill=\colComRed,rotate around={90:(0,0)}] at (0,2.45) {};
		\node[rectangle, color=\colComRed,  draw, minimum width=4cm,minimum height=1.4cm, fill=\colComRed] at (2,4) {};
		\node[rectangle, color=\colComRed,  draw, minimum width=4cm,minimum height=1.4cm, fill=\colComRed,rotate around={90:(0,0)}] at (4,2) {};
		
		\exampleGraph{2}{}
	
	\end{scope}

	\begin{scope}[xshift=21cm]
	
		\node[rectangle, color=\colComRed,draw, minimum width=2.66cm,minimum height=.2cm, fill=\colComRed,rotate around={90:(0,0)}] at (0,2) {};
		\node[rectangle, color=\colComRed,  draw, minimum width=2cm,minimum height=1.4cm, fill=\colComRed] at (2.8,4) {};
		\node[rectangle, color=\colComRed,  draw, minimum width=4cm,minimum height=1.4cm, fill=\colComRed,rotate around={90:(0,0)}] at (4,2) {};
		\node[rectangle, color=\colComRed,draw, minimum width=2cm,minimum height=.2cm, fill=\colComRed] at (2.8,0) {};
		\node[rectangle, color=\colComRed,draw, minimum width=.5cm,minimum height=1.5cm, fill=\colComRed] at (3.5,-.6) {};
		\node at (2,-.8) [] () {\Large $q_{wt}(3)=3$};
		
	\exampleGraph{2.5}{}
	
	\end{scope}		
	
	\begin{scope}[yshift=-7cm]
	
		\node[rectangle, color=\colComRed,draw, minimum width=2cm,minimum height=.2cm, fill=\colComRed,rotate around={90:(0,0)}] at (0,1.2) {};
		\node[rectangle, color=\colComRed,draw, minimum width=5.9cm,minimum height=1.2cm, fill=\colComRed,rotate around={-45:(0,0)}] at (2,2) {};
		\node[rectangle, color=\colComRed,  draw, minimum width=2cm,minimum height=1.4cm, fill=\colComRed,rotate around={90:(0,0)}] at (4,1.2) {};
		\node[rectangle, color=\colComRed,draw, minimum width=4cm,minimum height=.2cm, fill=\colComRed] at (2,0) {};
		\node[rectangle, color=\colComRed,draw, minimum width=.5cm,minimum height=1.5cm, fill=\colComRed] at (3.5,-.6) {};
		\node at (2,-.8) [] () {\Large $q_{wt}(3)=3$};
		
		\exampleGraph{3.5}{}
	
	\end{scope}

	\begin{scope}[yshift=-7cm, xshift=7cm]

		\node[rectangle, color=\colComRed,draw, minimum width=1.33cm,minimum height=.2cm, fill=\colComRed,rotate around={90:(0,0)}] at (0,.866) {};
		\node[rectangle, color=\colComRed,draw, minimum width=.5cm,minimum height=1.25cm, inner sep=0, fill=\colComRed,rotate around={90:(0,0)}] at (-.5,3.4) {};
		\node at (-1.5,2.7) [] () {\Large $q_{st}(4)=2.5$};
		\node[rectangle, color=\colComRed,draw, minimum width=4cm,minimum height=.2cm, fill=\colComRed] at (2,0) {};					
		\node[rectangle, color=\colComRed,draw, minimum width=.5cm,minimum height=2.75cm, fill=\colComRed] at (3.5,-1.2) {};
		\node at (2,-.8) [] () {\Large $q_{wt}(4)=5.5$};
		\node[rectangle, color=\colComRed,draw, minimum width=3cm,minimum height=1.2cm, fill=\colComRed,rotate around={-45:(0,0)}] at (1.12,2.88) {};
		\node[rectangle, color=\colComRed,draw, minimum width=3cm,minimum height=.2cm, fill=\colComRed,rotate around={-45:(0,0)}] at (2.88,1.12) {};
		
		\node[rectangle, color=\colComRed,draw, minimum width=.68cm,minimum height=.2cm, fill=\colComRed,rotate around={90:(0,0)}] at (0,3.4) {};

		\exampleGraph{4}{}
	
	\end{scope}

	\begin{scope}[yshift=-7cm, xshift=14cm]

		\node[rectangle, color=\colComRed,draw, minimum width=1.33cm,minimum height=.2cm, fill=\colComRed,rotate around={90:(0,0)}] at (0,3.06) {};	
		\node[rectangle, color=\colComRed,draw, minimum width=.5cm,minimum height=2.5cm, inner sep=0, fill=\colComRed,rotate around={90:(0,0)}] at (-1,3.4) {};
		\node at (-1.5,2.7) [] () {\Large $q_{st}(4)=5$};
		\node[rectangle, color=\colComRed,draw, minimum width=4cm,minimum height=.2cm, fill=\colComRed] at (2,0) {};					
		\node[rectangle, color=\colComRed,draw, minimum width=.5cm,minimum height=2.5cm, fill=\colComRed] at (3.5,-1.1) {};
		\node at (2,-.8) [] () {\Large $q_{wt}(4)=5$};
		\node[rectangle, color=\colComRed,draw, minimum width=2.97cm,minimum height=.2cm, fill=\colComRed,rotate around={-45:(0,0)}] at (1.1,2.9) {};
		\node[rectangle, color=\colComRed,draw, minimum width=.66cm,minimum height=.2cm, fill=\colComRed,rotate around={90:(0,0)}] at (0,.533) {};		
		
		\exampleGraph{4.5}{}
	
	\end{scope}

	\begin{scope}[yshift=-7cm, xshift=21cm]
	
		\node[rectangle, color=\colComRed,draw, minimum width=2cm,minimum height=.2cm, fill=\colComRed,rotate around={90:(0,0)}] at (0,2.8) {};	
		\node[rectangle, color=\colComRed,draw, minimum width=.5cm,minimum height=2.25cm, inner sep=0, fill=\colComRed,rotate around={90:(0,0)}] at (-1,3.4) {};
		\node at (-1.5,2.7) [] () {\Large $q_{st}(4)=4.5$};
		\node[rectangle, color=\colComRed,draw, minimum width=4cm,minimum height=.2cm, fill=\colComRed] at (2,0) {};					
		\node[rectangle, color=\colComRed,draw, minimum width=.5cm,minimum height=2.5cm, fill=\colComRed] at (3.5,-1.1) {};
		\node at (2,-.8) [] () {\Large $q_{wt}(4)=4.5$};
		\node[rectangle, color=\colComRed,draw, minimum width=2cm,minimum height=.2cm, fill=\colComRed] at (1.2,4) {};	
		
		\exampleGraph{5}{}
	
	\end{scope}	
\end{tikzpicture}
			\end{adjustbox}
		\end{center}
		\vspace{-0.5cm}
		\caption[format=hang]{The evolution of the computed IDE over the time horizon $[0,5]$.}
		\label{fig:ide2}
	\end{figure}
	
\end{example}

\begin{remark}
	Note that at time $\theta_4=3.5$ it is vital to consider $s$ before $w$. Otherwise we would continue the flow split at node $w$ from time $\theta_3=3$ leading to the edge $vs$ becoming inactive immediately after $\theta=3.5$ (i.e. we could not extend our IDE flow for any $\varepsilon>0$ in that way). At time $\theta_6=4.5$, before distributing the flow arriving at $s$, both $s$-$t$ paths (the direct one and the one over $v$ and $w$) might seem to be completely equivalent as both have a physical path length of $3$ and one queue of current length $5$ decreasing at a rate of $1$. However, we may, in fact, not send any flow into the edge $st$ as this would slow down the decrease of its queues length, making this edge immediately inactive, while sending flow towards $v$ does not change the decrease rate of the queue on edge $wt$. Our algorithm does indeed send all flow into the edge $sv$. After time $\theta_6=4.5$, the flow particles on edge $sv$ are traversing this edge for the second time, i.e., they have completed a cycle.
\end{remark}
\fi

\section{Termination of IDE Flows in Single-Sink Networks}\label{sec:SingleSinkTermination}

In this section, we investigate the question, whether an IDE flow vanishes within finite time given finitely lasting network inflow rates. More precisely, given a time $\theta_0$, such that $\supp(u_i) \subseteq [0, \theta_0]$ for every $i \in I$, we ask whether there exists a time $\hat{\theta} \geq \theta_0$, such that all injected flow actually reaches the sink within time $\hat{\theta}$.
To answer this question, we first need to introduce some additional notation. For every edge $e \in E$, we define a function $\edgeLoad[e]$ denoting the total amount of flow currently on edge $e$ (either waiting in its queue or traveling along the edge) for any time~$\theta$. As in \cite{Sering2019}, we call these the \emph{edge load functions}:
\[\edgeLoad[e]: \IR_{\geq 0} \to \IR_{\geq 0}, \quad \theta \mapsto F^+_e(\theta)-F^-_e(\theta).\]
The function $\edgeLoad(\theta) := \sum_{e \in E}\edgeLoad[e](\theta)$ specifies the \emph{total amount of flow in the network at time~$\theta$}. Furthermore, we define a function $Z$ indicating the amount of flow that already reached the sink $t$ by time $\theta$:
\[Z: \IR_{\geq 0} \to \IR_{\geq 0}, \quad \theta \mapsto \sum_{e \in \delta^-_t}F^-_e(\theta) - \sum_{e \in \delta^+_t}F^+_e(\theta). \]
Note that for IDE flows the subtrahend is always $0$ since edges leaving $t$ are never active.

For every subset $W \subseteq V$ and any time $\theta$ a direct computation shows that we have
	\begin{equation} \label{eq:GeIsTotalFlowInGraph}
		\sum_{\mathclap{e \in E(W)}}\edgeLoad[e](\theta) = \sum_{\mathclap{e \in \delta^-_W}}F_e^-(\theta) + \sum_{\mathclap{i \in I: s_i \in W}}U_i(\theta) - \sum_{\mathclap{e \in \delta^+_W}}F_e^+(\theta) - \begin{cases}
		Z(\theta), 	& t \in W \\
		0,				& \text{else}
		\end{cases},
	\end{equation}
with $\delta^+_W := \set{wv \in E | w \in W, v\notin W}$ and $\delta^-_W := \set{vw \in E | v \notin W, w\in W}$. In particular taking $W = V$ we get
	\begin{equation}\label{eq:GeIsTotalFlowInGraphAllNodes}
		\edgeLoad(\theta) = \sum_{i \in I} U_i(\theta) - Z(\theta).
	\end{equation}
Since $Z'(\theta) = \sum_{e \in \delta^-_t}f^-_e(\theta) - \sum_{e \in \delta^+_t}f^+_e(\theta)$ is always non-negative by \Cref{eq:Cont-FlowDefProperties-FlowConsSink}, it follows immediately that the total amount of flow in the network is non-increasing after time $\theta_0$. More generally, since all $F_e^+$ are non-decreasing, for all $W \subseteq V$ with $\delta^-_W = \emptyset$ we have
	\begin{equation}\label{eq:FlowVolumeReduction}
		\sum_{e \in E(W)}\edgeLoad[e](\theta_2) \leq \sum_{e \in E(W)}\edgeLoad[e](\theta_1) \quad \text{ for all } \theta_2 \geq \theta_1 \geq \theta_0.
	\end{equation} 
In particular, for $\hat{\theta} \geq \theta_0$ with $\edgeLoad(\hat{\theta}) = 0$, we have $\edgeLoad(\hat{\theta}) = 0$ for all $\theta \geq \hat{\theta}$.
\begin{defn}
   	We say a \revised{feasible flow} $f$ \emph{terminates,} if there exists a time $\hat{\theta} \geq \theta_0$ with $\edgeLoad(\hat{\theta})=0$, i.e., the network is empty at time $\hat \theta$ (and remains empty for all later times).
\end{defn}

Before we turn to the main termination result, we need a technical lemma showing that all flow on an edge eventually leaves the edge (ignoring the identities of the flow particles).
\begin{lemma}\label{lemma:FlowOnEdgesLeavesFastVar}
	Let $f$ be a feasible flow over time, $\theta_1 \in \IR_{\geq 0}$, $e \in E$ and any $\lambda \in [0, \edgeLoad[e](\theta_1)]$. Then there exists a time $\theta_2 \geq \theta_1$ such that a flow volume of at least $\lambda$ leaves $e$ during the interval $[\theta_1, \theta_2]$, i.e.,
	$F^-_e(\theta_2) - F^-_e(\theta_1) \geq \lambda$.
\end{lemma}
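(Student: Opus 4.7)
The plan is to exhibit an explicit $\theta_2$ large enough that every flow particle on $e$ at time $\theta_1$ must have left. Concretely, I would take
\[ \theta_2 \;:=\; \theta_1 + \tau_e + \frac{q_e(\theta_1)}{\nu_e}. \]
The intuition behind this choice is that the flow currently waiting in the queue (of volume $q_e(\theta_1)$) needs at most $q_e(\theta_1)/\nu_e$ time to pass through, since the queue drains at rate $\nu_e$ whenever it is non-empty; afterwards, any particle leaving the queue still needs $\tau_e$ physical transit time, while any particle already in transit at $\theta_1$ will exit within $\tau_e$ time regardless.

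To turn this intuition into a proof, I would reduce the claim to showing the stronger inequality
\[ F_e^-(\theta_2) \;\geq\; F_e^+(\theta_1), \]
which immediately yields
\[ F_e^-(\theta_2) - F_e^-(\theta_1) \;\geq\; F_e^+(\theta_1) - F_e^-(\theta_1) \;=\; \edgeLoad[e](\theta_1) \;\geq\; \lambda. \]
From the queue definition \eqref{eq:Cont-FlowDefProperties-QueueLengthWithF} we have $F_e^+(\theta_1) = F_e^-(\theta_1 + \tau_e) + q_e(\theta_1)$, so it is enough to establish $F_e^-(\theta_2) - F_e^-(\theta_1 + \tau_e) \geq q_e(\theta_1)$.

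I would then split into two cases. First, if $q_e(\theta) > 0$ on the whole interval $[\theta_1,\, \theta_1 + q_e(\theta_1)/\nu_e)$, then the operating-at-capacity condition \eqref{eq:Cont-FlowDefProperties-OpAtCap} gives $f_e^-(\theta+\tau_e) = \nu_e$ throughout that interval, and integrating yields exactly $F_e^-(\theta_2) - F_e^-(\theta_1 + \tau_e) = q_e(\theta_1)$. Otherwise, using that $q_e$ is continuous and non-negative, I would let $\theta^*$ be the smallest time in $[\theta_1,\, \theta_1 + q_e(\theta_1)/\nu_e]$ with $q_e(\theta^*) = 0$; then the queue identity gives $F_e^-(\theta^* + \tau_e) = F_e^+(\theta^*) \geq F_e^+(\theta_1)$ by monotonicity of $F_e^+$, and since $\theta^* + \tau_e \leq \theta_2$, monotonicity of $F_e^-$ finishes the argument.

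I do not foresee any serious obstacle. The argument is essentially bookkeeping with the queue dynamics and, notably, does not rely on the FIFO condition \eqref{eq:FIFO}, since only aggregated inflow and outflow enter the estimate. The one step requiring mild care is the case distinction: one has to verify that continuity of $q_e$ makes the infimum-time $\theta^*$ attained, so that the identity $F_e^-(\theta^* + \tau_e) = F_e^+(\theta^*)$ can actually be evaluated at that specific point.
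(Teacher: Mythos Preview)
Your argument is correct and in fact slightly stronger than what the paper does: you produce an explicit
\[
\theta_2 = \theta_1 + \tau_e + \frac{q_e(\theta_1)}{\nu_e}
\]
and verify directly, via the case split on whether the queue empties before $\theta_1 + q_e(\theta_1)/\nu_e$, that $F_e^-(\theta_2) \geq F_e^+(\theta_1)$. The paper instead argues by contradiction: assuming $F_e^-(\theta) - F_e^-(\theta_1) < \lambda$ for all $\theta \geq \theta_1$, it deduces $q_e(\theta) > 0$ for all $\theta \geq \theta_1$, hence $f_e^-(\theta+\tau_e) = \nu_e$ always, forcing $F_e^-$ to grow without bound. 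That route is a line or two shorter but yields no quantitative information, whereas your version pins down an explicit exit time depending only on $\tau_e$, $\nu_e$, and $q_e(\theta_1)$. Both approaches use only \eqref{eq:Cont-FlowDefProperties-QueueLengthWithF} and \eqref{eq:Cont-FlowDefProperties-OpAtCap}; neither needs FIFO, as you note.
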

\begin{proof}
	Suppose for contradiction that $F^-_e(\theta) - F^-_e(\theta_1) < \lambda \leq \edgeLoad[e](\theta_1)$ for all $\theta \geq \theta_1$. As $F^+_e$ is non-decreasing this implies $F^-_e(\theta+\tau_e) < F^+_e(\theta_1) \leq F^+_e(\theta)$, from which, by \Cref{eq:Cont-FlowDefProperties-QueueLengthWithF}, we get $q_e(\theta) = F^+_e(\theta) - F^-_e(\theta+\tau_e) > 0$ for all $\theta \geq \theta_1$. Hence, \Cref{eq:Cont-FlowDefProperties-OpAtCap} gives us $f_e^-(\theta+\tau_e) = \nu_e$ for all $\theta \geq \theta_0$ implying that $F_e^-$ grows unboundedly, which is a contradiction. 
\end{proof}

We show next that for acyclic networks every feasible flow over time terminates. This intuitive result will serve as a building block for the more general result that in a single-sink network all IDE flows terminate.

\begin{lemma}\label{lemma:TerminationAcyclicNetworks}
	Let $G$ be an acyclic graph and assume finitely lasting bounded inflow functions. Then, every feasible flow over time terminates.
\end{lemma}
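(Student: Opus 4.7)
The plan is to exploit acyclicity via a topological ordering of $V$ and inductively show, node by node in that order, that after some finite time no flow enters any outgoing edge of that node. Combined with \Cref{lemma:FlowOnEdgesLeavesFastVar}, this forces each edge to become (and stay) empty in finite time, which yields $\edgeLoad(\hat\theta)=0$ for some $\hat\theta$.

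More concretely, fix a topological order $v_1,\dots,v_n$ of $V$ such that every edge $v_iv_j\in E$ satisfies $i<j$, and let $\theta_0$ be a time beyond the support of every $u_i$. I would prove by induction on $k$ the statement: there exists a time $T_k\geq\theta_0$ such that for every edge $e\in\bigcup_{j\leq k}\delta^+_{v_j}$ we have $\edgeLoad[e](\theta)=0$ (and hence $f^+_e(\theta)=f^-_e(\theta)=0$) for all $\theta\geq T_k$. For the base case $k=1$, node $v_1$ has no incoming edges, so flow conservation \eqref{eq:Cont-FlowDefProperties-FlowCons} (summed over commodities) gives $\sum_{e\in\delta^+_{v_1}}f^+_e(\theta)\leq \sum_{i:s_i=v_1}u_i(\theta)$, which vanishes for $\theta\geq\theta_0$. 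Therefore $f^+_e(\theta)=0$ on $[\theta_0,\infty)$ for every $e\in\delta^+_{v_1}$, and applying \Cref{lemma:FlowOnEdgesLeavesFastVar} with $\theta_1=\theta_0$ and $\lambda=\edgeLoad[e](\theta_0)$ produces a finite $\theta_2^e$ at which the edge is empty; since no further flow enters, the edge stays empty afterwards, and $T_1$ can be taken as the maximum of these $\theta_2^e$ over $e\in\delta^+_{v_1}$.

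For the inductive step, assume the claim holds for $k$, so for every $\theta\geq T_k$ and every incoming edge $e\in\delta^-_{v_{k+1}}$ (all of which originate from some $v_j$ with $j\leq k$ by the topological order) we have $f^-_e(\theta)=0$. Flow conservation at $v_{k+1}$ then gives
\[
\sum_{e\in\delta^+_{v_{k+1}}}f^+_e(\theta)\leq \sum_{e\in\delta^-_{v_{k+1}}}f^-_e(\theta)+\sum_{i:s_i=v_{k+1}}u_i(\theta)=0
\]
for all $\theta\geq \max\{T_k,\theta_0\}$, where the inequality accommodates the sink case \eqref{eq:Cont-FlowDefProperties-FlowConsSink}. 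Consequently $f^+_e\equiv 0$ on $[\max\{T_k,\theta_0\},\infty)$ for every $e\in\delta^+_{v_{k+1}}$, and a second application of \Cref{lemma:FlowOnEdgesLeavesFastVar} (with $\theta_1=\max\{T_k,\theta_0\}$ and $\lambda=\edgeLoad[e](\theta_1)$) empties each such edge by some finite time; taking $T_{k+1}$ to be the maximum of these times and $T_k$ completes the step. After $n$ steps we obtain $T_n$ with $\edgeLoad[e](\theta)=0$ for all $e\in E$ and $\theta\geq T_n$, hence $\edgeLoad(T_n)=0$, i.e.\ $f$ terminates.

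I expect no serious obstacle here: the only things to watch are that the topological order gives the correct direction of induction (so that predecessors are handled before successors), and that the sink inequality \eqref{eq:Cont-FlowDefProperties-FlowConsSink} is compatible with the conservation bound used above (which it is, since a $\leq$ on the outflow side only helps). The role of boundedness of the inflow rates is modest; the essential ingredient is that each $u_i$ has bounded support, together with \Cref{lemma:FlowOnEdgesLeavesFastVar}, which is precisely what forces a queue to drain once its inflow ceases.
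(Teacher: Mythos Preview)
Your proposal is correct and follows essentially the same approach as the paper: both proceed by induction along a topological order, using \Cref{lemma:FlowOnEdgesLeavesFastVar} to drain each outgoing edge once the upstream part of the network has emptied. The only cosmetic difference is that the paper phrases the induction via the aggregate quantity $\edgeLoad[\beforeNq w]$ and the balance equation~\eqref{eq:GeIsTotalFlowInGraph}, whereas you argue directly from flow conservation at each node; both arguments yield the same conclusion.
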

\begin{proof}
	Since the graph is acyclic, we can consider a topological order $\beforeN$ on $V$, i.e., $uv \in E \implies u \beforeN v$.
	Without loss of generality, let $t$ be reachable from every node, since the part of the graph that cannot reach $t$ will never be used by any feasible flow over time and can therefore be removed. Thus, $t$ must be the last (smallest) element in this order. We will also use the notation $e \beforeNq w$ to indicate that an edge $e=uv$ lies before $w$, i.e.,~$u,v \beforeNq w$. Now, given a feasible flow over time $f$ we can define a function $\edgeLoad[\beforeNq w]$ for every node $w$ that gives us the total edge load before $w$ at any time $\theta$, i.e., $\edgeLoad[\beforeNq w](\theta) = \sum_{e \beforeNq w}\edgeLoad[e](\theta)$.
  
	As a first step, we show that after $\theta_0$, if there is no flow before some node $w$ (i.e., on edges $e \beforeNq w$), there will be no flow on any edge leaving $w$ some time later, or, more formally:
	\begin{claim}\label{lemma:TerminationAcyciclicNetworks:Claim}
		Let $e = wv \in E$ and $\theta_w \geq \theta_0$ such that we have $\edgeLoad[\beforeNq w](\theta_w) = 0$. Then, there exists a time $\theta_{wv} \geq \theta_w$ such that for all $\theta \geq \theta_{wv}$, we have $\edgeLoad[e](\theta)=0$.
	\end{claim}

	\begin{proofClaim}[Proof of \Cref*{lemma:TerminationAcyciclicNetworks:Claim}]
		We first show that $F^+_e(\theta) = F^+_e(\theta_w)$ for all $\theta \geq \theta_w$. We define the set $W := \set{u \in V | u \beforeNq w}$, so that we have $wv \in \delta^+_W$, as well as $\delta^-_W = \emptyset$ and 
			\[0 \leq \edgeLoad[\beforeNq w](\theta) = \sum_{e \in E(W)} \edgeLoad[e](\theta) \overset{\text{\eqref{eq:FlowVolumeReduction}}}{\leq} \sum_{e \in E(W)} \edgeLoad[e](\theta_w) = \edgeLoad[\beforeNq w](\theta_w) = 0\] 
		for every $\theta \geq \theta_w$. For all $\theta \geq \theta_w$ we obtain
		\begin{align*}
		0 &\leq F^+_e(\theta) - F^+_e(\theta_w) \leq \sum_{e' \in \delta^+_W}\left(F^+_{e'}(\theta) - F^+_{e'}(\theta_w)\right) \\
		&\overset{\mathclap{\text{\eqref{eq:GeIsTotalFlowInGraph}}}}{=} \sum_{i \in I: s_i \in W}U_i(\theta) - \edgeLoad[\beforeNq w](\theta) - \sum_{i \in I: s_i \in W}U_i(\theta_w) + \edgeLoad[\beforeNq w](\theta_w) = 0
		\end{align*}
    since $F_{e'}^+$ is non-decreasing, $t \notin W$ and $U_i(\theta) = U_i(\theta_w)$ for all $i \in I$.
   Hence, we have $F^+_e(\theta) = F^+_e(\theta_w)$.
		
	By \Cref{lemma:FlowOnEdgesLeavesFastVar} there exists a time $\theta_{wv} \geq \theta_w$ with $F^-_e(\theta_{wv})-F^-_e(\theta_w) \geq \edgeLoad[e](\theta_w)$. Since $F^-_e$ is non-decreasing, we also get $F^-_e(\theta)-F^-_e(\theta_w) \geq \edgeLoad[e](\theta_w)$ for any $\theta \geq \theta_{wv}$, and hence
		\ifarxiv
		\[
		0 \leq \edgeLoad[e](\theta) = F^+_e(\theta) - F^-_e(\theta) = F^+_e(\theta_w) - F^-_e(\theta) = \edgeLoad[e](\theta_w) + F^-_e(\theta_w) - F^-_e(\theta) \leq 0.\qedhere
		\]
		\fi\ifspringer
		\begin{align*}
				0 	&\leq \edgeLoad[e](\theta) = F^+_e(\theta) - F^-_e(\theta) = F^+_e(\theta_w) - F^-_e(\theta) \\
					&= \edgeLoad[e](\theta_w) + F^-_e(\theta_w) - F^-_e(\theta) \leq 0.\qedhere
		\end{align*}
		\fi
	\end{proofClaim}
	\begin{claim}\label{lemma:TerminationAcyciclicNetworks:Claim2}
		For every $v \in V$, there exists a time $\theta_v \geq \theta_0$ such that $\edgeLoad[\beforeNq v](\theta) = 0$ for all $\theta \geq \theta_v$.
	\end{claim}
	\begin{proofClaim}[Proof of \Cref*{lemma:TerminationAcyciclicNetworks:Claim2}]
		We show this by induction on the number of nodes greater than $v$ in the given topological order on $V$. Our base case is, that there are no nodes $w \beforeN v$. Then there are also no edges $e \beforeNq v$, and therefore $\edgeLoad[\beforeNq v](\theta) = 0$ holds for all $\theta \geq \theta_0$. So, we can assume that for all $w \beforeN v$ there are already times $\theta_w \geq \theta_0$ with $\edgeLoad[\beforeNq w](\theta)=0$ for all $\theta \geq \theta_w$. Then for every edge $wv \in E$, \Cref{lemma:TerminationAcyciclicNetworks:Claim} gives us a time $\theta_{wv} \geq \theta_w$ with $\edgeLoad[e](\theta) = 0$ for all $\theta\geq \theta_{wv}$. Setting $\theta_v \coloneqq \max\set{\theta_{wv} | wv \in \delta^-_v}$ then guarantees for all $\theta \geq \theta_v$ that
			\[\edgeLoad[\beforeNq v](\theta) = \sum_{e \beforeNq v}\edgeLoad[e](\theta)  \leq \sum_{wv \in E}(\edgeLoad[wv](\theta) + \edgeLoad[\beforeNq w](\theta)) = 0.\qedhere\]	
   \end{proofClaim}
	Finally, the lemma follows directly from \Cref{lemma:TerminationAcyciclicNetworks:Claim2} by setting $v = t$, as then we have $\edgeLoad(\theta_t) = \edgeLoad[\beforeNq t](\theta_t) = 0$ for some $\theta_t \geq \theta_0$.
\end{proof}
In the next step we show that if the sum of all edge loads between a node $v$ and the sink~$t$ are small enough (and, thus, in particular all queues on edges between $v$ and $t$ are small), then an IDE flow can not be diverted away from the physically shortest paths towards $t$. Since those physically shortest paths form a time independent acyclic subgraph, we will be able to apply \Cref{lemma:TerminationAcyclicNetworks} to the flow inside this subgraph. For the next lemma, we need the minimal non-zero difference between two path lengths
$\tauMinDiff \coloneqq \min\set{\tau(P)-\tau(P')>0 | u \in V, P,P' \text{ two }u\text{-}t\text{ paths}}$ and the minimal rate capacity $\nu_{\min} \coloneqq \min\set{\nu_e | e \in E}$.
\begin{lemma}\label{lemma:ShortestPathsWithLowVolumeAreActive}
	If, for some node $v \in V$ and	some time $\theta \in \IR_{\geq 0}$, every physical shortest $v$-$t$ path (i.e., w.r.t. $\tau$) has total flow volume of less than $\tauMinDiff \cdot \nu_{\min}$, then all active $v$-$t$ paths at time $\theta$ are also physical shortest $v$-$t$ paths, i.e., if $\sum_{e \in P}\edgeLoad[e](\theta) < \tauMinDiff\cdot \nu_{\min}$ for all physical shortest $v$-$t$ paths~$P$,
	then the following holds:
		\[P' \text{ is an active $v$-$t$ path at time $\theta$ } \implies P' \text{ is a physical shortest $v$-$t$ path}.\]
\end{lemma}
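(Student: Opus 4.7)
The plan is to bound $\ell_v(\theta)$ from above using a physical shortest path, and then use the definition of $\tauMinDiff$ to rule out non-physical-shortest active paths.

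First I would pick an arbitrary physical shortest $v$-$t$ path $P$ with physical length $\tau(P) = \sum_{e \in P} \tau_e$. Since each queue length satisfies $q_e(\theta) \leq \edgeLoad[e](\theta)$ (the queue is only a part of the total load on the edge), and since $\nu_e \geq \nu_{\min}$ for all $e$, I can estimate
\begin{equation*}
\sum_{e \in P} c_e(\theta) \;=\; \tau(P) + \sum_{e \in P} \frac{q_e(\theta)}{\nu_e} \;\leq\; \tau(P) + \frac{1}{\nu_{\min}}\sum_{e \in P}\edgeLoad[e](\theta) \;<\; \tau(P) + \tauMinDiff,
\end{equation*}
where the last inequality uses the hypothesis on the total flow volume along $P$. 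Since $\ell_v(\theta)$ is the minimum over all $v$-$t$ paths of $\sum_{e}c_e(\theta)$, this yields $\ell_v(\theta) < \tau(P) + \tauMinDiff$.

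Next I would consider an arbitrary active $v$-$t$ path $P'$ at time $\theta$, which by definition satisfies $\sum_{e \in P'}c_e(\theta) = \ell_v(\theta)$. Using $c_e(\theta) \geq \tau_e$, I get
\begin{equation*}
\tau(P') \;\leq\; \sum_{e \in P'} c_e(\theta) \;=\; \ell_v(\theta) \;<\; \tau(P) + \tauMinDiff.
\end{equation*}
Hence $\tau(P') - \tau(P) < \tauMinDiff$. Since $P$ is a physical shortest path, $\tau(P') - \tau(P) \geq 0$. By the definition of $\tauMinDiff$ as the minimum positive difference between lengths of two $u$-$t$ paths (applied with $u = v$), any strictly positive gap would be at least $\tauMinDiff$. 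Therefore the gap must be zero, i.e.\ $\tau(P') = \tau(P)$, which means $P'$ is a physical shortest $v$-$t$ path.

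There is no real obstacle here; the only subtlety is making sure to apply the hypothesis and the definitions in the right order, namely to first convert the volume bound into a bound on $\ell_v(\theta)$ via a concrete physical shortest path, and then to use $c_e \geq \tau_e$ together with the definition of $\tauMinDiff$ to force any active path to achieve the physical minimum. The argument does not use any structural properties of IDE flows beyond the definitions of active paths and instantaneous travel times, and it carries over verbatim as long as the network contains at least one $v$-$t$ path (which is assumed throughout the paper).
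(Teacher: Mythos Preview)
Your proof is correct and follows essentially the same approach as the paper: both bound the instantaneous length of a physical shortest path $P$ by $\tau(P)+\tauMinDiff$ via $q_e(\theta)\le\edgeLoad[e](\theta)$ and $\nu_e\ge\nu_{\min}$, then use $c_e(\theta)\ge\tau_e$ on the active path $P'$ together with the definition of $\tauMinDiff$ to force $\tau(P')=\tau(P)$. The only cosmetic difference is that you route through $\ell_v(\theta)$ explicitly, whereas the paper writes the inequality chain $\sum_{e\in P'}\tau_e\le\sum_{e\in P'}c_e(\theta)\le\sum_{e\in P}c_e(\theta)$ directly.
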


\begin{proof}
	Let $P$ be a physically shortest $v$-$t$ path and $P'$ an active $v$-$t$ path. Then we have 
		\ifarxiv
		\begin{align*} 
			\sum_{e \in P'}\tau_e \leq \sum_{e \in P'}c_e(\theta) &\leq \sum_{e \in P}c_e(\theta) = \sum_{e \in P}\tau_e + \sum_{e \in P}\frac{q_e(\theta)}{\nu_e} \leq \sum_{e \in P}\tau_e + \sum_{e \in P}\frac{\edgeLoad[e](\theta)}{\nu_{\min}} < \sum_{e \in P}\tau_e + \tauMinDiff.
		\end{align*}
		\fi\ifspringer
		\begin{align*} 
			\sum_{e \in P'}\tau_e \leq \sum_{e \in P'}c_e(\theta) &\leq \sum_{e \in P}c_e(\theta) = \sum_{e \in P}\tau_e + \sum_{e \in P}\frac{q_e(\theta)}{\nu_e} \\
					&\leq \sum_{e \in P}\tau_e + \sum_{e \in P}\frac{\edgeLoad[e](\theta)}{\nu_{\min}} < \sum_{e \in P}\tau_e + \tauMinDiff.
		\end{align*}
		\fi
	This implies
		$0 \leq  \sum_{e \in P'}\tau_e - \sum_{e \in P}\tau_e < \tauMinDiff$,
	and therefore $\sum_{e \in P'}\tau_e = \sum_{e \in P}\tau_e$ as $\tauMinDiff$ is the smallest nonzero distance between two physical path lengths. Thus, $P'$ is also a shortest path w.r.t. the physical transit times $\tau$.
\end{proof}

\begin{cor}\label{cor:TerminationIfTotalVolumeLess1}
	Let $f$ be an IDE flow with $\edgeLoad(\hat{\theta}) < \tauMinDiff \cdot \nu_{\min}$ for some $\hat{\theta} \geq \theta_0$. Then, $f$ terminates.
\end{cor}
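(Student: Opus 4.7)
The plan is to show that after time $\hat\theta$ the IDE flow only enters edges lying on physically shortest paths to $t$, which form an acyclic subgraph; we can then invoke \Cref{lemma:TerminationAcyclicNetworks} to conclude termination.

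First, I would use monotonicity of the total edge load. Since $\hat\theta \geq \theta_0$, no further network inflow enters after time $\hat\theta$, so applying \eqref{eq:FlowVolumeReduction} with $W = V$ (equivalently \eqref{eq:GeIsTotalFlowInGraphAllNodes} together with non-negativity of $Z'$) gives $\edgeLoad(\theta) \leq \edgeLoad(\hat\theta) < \tauMinDiff \cdot \nu_{\min}$ for all $\theta \geq \hat\theta$. In particular, for every node $v \in V$ and every physical shortest $v$-$t$ path $P$, we have $\sum_{e \in P}\edgeLoad[e](\theta) \leq \edgeLoad(\theta) < \tauMinDiff \cdot \nu_{\min}$. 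By \Cref{lemma:ShortestPathsWithLowVolumeAreActive}, every active $v$-$t$ path at time $\theta$ must then be a physical shortest $v$-$t$ path.

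Next, let $H \subseteq E$ be the set of edges that lie on some physical shortest $u$-$t$ path (for any $u \in V$). Since $\tau_e > 0$ for all $e$, any edge $uv \in H$ strictly decreases the physical distance to $t$, so $H$ forms an acyclic subgraph. The IDE condition \eqref{eq:Cont-FlowDefProperties-OnlyUseSP} combined with the previous step implies that for every edge $e \notin H$ and every $\theta \geq \hat\theta$ we have $f_e^+(\theta) = 0$. Hence for such $e$, $F_e^+$ is constant after $\hat\theta$, and by \Cref{lemma:FlowOnEdgesLeavesFastVar} applied with $\lambda = \edgeLoad[e](\hat\theta)$ there exists a time $\theta_e \geq \hat\theta$ with $\edgeLoad[e](\theta) = 0$ for all $\theta \geq \theta_e$. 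Taking $\theta^* := \max_{e \notin H} \theta_e$, from time $\theta^*$ onward all flow lives on edges of $H$.

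Finally, I would mimic the inductive argument of \Cref{lemma:TerminationAcyclicNetworks} applied to $H$: order the nodes by decreasing physical distance to $t$ (so that $t$ is last and all edges of $H$ respect this order), and show by induction on this order that for each $v$ there is a time $\theta_v$ past which the total flow volume on $H$-edges preceding $v$ is zero. The inductive step is exactly the same as in \Cref{lemma:TerminationAcyciclicNetworks:Claim} and \Cref{lemma:TerminationAcyciclicNetworks:Claim2}: once all $H$-edges earlier in the order carry zero flow, no new flow can enter $H$-edges leaving the current node (there is no external inflow after $\theta_0$ and no flow arrives from edges outside $H$ after $\theta^*$), and then \Cref{lemma:FlowOnEdgesLeavesFastVar} drains the current edge. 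Taking $v = t$ yields $\edgeLoad(\theta_t) = 0$, proving termination.

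The main obstacle is the bookkeeping in the final step: one has to be careful that after $\theta^*$ the $H$-restricted flow genuinely behaves like a feasible flow over time on an acyclic network with no external inflow, so that the argument of \Cref{lemma:TerminationAcyclicNetworks} applies verbatim. This follows because all inflows originate either from the already-drained non-$H$ edges or from the (now inactive) source terms, both of which contribute nothing after $\theta^*$.
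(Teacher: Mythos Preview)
Your argument is correct and follows the same route as the paper: use \eqref{eq:FlowVolumeReduction} to keep the total load below $\tauMinDiff\cdot\nu_{\min}$, invoke \Cref{lemma:ShortestPathsWithLowVolumeAreActive} to force all active paths onto the acyclic subgraph of physical shortest paths, and then finish with the acyclic termination argument of \Cref{lemma:TerminationAcyclicNetworks}. The only difference is that the paper simply writes ``the flow only uses a time-independent acyclic subgraph of $G$; thus, by \Cref{lemma:TerminationAcyclicNetworks} the flow terminates,'' whereas you spell out the two implicit steps the paper leaves to the reader (first draining the non-$H$ edges via \Cref{lemma:FlowOnEdgesLeavesFastVar}, then rerunning the induction of \Cref{lemma:TerminationAcyclicNetworks} on $H$); this extra care is justified since $G$ itself is not acyclic, but it does not constitute a different approach.
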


\begin{proof}
	By \eqref{eq:FlowVolumeReduction} we also have $\edgeLoad(\theta) \leq \edgeLoad(\hat{\theta}) < \tauMinDiff\cdot\nu_{\min}$ for all $\theta \geq \hat{\theta}$. So from \Cref{lemma:ShortestPathsWithLowVolumeAreActive} we know that after $\hat{\theta}$ only shortest paths can be active, and therefore the flow only uses a time independent acyclic subgraph of $G$. Thus, by \Cref{lemma:TerminationAcyclicNetworks} the flow terminates.
\end{proof}

\begin{theorem}\label{thm:Termination_SingleSink}
	For multi-source single-sink networks, any IDE flow $f$ with finitely lasting bounded network inflow rates $u_i$ terminates.
\end{theorem}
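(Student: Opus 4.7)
The plan is to reduce to \Cref{cor:TerminationIfTotalVolumeLess1}: it suffices to exhibit a time $\hat\theta \geq \theta_0$ at which $\edgeLoad(\hat\theta) < \tauMinDiff \cdot \nu_{\min}$, for then only physical shortest paths are active afterwards and the flow terminates by \Cref{lemma:TerminationAcyclicNetworks}. Note that after $\theta_0$ the external inflow vanishes, so by \Cref{eq:GeIsTotalFlowInGraphAllNodes} the total volume $\edgeLoad(\theta)$ equals the constant total injected volume minus $Z(\theta)$, and by \eqref{eq:FlowVolumeReduction} (with $W=V$) it is non-increasing. Thus it suffices to show that $\edgeLoad(\theta) \to 0$, since then $\edgeLoad$ eventually drops below $\tauMinDiff \cdot \nu_{\min}$.

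I would argue by contradiction, assuming $\edgeLoad(\theta) \geq \tauMinDiff \cdot \nu_{\min}$ for every $\theta \geq \theta_0$, and derive that $Z(\theta) \to \infty$, contradicting the finiteness of total inflow. The key structural fact about single-sink IDE is that whenever an edge $e=vw$ is active at time $\theta$, we have $\ell_v(\theta) = c_e(\theta) + \ell_w(\theta) > \ell_w(\theta)$, so flow crossing an active edge always enters a node with strictly smaller current label. In particular, at any $\theta$ the active edges form a directed subgraph in which $t$ is reachable from every node carrying positive inflow, and this property is time-independent regarding reachability sets. I would use this together with \Cref{lemma:FlowOnEdgesLeavesFastVar} to propagate flow from ``interior'' nodes down to $t$.

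Concretely, I would mirror the two-claim structure of \Cref{lemma:TerminationAcyclicNetworks} but induct on the physical distance $d_\tau(v,t)$ to $t$ in $G$ (which is finite since every commodity has an $s_i$-$t$ path, and we may restrict to the reachability component of $t$). For a node $v$ with $d_\tau(v,t)=0$ (namely $v=t$) the base case is trivial. For the induction step, suppose that for all nodes $w$ with $d_\tau(w,t) < d_\tau(v,t)$ the total flow load on the set of edges ``beyond'' $w$ drops below any given positive threshold eventually. Because of the IDE property, any flow leaving $v$ enters an edge $e=vw'$ with strictly smaller $\ell_{w'}(\theta) < \ell_v(\theta)$, so the outflow at $v$ feeds into edges whose heads are closer to $t$ in the current label order; by \Cref{lemma:FlowOnEdgesLeavesFastVar} applied edge by edge and the induction hypothesis, one can conclude that the edge load just ``above'' $v$ must also vanish asymptotically.

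The main obstacle I expect is precisely the presence of cycles (as in \Cref{ex:IDEComputation}): unlike the acyclic case, flow out of a node $v$ can in principle feed back into $v$ via a cycle, so one cannot simply use a topological order. The way I would handle this is to exploit two facts together: (i) the labels $\ell_{w'}$ on the heads of active edges at $v$ are strictly below $\ell_v$, so one full trip around a cycle must traverse at least one edge along which the label strictly increases, which is impossible for active edges at a single time snapshot; and (ii) whenever such a label-increasing edge becomes active, it is because queues on the ``shorter'' alternative have grown, which only consumes (not creates) volume from $\edgeLoad$. Combining these with the monotonicity of $\edgeLoad$ after $\theta_0$, one shows that the recurrent flow through cycles is bounded by the decaying quantity $\edgeLoad(\theta)-L$ (where $L$ is the hypothetical positive limit), forcing positive net outflow into $t$ of rate bounded away from zero for all large $\theta$, which contradicts $Z(\theta) \leq \sum_i U_i(\infty) < \infty$.
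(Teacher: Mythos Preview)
Your proposal has the right target (reduce to \Cref{cor:TerminationIfTotalVolumeLess1}) but contains a genuine gap in the inductive argument. You induct on the physical distance $d_\tau(v,t)$, yet the only structural fact you invoke about IDE flow is that active edges decrease the \emph{current label} $\ell$. These two orders need not coincide: an active edge $vw'$ guarantees $\ell_{w'}(\theta)<\ell_v(\theta)$, but says nothing about $d_\tau(w',t)$ versus $d_\tau(v,t)$. Thus when you write ``the outflow at $v$ feeds into edges whose heads are closer to $t$ in the current label order; by \Cref{lemma:FlowOnEdgesLeavesFastVar} applied edge by edge and the induction hypothesis, one can conclude\dots'', the induction hypothesis (stated for nodes of smaller \emph{physical} distance) simply does not apply to $w'$. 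Your attempted repair in points (i) and (ii) does not close this: observation (i) only rules out active cycles at a \emph{fixed} time snapshot, which is irrelevant since particles traverse a cycle over a time interval during which labels change; and (ii) together with the final sentence (``recurrent flow through cycles is bounded by $\edgeLoad(\theta)-L$'') is asserted without any mechanism---growing queues redistribute volume but do not consume it.

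The missing idea is precisely the bridge between physical distances and active edges that you never establish: \Cref{lemma:ShortestPathsWithLowVolumeAreActive}. The paper's proof builds up a set $W$ (ordered by $d_\tau(\cdot,t)$) with the invariant that \emph{each individual} edge load inside $E(W)$ is eventually below $\tauMinDiff\nu_{\min}/|E|$. This per-edge smallness, via \Cref{lemma:ShortestPathsWithLowVolumeAreActive}, forces every active edge out of $W$ to stay in $E(W)$, so after time $\theta_W$ no IDE flow escapes $W$ through $\delta_W^+$. That is what breaks the cycle problem: flow entering $W$ through any boundary edge $e\in\delta_W^-$ is trapped, so $F_e^-$ is bounded above (by $Z$ plus constants), and hence the load on $e$ cannot stay bounded away from zero infinitely often (else \Cref{lemma:FlowOnEdgesLeavesFastVar} forces $F_e^-\to\infty$). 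Your proposal never secures any analogue of ``flow cannot leave $W$'', and without it the induction cannot proceed.
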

\begin{proof}
	Let $W \subseteq V$ be a subset of nodes with the following properties:
	\begin{enumerate}
		\item For every $w \in W$, all physical shortest $w$-$t$ paths only use edges in $E(W)$.
		\item There is a $\theta_W$ such that for all $\theta \geq \theta_W$ and $e \in E(W)$, we have $\edgeLoad[e](\theta) < \frac{\tauMinDiff\cdot \nu_{\min}}{\abs{E}}$.
	\end{enumerate}
	We show that for every such $W \neq V$, there exists a node $v \in V\setminus W$ such that $W \cup \set{v}$ also has the two properties.
	Since $W = \set{t}$ satisfies the two properties this shows that $W = V$ exhibits those as well and, in particular, there exists some time $\theta_V$ with $\edgeLoad[e](\theta_V) < \frac{\tauMinDiff\cdot\nu_{\min}}{\abs{E}}$ for all edges $e \in E$, and therefore $\sum_{e \in E}\edgeLoad[e](\theta_V) < \tauMinDiff\cdot\nu_{\min}$. Hence, by \Cref{cor:TerminationIfTotalVolumeLess1}, $f$ terminates.
  
	Let $W \subsetneq V$ be a set of nodes fulfilling both properties and $v \in V\setminus W$ be the node with the shortest distance to $t$ with respect to $\tau$ of all nodes in $V \setminus W$. Then, all physically shortest $v$-$t$ paths only use edges from $E(W \cup \set{v})$, so the first property holds for $W \cup \set{v}$. 
	Since the second property holds for $W$ we know that for all $\theta \geq \theta_W$, we have 
		\[\sum_{e \in E(W)}q_e(\theta) \leq \sum_{e \in E(W)}\edgeLoad[e](\theta) < \frac{\abs{E(W)}\cdot\tauMinDiff\cdot\nu_{\min}}{\abs{E}} \leq \tauMinDiff\cdot\nu_{\min}.\]
	\Cref{lemma:ShortestPathsWithLowVolumeAreActive} implies that for every node $w \in W$, all active edges leaving $w$ have to be in $E(W)$, i.e., $\delta^+_w \cap E_\theta\subseteq E(W)$. Since $f$ is an IDE flow this implies $f^+_e(\theta) = 0$ for all $e \in \delta^+_W$, and thus $F^+_e(\theta) = F^+_e(\theta_W)$ for all those edges.
	We now assume by contradiction that the second property does not hold for $W \cup \set{v}$, so there is an edge $e \in \delta^+_v \cap \delta^-_W$ and a sequence of times $\theta_1 < \theta_2 < \dots$ with $\edgeLoad[e](\theta_k) \geq \frac{\tauMinDiff\cdot\nu_{\min}}{\abs{E}}$ for all $k \in \IN$ and $\theta_k \to \infty$ for $k \to \infty$. From \Cref{lemma:FlowOnEdgesLeavesFastVar}, we get times $\theta_k' \geq \theta_k$ with $F^-_e(\theta_k') - F^-_e(\theta_k) \geq \edgeLoad[e](\theta_k) \geq \frac{\tauMinDiff\cdot\nu_{\min}}{\abs{E}}$. By possibly taking a subsequence, we can assume $\theta_{k-1}' \leq \theta_{k}$ for all $k$, and thus,
		\ifarxiv
		\begin{align*}
			F^-_e(\theta_k') \geq \edgeLoad[e](\theta_k) + F^-_e(\theta_k) &\geq \edgeLoad[e](\theta_k) + F^-_e(\theta'_{k-1}) \geq \dots \geq \sum_{j=1}^{k}\edgeLoad[e](\theta_j) \geq k\cdot \frac{\tauMinDiff\cdot\nu_{\min}}{\abs{E}}.
		\end{align*}
		\fi\ifspringer
		\begin{align*}
			F^-_e(\theta_k') \geq \edgeLoad[e](\theta_k) + F^-_e(\theta_k) &\geq \edgeLoad[e](\theta_k) + F^-_e(\theta'_{k-1}) \geq \dots \\
					&\geq \sum_{j=1}^{k}\edgeLoad[e](\theta_j) \geq k\cdot \frac{\tauMinDiff\cdot\nu_{\min}}{\abs{E}}.
		\end{align*}
		\fi
	Hence $F^-_e(\theta_k)$ tends to infinity as $k$ grows larger. On the other hand \eqref{eq:GeIsTotalFlowInGraph} states that
	\[F^-_e(\theta_k) = \sum_{\mathclap{e' \in E(W)}}\edgeLoad[e'](\theta_k) + \sum_{\mathclap{e' \in \delta^+_W}}F_{e'}^+(\theta_k) + Z(\theta_k) - \sum_{\mathclap{i \in I: s_i \in W}} U_i(\theta_k) - \sum_{\mathclap{e' \in \delta^-_W\setminus\set{e}}}F_{e'}^-(\theta_k)\]
	which in turn is bounded from above as all positive summands are bounded as well:
	\begin{itemize}
		\item $\sum_{e' \in E(W)}\edgeLoad[e'](\theta_k) \leq \frac{\abs{E(W)}}{\abs{E}}$ since $\theta_k \geq \hat{\theta}$ and $W$ has the second property.
		\item $\sum_{e' \in \delta^+_W}F_{e'}^+(\theta_k) = \sum_{e' \in \delta^+_W}F_{e'}^+(\hat{\theta})$ as shown above.
		\item $Z(\theta_k) \overset{\text{\eqref{eq:GeIsTotalFlowInGraphAllNodes}}}{\leq} \sum_{i \in I}U_i(\theta_k) = \sum_{i \in I} U_i(\theta_0) < \infty$.  
	\end{itemize}
	This is a contradiction. So the second property must also hold for $W \cup\set{v}$, which concludes the proof.
\end{proof}

\section{Existence of IDE Flow\revised{s} in Multi-Sink Networks}\label{sec:ExistenceMultiSink}

We now turn to the general model with multiple sinks. In this section, we will give two different proofs for the existence of IDE flows in such networks. The first one is similar to the proof for single-sink networks in \Cref{sec:ExistenceSingleSink} using the concept of thin flows, and thus, constructive in the same sense. 
The second proof uses an infinite dimensional variational inequality. While not constructive, it allows for much more general network inflow rate functions $u_i$ (and interestingly, avoids the need for a limit argument to extend the flow for all times altogether).

\subsection{Right-Constant Network Inflow Rates}
First, we want to show that IDE flows exist in networks with multiple sinks, still under the assumption that the network inflow rates $u_i$ are right-constant. As before, we show that for a given IDE flow up to some point in time $\theta_k$ it is possible to extend it by some $\varepsilon > 0$. In contrast to the case of a single sink, we need to determine all inflow rates $f_{i,e}^+$ and the slopes $a_{i, k}$ of the current shortest path distances at the same time.

First of all, we denote the current inflow at a node $v$ for commodity $i$ by
\[b_{i, v}^-(\theta_k) \coloneqq \sum_{e \in \delta_v^-} f_{i, e}^-(\theta_k) + \mathds{1}_{v = s_i} \cdot u_i(\theta_k),\]
where $\mathds{1}_{v = s_i}=1$ if $v=s_i$ and $\mathds{1}_{v = s_i}=0$ otherwise. 
With this we can extend the idea of IDE thin flows (inspired by the \emph{thin flows with resetting} for dynamic equilibria; see \cite{Koch11}, \cite{CominettiCL15}) introduced in \Cref{remark:IDE_thin_flows_single_sink} to the multi-commodity setting.

\begin{defn}[IDE thin flows]
For a given IDE flow up to time $\theta_k$ we call a pair of vectors $(x, a) \in \IR_{\geq 0}^{I \times E} \times \IR^{I \times V}$ an \emph{IDE thin flow} if it satisfies:
  \ifarxiv
  \begin{alignat*}{2}
  \sum_{e \in \delta_v^+} x_{i, e} &= b_{i, v}^-(\theta_k)&& \text{ for all } i \in I \text{ and } v \in V \setminus \set{t_i},\label{eq:thin_flow_conservation}\tag{TF1}\\
  x_{i, e} & = 0&& \text{ for all } i \in I \text{ and } e \in E \setminus E^i_{\theta_k},\label{eq:thin_flow_zero_on_inactive}\tag{TF2}\\
  a_{i,t_i} &= 0 && \text{ for all } i \in I,\label{eq:thin_flow_t}\tag{TF3}\\ 
  a_{i, v} &= \min_{e = vw \in E^i_{\theta_k}} \frac{g_e(\sum_{j \in I} x_{j, e})}{\nu_e} + a_{i, w} \quad 
  && \text{ for all } v \in V\setminus\Set{t_i},\label{eq:thin_flow_min}\tag{TF4}\\
  a_{i, v} &= \frac{g_e(\sum_{j \in I} x_{j, e})}{\nu_e} + a_{i, w}
  && \text{ for all } e = vw \in E^i_{\theta_k} \text{ with } x_{i, e} > 0,\label{eq:thin_flow_tight}\tag{TF5}
  \end{alignat*}
  \fi\ifspringer
    \begin{alignat*}{2}
  \sum_{\revised{e \in \delta_v^+}} x_{i, e} &= b_{i, v}^-(\theta_k)&& \text{ f.a. } i \in I \text{ and } v \in V \setminus \set{t_i},\label{eq:thin_flow_conservation}\tag{TF1}\\
  x_{i, e} & = 0&& \text{ f.a. } i \in I \text{ and } e \in E \setminus E^i_{\theta_k},\label{eq:thin_flow_zero_on_inactive}\tag{TF2}\\
  a_{i,t_i} &= 0 && \text{ f.a. } i \in I,\label{eq:thin_flow_t}\tag{TF3}\\ 
  a_{i, v} &= \min_{\substack{e \in E^i_{\theta_k} \\ e = vw}} \frac{g_e(\sum_{j \in I} x_{j, e})}{\nu_e} + a_{i, w} \quad 
  && \text{ f.a. } v \in V\setminus\Set{t_i},\label{eq:thin_flow_min}\tag{TF4}\\
  a_{i, v} &= \frac{g_e(\sum_{j \in I} x_{j, e})}{\nu_e} + a_{i, w}
  && \text{ f.a. } e = vw \in E^i_{\theta_k} \text{ with } x_{i, e} > 0,\label{eq:thin_flow_tight}\tag{TF5}
  \end{alignat*}
  \fi
  where \[g_e(x_e) \coloneqq \begin{cases}
  x_e- \nu_e & \text{ if } q_e(\theta_k) > 0,\\
  \max\Set{x_e- \nu_e, 0} & \text{ if } q_e(\theta_k) = 0.
  \end{cases}\]
\end{defn}

As a first step, we show that such an IDE thin flow always exists for every reasonable network (i.e., every node $v$ with $b^-_{i, v}(\theta_k) > 0$ can reach $t_i$ within $E^i_{\theta_k}$) and all current inflow vectors $b^-(\theta_k)$. We do this by utilizing the following fixed point theorem by Kakutani \cite{Kakutani1941}:
\begin{theorem}[Kakutani's Fixed Point Theorem] \label{thm:kakutani}
Let $K$ be a compact, convex and non-empty subset of $\IR^n$, $n \in \IN$, and $\Gamma\colon K \to 2^K$, such that for every
$x \in K$ the image $\Gamma(x)$ is non-empty and convex and the set $\Set{(x, y) | x \in K, y \in \Gamma(x)}$ is
closed. Then there exists a fixed point $x^*$ of $\Gamma$, i.e.,~$x^* \in \Gamma(x^*)$.
\end{theorem}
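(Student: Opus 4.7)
The plan is to deduce Kakutani's theorem from Brouwer's fixed point theorem by approximating the set-valued map $\Gamma$ with continuous single-valued selections on finer and finer triangulations of $K$. Concretely, I would fix a sequence of simplicial triangulations $(\Tc_k)_{k \in \IN}$ of $K$ with mesh tending to $0$. For each $k$, I would define a continuous function $f_k\colon K \to K$ by selecting, for every vertex $v$ of $\Tc_k$, some $f_k(v) \in \Gamma(v) \subseteq K$ (using non-emptiness of $\Gamma(v)$), and extending $f_k$ affinely inside each simplex. Convexity of $K$ keeps the image in $K$ and guarantees continuity of $f_k$, so Brouwer's fixed point theorem yields some $x_k = f_k(x_k) \in K$.

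Using compactness of $K$, I would pass to a convergent subsequence $x_k \to x^*$. Each $x_k$ lies in some simplex of $\Tc_k$, so it can be written in barycentric coordinates $x_k = \sum_{i=0}^{d} \lambda_i^k v_i^k$ with $\lambda_i^k \geq 0$ and $\sum_i \lambda_i^k = 1$, and the fixed-point equation becomes $x_k = \sum_{i=0}^{d} \lambda_i^k f_k(v_i^k)$. A finite diagonal argument, based on compactness of $K$ and of the standard unit simplex, refines the subsequence so that $\lambda_i^k \to \lambda_i^*$ and $f_k(v_i^k) \to y_i$ for each $i \in \{0,\dots,d\}$. Because the mesh tends to $0$, every vertex $v_i^k \to x^*$.

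The closing step is precisely where the two distinctive Kakutani hypotheses enter. Applied to the convergent pairs $(v_i^k, f_k(v_i^k)) \to (x^*, y_i)$, the closed-graph condition forces $y_i \in \Gamma(x^*)$ for each $i$, and convexity of $\Gamma(x^*)$ then yields $x^* = \sum_{i=0}^{d} \lambda_i^* y_i \in \Gamma(x^*)$, as desired. The main obstacle I anticipate is the multi-index bookkeeping in the limit passage: one must arrange the subsequence extractions so that the closed-graph property can be invoked separately at each of the finitely many vertex limits $y_i$ before convexity is used to combine them, and verify that a fixed bound on $d$ (for instance, the dimension of the ambient $\IR^n$) makes the simultaneous extractions genuinely finite rather than depending on $k$.
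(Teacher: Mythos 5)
The paper offers no proof of this statement at all: it is imported verbatim as a classical result and cited to Kakutani's 1941 paper, then used as a black box in the existence proof for IDE thin flows. Your argument is the standard (indeed, essentially Kakutani's original) derivation from Brouwer's theorem via piecewise-affine selections on triangulations, and the core of it is correct: the fixed-point identity $x_k=\sum_i\lambda_i^k f_k(v_i^k)$, the finitely many subsequence extractions (the number of vertices of a carrier simplex is at most $n+1$, independent of $k$), the closed-graph condition applied to each pair $(v_i^k,f_k(v_i^k))\to(x^*,y_i)$, and finally convexity of $\Gamma(x^*)$ to recombine. The one step you should not wave at is the very first one: a general compact convex $K\subseteq\IR^n$ (a disk, say) does not admit affine simplicial triangulations with small mesh, so "fix a sequence of triangulations of $K$" needs justification. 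The standard repair is to enclose $K$ in a simplex $\Delta$, triangulate $\Delta$, and apply your argument to $x\mapsto\Gamma(r(x))$ where $r\colon\Delta\to K$ is the (continuous) nearest-point retraction; the resulting fixed point $x^*\in\Gamma(r(x^*))\subseteq K$ forces $r(x^*)=x^*$, so $x^*\in\Gamma(x^*)$. (One must also check that composing with $r$ preserves the closed-graph hypothesis, which is immediate from continuity of $r$.) With that amendment your proof is complete; it is simply a proof the paper never attempts, since the authors take the theorem as given.
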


With this theorem we can prove the following lemma:
\begin{lemma}
For all possible queues $q(\theta_k) \in \IR_{\geq 0}^E$, acyclic edge sets $E^i_{\theta_k} \subseteq E$ and all current inflow rates $b^-(\theta_k) \in \IR_{\geq 0}^{I \times V}$, such that every node $v$ with $b^-_{i, v}(\theta_k) > 0$ can reach $t_i$ within $E^i_{\theta_k}$, there exists an IDE thin flow $(x, a)$.
\end{lemma}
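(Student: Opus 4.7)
The plan is to invoke Kakutani's fixed point theorem (\Cref{thm:kakutani}) on the compact convex set of multi-commodity flow splits, treating the label vector $a$ as a continuous auxiliary quantity derived from $x$ via shortest paths in the acyclic subgraph. For each commodity $i \in I$ let
\[
X_i := \Set{x_i \in \IR_{\geq 0}^E | \sum_{e \in \delta_v^+} x_{i,e} = b_{i,v}^-(\theta_k) \text{ for all } v \neq t_i \text{ and } x_{i,e} = 0 \text{ for all } e \notin E^i_{\theta_k}},
\]
which is a non-empty bounded polytope (non-emptiness uses the reachability hypothesis to route each supply $b_{i,v}^-(\theta_k)$ from the source $v$ to $t_i$ through the acyclic graph $E^i_{\theta_k}$). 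Set $K := \prod_{i \in I} X_i$; it is non-empty, compact and convex.

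For $x \in K$ set $\alpha_e(x) := g_e\bigl(\sum_{j \in I} x_{j,e}\bigr) / \nu_e$, which is continuous in $x$ since $g_e$ is, and let $a_{i,v}(x)$ denote the shortest-path distance from $v$ to $t_i$ in the acyclic graph $E^i_{\theta_k}$ with edge weights $\alpha_e(x)$ (with the convention $a_{i,t_i}(x) = 0$). This is well-defined on the set of nodes that can reach $t_i$ in $E^i_{\theta_k}$; nodes outside this set are irrelevant because the reachability assumption forces $b_{i,v}^-(\theta_k) = 0$ there, and their labels may be set arbitrarily. By continuity of $\alpha$ and the fact that shortest-path distance in a fixed DAG is the minimum of finitely many path-sums of edge weights, $a_{i,v}(\cdot)$ is continuous on $K$. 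I would then define the set-valued map
\[
\Phi(x) := \Set{x' \in K | x'_{i,e} > 0 \Rightarrow a_{i,v}(x) = \alpha_e(x) + a_{i,w}(x) \text{ for all } i \in I \text{ and } e = vw \in E^i_{\theta_k}},
\]
so that $\Phi(x)$ contains exactly the flow splits supported on edges that are tight for the shortest-path system induced by $x$.

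To apply \Cref{thm:kakutani} I would verify its hypotheses for $\Phi$. For non-emptiness, note that every shortest $v$-to-$t_i$ path in $E^i_{\theta_k}$ consists entirely of tight edges, so one constructs $x' \in \Phi(x)$ by routing each supply along any such shortest path. Convexity of $\Phi(x)$ is immediate as it is the intersection of $K$ with the linear equalities $x'_{i,e} = 0$ for every edge not tight under $x$. For the closed-graph property, if $(x^n, y^n) \to (x^*, y^*)$ with $y^n \in \Phi(x^n)$, continuity gives $\alpha(x^n) \to \alpha(x^*)$ and $a(x^n) \to a(x^*)$; whenever $y^*_{i,e} > 0$ we have $y^n_{i,e} > 0$ eventually, so $e$ is tight under $x^n$ and by taking limits tight under $x^*$. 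Kakutani then supplies a fixed point $x^* \in \Phi(x^*)$, and $(x^*, a(x^*))$ satisfies TF1, TF2 by membership in $K$, TF3, TF4 by the definition of $a$, and TF5 by the defining condition of $\Phi$.

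The most delicate step will be the closed-graph verification, which reduces to showing that the set-valued ``tight edge'' map is upper semicontinuous as $x$ varies; this in turn relies on the continuity of $a_{i,v}(x)$, which is standard since shortest-path distances in a fixed DAG are a minimum over a finite set of path-sums of continuous edge weights. Some bookkeeping is required for nodes that cannot reach $t_i$ in $E^i_{\theta_k}$, but the reachability hypothesis neutralizes any issues there. A secondary routine check is that, although $g_e$ may be negative when $q_e(\theta_k) > 0$, the acyclicity of $E^i_{\theta_k}$ keeps the shortest-path system well-defined regardless.
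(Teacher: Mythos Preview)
Your proposal is correct and follows essentially the same route as the paper: both define $K$ as the polytope of feasible splits satisfying (TF1)--(TF2), derive the label vector $a(x)$ continuously from $x$ via shortest paths in the acyclic active-edge graph, define a set-valued map whose images are the splits supported only on tight edges (your $\Phi$ is the contrapositive phrasing of the paper's $\Gamma$), verify non-emptiness, convexity and the closed-graph property, and apply Kakutani. Your handling of nodes that cannot reach $t_i$ and the remark that acyclicity keeps shortest paths well-defined despite possibly negative edge weights are small clarifications beyond what the paper spells out, but the overall argument is the same.
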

\begin{proof}
For every vector $x \in \IR_{\geq 0}^{I \times E}$ satisfying \eqref{eq:thin_flow_conservation} and \eqref{eq:thin_flow_zero_on_inactive} \revised{there exist} uniquely defined node labels $a \in \IR_{\geq 0}^{I \times V}$ that fulfil \eqref{eq:thin_flow_t} and \eqref{eq:thin_flow_min}. Existence follows since $E^i_{\theta_k}$ is acyclic and the uniqueness follows from the fact that for every $v$ there is a $v$-$t_i$-path within $E^i_{\theta_k}$. This mapping $x \mapsto a$ is continuous. So the only difficult part is to satisfy \eqref{eq:thin_flow_tight}. 

Let $K$ be the set of $x$ vectors satisfying \eqref{eq:thin_flow_conservation} and \eqref{eq:thin_flow_zero_on_inactive}, i.e.,
\[K \coloneqq \Set{x \in \IR_{\geq 0}^{I \times E}| \begin{array}{ll}
\sum_{e \in \delta_v^+} x_{i, e} = b^-_{i, v}(\theta_k) &\text{for all } i \in I \text{ and } v \in V \setminus \set{t_i}\\[4pt]
\;\;\; \text{ and } x_{i, e} = 0 &\text{for all } i \in I \text{ and } e \in E \setminus E^i_{\theta_k}
\end{array}}.\]
Clearly, $K$ is compact, convex and non-empty.

We define a set-valued function $\Gamma: K \to 2^K$ as follows:
\[\Gamma(x) = \Set{y \in K: y_{i, e} = 0 \text{ for all } e \in E^i_{\theta_k} \text{ with } a_{i, v} < \frac{g_e\left(\sum_{j \in I} x_{j, e}\right)}{\nu_e} + a_{i, w}}\]
where $a$ are the label corresponding to $x$. Then $\Gamma(x)$ is non-empty and convex. For non-emptiness note that every node $v$ with $b^-_{i, v} > 0$, except $t_i$, has at least one outgoing edge with $a_{i, v} = \nicefrac{g_e\big(\sum_{j \in I} x_{j, e}\big)}{\nu_e} + a_{i, w}$, so $y$ can send everything into this edge. 
Convexity is clear as well since $x$ determines which edges can be used and which not and these are fixed within $\Gamma(x)$.

Finally, we show that $\set{(x, y)| x \in K, y \in \Gamma(x)}$ is a closed set. \revised{Therefore,} let $(x^n, y^n)_{n \in \IN}$ be a sequence in this set that converges in $\IR^{I \times E}\times \R^{I \times E}$. 
Since $K$ is compact, both sequences separately converge to some points $x$ and $y$ in $K$. 
Let $(a^n)_{n \in \IN}$ be the sequence of associated node labels of $x^n$ and $a$ the node label of $x$. Since $x \mapsto a$ is continuous we have $a = \lim_{n \to \infty} a^n$.
We need to show that $y \in \Gamma(x)$. Suppose for contradiction that there is a commodity $i \in I$ and an $e = vw \in E^i_{\theta_k}$ with $y_{i, e} > 0$ and $a_{i, v} <  \nicefrac{g_e\big(\sum_{j \in I} x_{j, e}\big)}{\nu_e} + a_{i, w}$. 
But since $g_e$ is continuous, there has to be an $n_0 \in \IN$ such that $y_{i, e}^n > 0$ and $a^n_{i, v} <  \nicefrac{g_e\big(\sum_{j \in I} x^n_{j, e}\big)}{\nu_e} + a^n_{i, w}$ for all $n \geq n_0$. 
This is a contradiction to $y^n \in \Gamma(x^n)$.

Hence, by Kakutani's fixed point theorem (\Cref{thm:kakutani}) there exists an $x^* \in K$ with $x^* \in \Gamma(x^*)$, which forms together with the associated node label $a^*$ an IDE thin flow.
\end{proof}

Consider an IDE flow $f$ up to time $\theta_k$ where the inflow rates $f_{i, e}^+$ are right-constant.
Due to the continuity of $q_e$ and $a_{i, v}$ we can determine the active arcs $E^i_{\theta_k}$, as well as, the current node inflows $b^-_{i, v}(\theta_k)$ since the feasibility conditions \eqref{eq:Cont-FlowDefProperties-OpAtCap} and \eqref{eq:FIFO} determine unique outflow rates $f_{i, e}^-(\theta_k)$ for given inflow rates $f_{i, e}^+$ from the past.

In order to extend $f$, we consider an IDE thin flow $(x, a)$ and extend the inflow rates and current shortest path distances for all $i \in I$, $e \in E$ and $v \in V$ by
\[f_{i, e}^+(\theta_k + \xi) \coloneqq x_{i,e} \quad \text{ and } \quad \ell_{i, v}(\theta_k + \xi) \coloneqq \ell_{i, v}(\theta_k) + \xi \cdot a_{i, v} \qquad \text{ for all } \xi \in [0, \alpha).\]
We call this extended flow over time an \emph{$\alpha$-extension}.
 
To ensure that we end up with an IDE up to time $\theta + \alpha$, the following requirements on the size $\alpha$ of the \emph{extension phase} $[\theta_k,\theta_k+\alpha)$ must be satisfied:

First of all, the queues can never be negative, and therefore, the phase ends as soon as a queue depletes:
\begin{equation}
 q_e(\theta_k) + \alpha \cdot \left(\sum_{j \in I} x_{j, e} -\nu_e\right) \geq 0  \qquad \text{ for all } e \in E \text{ with } q_e(\theta_k) > 0.
  \label{eq:alpha_resetting:ide}\end{equation}
 Furthermore, the phase ends as soon as an inactive edge gets active. Since queues can build up on inactive edges as well (due to flow of other commodities), we need to take into account the changing rate of a queue, as well. Hence, for all $i \in I$ and $e = vw \in E \setminus E^i_{\theta_k}$ we have:
  \begin{equation}
  \l_{i, v}(\theta_k)  + \alpha \cdot a_{i, v}  \leq  \tau_e + \frac{q_e(\theta_k)}{\nu_e} + \alpha \cdot \frac{g_e\left(\sum_{j \in I} x_{j, e}\right)}{\nu_e} + \l_{i, w}(\theta_k) + \alpha \cdot a_{i, w}.
    \label{eq:alpha_others:ide}
 \end{equation}
 Finally, the current node inflow should stay constant during a phase:
  \begin{equation}
  b^-_{i, v}(\theta_k + \xi) = b^-_{i, v}(\theta_k) \qquad \text{ for all } i \in I \text{ and } v \in V \setminus \set{t_i} \text{ and all } \xi \in [0, \alpha) \label{eq:alpha_b_constant:ide}
  \end{equation}
  We call $\alpha > 0$ \emph{feasible} if it satisfies \eqref{eq:alpha_resetting:ide}, \eqref{eq:alpha_others:ide} and \eqref{eq:alpha_b_constant:ide}.
  
  It is easy to see that such a feasible $\alpha > 0$ always exists since $\l_{i, v}(\theta_k) < \tau_e + \frac{q_e(\theta_k)}{\nu_e} + \l_{i, w}(\theta_k)$ for all $i \in I$ and $e = vw \in E \setminus E^i_{\theta_k}$. Furthermore, the functions $b^-_{i, v}$ are right-constant, since $f_{i, e}^+$ as well as $u_i$ are right-constant. Since $\tau_e > 0$ for all $e \in E$ we have that $b^-_{i, v}(\theta_k)$ is well-defined and constant on some small interval $[\theta_k, \theta_k + \varepsilon)$.

\begin{lemma}\label{lem:extending_multi_sink_IDEs}
Given an IDE flow up to time $\theta_k$, an IDE thin flow $(x, a)$ at time $\theta_k$ and a feasible $\alpha > 0$. Then the $\alpha$-extension is an IDE flow up to time $\theta_{k + 1} \coloneqq \theta_k + \alpha$ and the extended $\l$-functions denote the current shortest path distances. 
\end{lemma}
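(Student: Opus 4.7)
The plan is to verify three things in turn: first, that the $\alpha$-extension is a feasible flow over time; second, that on $[\theta_k,\theta_k+\alpha)$ the linearly extended labels $\ell_{i,v}(\theta_k)+\xi a_{i,v}$ really are the current shortest path distances; and third, that whenever $f^+_{i,e}(\theta)>0$ the edge $e$ is active at time $\theta$ with respect to those labels (which is exactly the IDE condition).

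For feasibility, note that flow conservation \eqref{eq:Cont-FlowDefProperties-FlowCons} at $\theta_k$ is just \eqref{eq:thin_flow_conservation} rewritten using the definition of $b^-_{i,v}(\theta_k)$; by condition \eqref{eq:alpha_b_constant:ide} these node inflow rates stay constant throughout the phase, and together with constant edge inflow rates $x_{i,e}$ this yields flow conservation on all of $[\theta_k,\theta_k+\alpha)$. The queue dynamics \eqref{eq:queue-dynamic} force $q_e$ to evolve with slope $g_e(\sum_j x_{j,e})$, which by condition \eqref{eq:alpha_resetting:ide} keeps $q_e(\theta)\geq 0$; from this the outflow rates $f^-_e$ are determined by \eqref{eq:Cont-FlowDefProperties-OpAtCap}, and the per-commodity split is pinned down by \eqref{eq:FIFO}.

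The main technical step is to prove, by induction on the $E^i_{\theta_k}$-acyclic order for each commodity $i$, that
\[
\ell_{i,v}(\theta_k+\xi)=\ell_{i,v}(\theta_k)+\xi\, a_{i,v}\quad\text{for all } \xi\in[0,\alpha),
\]
where the left-hand side is defined by the recursion \eqref{eq:current_shortest_path_distances} using the queue lengths of the extension. The base case $v=t_i$ is immediate from \eqref{eq:thin_flow_t}. For the induction step I would compute, for any $e=vw$,
\[
c_e(\theta_k+\xi)+\ell_{i,w}(\theta_k+\xi)=\bigl(\tau_e+\tfrac{q_e(\theta_k)}{\nu_e}+\ell_{i,w}(\theta_k)\bigr)+\xi\Bigl(\tfrac{g_e(\sum_j x_{j,e})}{\nu_e}+a_{i,w}\Bigr),
\]
using piecewise linearity of $q_e$ on the phase and the inductive formula for $\ell_{i,w}$. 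For $e\in E^i_{\theta_k}$ the constant term equals $\ell_{i,v}(\theta_k)$ and the slope is at least $a_{i,v}$ by \eqref{eq:thin_flow_min}, with equality achieved by the minimizer; for $e\notin E^i_{\theta_k}$ the constant term strictly exceeds $\ell_{i,v}(\theta_k)$ and condition \eqref{eq:alpha_others:ide} guarantees that the sum is still $\geq \ell_{i,v}(\theta_k)+\xi a_{i,v}$ throughout $[0,\alpha)$. Taking the minimum over $e\in\delta_v^+$ gives exactly $\ell_{i,v}(\theta_k)+\xi a_{i,v}$, completing the induction.

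Finally, the IDE property follows immediately: if $f^+_{i,e}(\theta)=x_{i,e}>0$ on the extension interval, then \eqref{eq:thin_flow_zero_on_inactive} gives $e=vw\in E^i_{\theta_k}$ and \eqref{eq:thin_flow_tight} gives $a_{i,v}=\tfrac{g_e(\sum_j x_{j,e})}{\nu_e}+a_{i,w}$, so the computation above shows $c_e(\theta)+\ell_{i,w}(\theta)=\ell_{i,v}(\theta)$ for every $\theta\in[\theta_k,\theta_k+\alpha)$, i.e.\ $e\in E^i_\theta$. The main obstacle I foresee is keeping the bookkeeping tidy between the two roles of the $E^i_{\theta_k}$-induction (it underwrites both the well-definedness of the labels through the acyclic order and the validity of the min in \eqref{eq:thin_flow_min}), and making sure the strict inequality for inactive edges at $\theta_k$ is preserved by \eqref{eq:alpha_others:ide} on the entire half-open interval.
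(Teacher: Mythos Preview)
Your overall structure matches the paper's proof: check feasibility via \eqref{eq:thin_flow_conservation} and \eqref{eq:alpha_b_constant:ide}, show the linearly extended labels are the true shortest-path distances, and then read off the IDE condition from \eqref{eq:thin_flow_tight}. The one real difference is that you organise the label argument as an induction on the $E^i_{\theta_k}$-acyclic order, and this is where a gap appears.

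That order only guarantees that for an \emph{active} edge $vw$ the head $w$ has already been processed. For an inactive edge $vw$, the node $w$ may well lie \emph{later} in the order (e.g.\ take $a\leftrightarrow b$ with both $a,b$ also pointing to $t$ and $\ell_{i,a}(\theta_k)<\ell_{i,b}(\theta_k)$; then $a\to b$ is inactive and $b$ comes after $a$). In that situation ``the inductive formula for $\ell_{i,w}$'' is not yet available, so you cannot replace the true $\ell_{i,w}(\theta_k+\xi)$ by $\ell_{i,w}(\theta_k)+\xi a_{i,w}$ when invoking \eqref{eq:alpha_others:ide}. Your closing remark about ``the two roles of the $E^i_{\theta_k}$-induction'' is exactly this issue.

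The paper avoids the problem by dropping the induction: it simply verifies that the \emph{extended} labels themselves satisfy \eqref{eq:current_shortest_path_distances} at every node (your computation already does this, with \eqref{eq:thin_flow_min} covering active edges and \eqref{eq:alpha_others:ide} covering inactive ones), and then uses that the Bellman system with strictly positive costs has a unique solution. If you prefer to keep an inductive flavour, you can first prove $\ell_{i,v}^{\text{true}}(\theta_k+\xi)\le \ell_{i,v}(\theta_k)+\xi a_{i,v}$ along active edges (where your order works), and then get the reverse inequality by looking at a node of smallest $\ell^{\text{true}}$ at which strict inequality would occur.
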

    
\begin{proof}
First note, that the feasibility conditions are satisfied, since the outflow rates $f_{i, e}^-$ are exactly defined that way. Furthermore, flow conservation holds since
\ifarxiv
\[
	\sum_{e \in \delta_v^+} f_{i, e}^+(\theta_k + \xi) = \sum_{e \in \delta_v^+} x_{i, e} = b^-_{i, v}(\theta_k) = b^-_{i, v}(\theta_k + \xi) = \sum_{e \in \delta_v^-} f_{i, e}^-(\theta_k + \xi) +  \mathds{1}_{v = s_i} \cdot u_i(\theta_k + \xi)
\]
\fi\ifspringer
\begin{align*}
	\sum_{e \in \delta_v^+} f_{i, e}^+(\theta_k + \xi) &= \sum_{e \in \delta_v^+} x_{i, e} = b^-_{i, v}(\theta_k) = b^-_{i, v}(\theta_k + \xi) \\
						&= \sum_{e \in \delta_v^-} f_{i, e}^-(\theta_k + \xi) +  \mathds{1}_{v = s_i} \cdot u_i(\theta_k + \xi)
\end{align*}
\fi
for all $v \in V \setminus \set{t_i}, i \in I$ and all $\xi \in [0, \alpha)$.

Next we show that the $\l$ labels satisfy \Cref{eq:current_shortest_path_distances}. Given a point in time $\theta_k + \xi$ with $\xi \in [0, \alpha)$ we have by \eqref{eq:queue-dynamic} applied on the total inflow rate $f_e^+(\theta_k + \xi)$ that \[q'_e(\theta_k + \xi) = \left\{\begin{array}{ll}
       f_e^+(\theta_k + \xi) - \nu_e & \text{ if } q_e(\theta_k + \xi) > 0,\\
       \max\Set{f_e^+(\theta + \xi) - \nu_e, 0} & \text{ else,}
       \end{array}\right\} = g_e\left(\textstyle{\sum_{j \in I} x_{j, e}}\right).\]
Note that we have $q_e(\theta_k + \xi) = q_e(\theta_k) + \xi \cdot g_e\left(\textstyle{\sum_{j \in I} x_{j, e}}\right)$ since $q'_e(\theta_k + \xi)$ is constant for $\xi \in [0, \alpha)$.
 
For non-active arcs $e = vw \notin E^i_{\theta_k}$ we have by \eqref{eq:alpha_others:ide} that
\ifarxiv
\begin{align*}\l_{i,v}(\theta_k + \xi) = \l_{i, v}(\theta_k) + \xi \cdot a_{i, v} &\leq \tau_e + \frac{q_e(\theta_k)}{\nu_e} + \xi \cdot \frac{g_e\left(\textstyle{\sum_{j \in I} x_{j, e}}\right)}{\nu_e} + \l_{i, w}(\theta_k) + \xi \cdot a_{i, w}\\
&= \tau_e + \frac{q_e(\theta_k + \xi)}{\nu_e} + \l_{i, w}(\theta_k + \xi).\end{align*}
\fi\ifspringer
\begin{align*}\l_{i,v}(\theta_k + \xi) &= \l_{i, v}(\theta_k) + \xi \cdot a_{i, v} \\
	&\leq \tau_e + \frac{q_e(\theta_k)}{\nu_e} + \xi \cdot \frac{g_e\left(\textstyle{\sum_{j \in I} x_{j, e}}\right)}{\nu_e} + \l_{i, w}(\theta_k) + \xi \cdot a_{i, w}\\
	&= \tau_e + \frac{q_e(\theta_k + \xi)}{\nu_e} + \l_{i, w}(\theta_k + \xi).\end{align*}
\fi

For active arcs $e = vw \in E^i_{\theta_k}$ we have by \eqref{eq:thin_flow_min} that
\ifarxiv
\begin{align*}
\l_{i,v}(\theta_k + \xi) = \l_{i, v}(\theta_k) + \xi \cdot a_{i, v} &\leq \tau_e + \frac{q_e(\theta_k)}{\nu_e} + \l_{i, w}(\theta_k) + \xi \cdot \left(\frac{g_e\left(\sum_{j \in I} x_{j, e}\right)}{\nu_e} + a_{i, w} \right) \\
&= \tau_e + \frac{q_e(\theta_k + \xi)}{\nu_e} + \l_{i, w}(\theta_k + \xi).
\end{align*}
\fi\ifspringer
\begin{align*}
\l_{i,v}(\theta_k + \xi) &= \l_{i, v}(\theta_k) + \xi \cdot a_{i, v} \\
&\leq \tau_e + \frac{q_e(\theta_k)}{\nu_e} + \l_{i, w}(\theta_k) + \xi \cdot \left(\frac{g_e\left(\sum_{j \in I} x_{j, e}\right)}{\nu_e} + a_{i, w} \right) \\
&= \tau_e + \frac{q_e(\theta_k + \xi)}{\nu_e} + \l_{i, w}(\theta_k + \xi).
\end{align*}
\fi

Since there has to be one active arc that satisfies \eqref{eq:thin_flow_min} with equality, the same arc satisfies the inequality above with equality, which shows that \eqref{eq:current_shortest_path_distances} holds.
In other words, the extended $\l$ labels denote the current shortest path distances in the $\alpha$-extension.

Finally, we show that the $\alpha$-extension satisfies the IDE condition \eqref{eq:Cont-FlowDefProperties-OnlyUseSP}: For all $\xi \in [0, \alpha)$ and all arcs $e = vw \in E$ we have that $f_{i,e}^+(\theta_k + \xi) > 0$ implies that $x_{i, e} > 0$, and therefore,
\[a_{i,v}(\theta_k + \xi) = a_{i,v} \stackrel{\text{\eqref{eq:thin_flow_tight}}}{=} \frac{g_e\left(\textstyle{\sum_{j \in I} x_{j, e}}\right)}{\nu_e} + a_{i, w}  = \frac{q'_e(\theta_k +\xi)}{\nu_e} + a_{i,w}(\theta_k)\]
for all $\xi \in [0, \alpha)$.
Hence, 
\[\l_{i, v}(\theta_k + \xi) = \tau_e + \frac{q_e(\theta_k + \xi)}{\nu_e} + \l_{i, w}(\theta_k + \xi)\]
which shows $e \in E^i_{\theta_k + \xi}$.
Hence, the $\alpha$-extension is indeed an IDE flow up to time $\theta_{k + 1} = \theta_k + \alpha$.
\end{proof}

\begin{theorem}
\label{theorem:existence_multiThinFlow}
Consider a multi-source multi-sink network with a finite set of commodities~$I$ and right-constant network inflow functions $u_i$. Then, there exists an IDE flow $f$ with right-constant inflow rate functions $f_{i, e}^+$.
\end{theorem}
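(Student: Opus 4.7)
The plan is to mimic the argument used for the single-sink case (Theorem~\ref{thm:existence}), replacing the single-sink extension lemma (\Cref{lemma:ExtendingContIUEFlows}) by the combination of the two multi-sink tools developed just above: the existence of an IDE thin flow via Kakutani's fixed point theorem, and \Cref{lem:extending_multi_sink_IDEs} which turns an IDE thin flow at time $\theta_k$ together with a feasible extension length $\alpha > 0$ into an IDE flow up to time $\theta_k + \alpha$ whose inflow rates are still right-constant.

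First, I would start at $\theta_0 = 0$ with the trivial IDE flow (no inflow has been sent into the network yet). Given an IDE flow up to time $\theta_k$ with right-constant inflow rate functions, I would proceed in three steps: (i) compute the current queues $q_e(\theta_k)$, labels $\ell_{i,v}(\theta_k)$, active edge sets $E^i_{\theta_k}$ and node inflows $b^-_{i,v}(\theta_k)$ (these are all well-defined since $\tau_e > 0$ and $u_i$, together with the already-defined $f^+_{i,e}$, are right-constant); (ii) invoke the existence result for IDE thin flows to obtain a pair $(x, a)$ satisfying \eqref{eq:thin_flow_conservation}--\eqref{eq:thin_flow_tight}; (iii) choose a feasible $\alpha > 0$ satisfying \eqref{eq:alpha_resetting:ide}--\eqref{eq:alpha_b_constant:ide} and apply \Cref{lem:extending_multi_sink_IDEs} to produce an IDE flow up to time $\theta_{k+1} \coloneqq \theta_k + \alpha$. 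As already noted after the statement of feasibility, such an $\alpha > 0$ always exists because strict inequalities hold for inactive arcs at $\theta_k$, the functions $g_e$ and the labels vary linearly on the extension phase, and all relevant quantities ($u_i$, $f^+_{i,e}$ from the past, and hence $b^-_{i,v}$) are right-constant on some non-trivial interval to the right of $\theta_k$.

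Iterating this extension procedure produces an increasing sequence $0 = \theta_0 < \theta_1 < \theta_2 < \dots$ and a nested family of IDE flows $(f_k)_{k \in \IN}$, with $f_{k+1}$ restricting to $f_k$ on $[0, \theta_k)$. The main obstacle is that a~priori the step sizes $\alpha_k = \theta_{k+1} - \theta_k$ could tend to zero, leaving the sequence $(\theta_k)$ bounded above by some $\theta^* < \infty$. I would handle this exactly as in the proof of \Cref{thm:existence}: organize the construction via the supremum of reachable times from a given prefix. Concretely, let $\mathfrak{F}_0$ be the set of pairs $(f,\theta)$ where $f$ is an IDE flow up to time $\theta$ with right-constant inflow and outflow rates, and set $\hat\theta_0 \coloneqq \sup\{\theta \mid \exists f\colon (f,\theta) \in \mathfrak{F}_0\}$. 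If $\hat\theta_0 = \infty$ we are done; otherwise pick $(f_1, \theta_1) \in \mathfrak{F}_0$ with $\theta_1 = \hat\theta_0/2$, restrict attention to extensions of $f_1$, and iterate. This produces a strictly increasing sequence $\theta_k$ and a non-increasing sequence $\hat\theta_k$ with $\hat\theta_k - \theta_k \leq \hat\theta_0 / 2^k$, converging to a common limit $\theta^*$.

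Taking pointwise limits of the right-constant functions $f^+_{i,e}$ and $f^-_{i,e}$ along this sequence yields a feasible IDE flow $f^*$ up to time $\theta^*$ with right-constant rates; this is where one uses that on any compact subinterval of $[0, \theta^*)$ the $f_k$ eventually stabilize. Now, $f^*$ is in $\mathfrak{F}_0$ (after suitable extension of $\mathfrak{F}_1, \mathfrak{F}_2, \ldots$ chosen along the way), and by applying the single extension step once more to $f^*$ we obtain an IDE flow up to $\theta^* + \varepsilon$ for some $\varepsilon > 0$, which contradicts the definition of $\hat\theta_k$ for any $k$ large enough that $\hat\theta_k < \theta^* + \varepsilon$. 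Hence $\hat\theta_0 = \infty$ and the constructed IDE flow is defined on all of $\R_{\geq 0}$, finishing the proof.
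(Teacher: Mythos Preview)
Your proposal is correct and follows exactly the approach the paper takes: it explicitly states that the proof is the same as that of \Cref{thm:existence}, with the single-sink extension lemma replaced by \Cref{lem:extending_multi_sink_IDEs} (which in turn rests on the Kakutani-based existence of IDE thin flows). Your write-up spells out the supremum/limit argument in full, but the structure and all essential steps coincide with the paper's proof.
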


The proof is exactly the same as the proof for \Cref{thm:existence} but this time we extend the IDE flow by \Cref{lem:extending_multi_sink_IDEs}.

Since Kakutani's fixed point theorem doesn't give much insight into how to construct such multi-commodity thin flows, we want to give a brief idea how to do this with a mixed integer program.
For this we introduce two different types of boolean decision variables. For every $i \in I$ and $e \in E^i_{\theta_k}$ we have $y_{i,e} \in \set{0, 1}$ and for every $e \in E$ with $q_e(\theta_k) = 0$ we have $z_e \in \set{0, 1}$.

\begin{align*}
 y_{i,e} &= 1 \quad \Leftrightarrow \quad  x_{i,e} = 0, &&\text{ thus, \eqref{eq:thin_flow_tight} does not apply,}\\
z_{e} &= 1 \quad \Leftrightarrow \quad \sum_{j \in I} x_{j, e} - \nu_e \leq 0, &&\text{ thus, } g_e\left(\textstyle{\sum_{j \in I} x_{j, e}}\right) = 0.
\end{align*}

When these decision variables are guessed correctly, the task to find an IDE thin flow is a simple linear program. It remains open if the complete thin flow can be computed efficiently.

\subsection{Arbitrary Network Inflow Rates}
Next, we want to show that multi-sink IDE flows exist even with very general network inflow functions. Recall that the $L^2$-space is defined by 
	\[L^2([a, b)) := \Set{x: [a, b) \to \IR | \int_a^b x(\xi)^2\diff \xi < \infty}\]
for every time interval $[a, b) \subseteq \IR_{\geq 0}$, where two function are equal if they are equal almost everywhere. Together with the scalar product
$\scalar{x}{y} =  \int_a^b x(\xi)\cdot y(\xi) \diff \xi$,
it forms a Hilbert space. If we have vectors of functions $f, g \in L^2([a, b))^d$, for $d \in \IN$, the scalar product is defined as
$\scalar{f}{g} =  \sum_{i=1}^d \int_a^b f_i(\xi)\cdot g_i(\xi) \diff \xi$.
A sequence $f^k \in L^2([a, b))^d$ \emph{converges weakly} to $f \in L^2([a, b))^d$, if $\scalar{f^k}{g} \to \scalar{f}{g}$ for all $g \in L^2([a, b))^d$.
For a subset $K \subseteq L^2([a, b))^d$ we call a mapping $\A\colon K \to L^2([a, b))^d$ \emph{weak-strong-continuous} at $f \in K$, if for every $f^k \in K$ that converges weakly to $f$, we have that $\A(f^k)$ converges to $\A(f)$ with respect to the $L^2$-norm.

\begin{lemma}\label{lemma:ExtendingMultiFlows}
	Given a network with a finite set of commodities $I$ with arbitrary sources and sinks and bounded network inflow functions $u_i$ such that \mbox{$u_i\big|_{[a,b)} \in L^2([a, b))$} for all $a < b$ and $i \in I$.  Let $f$ be an IDE flow up to time $\phi\geq 0$.
	Then, for any $0 < \varepsilon < \min\set{\tau_e | e \in E}$, we can extend $f$ to an IDE flow up to time $\phi+\varepsilon$.
\end{lemma}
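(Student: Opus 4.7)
The plan is to recast the extension task as an infinite-dimensional variational inequality on the Hilbert space $H := L^2([\phi, \phi+\varepsilon))^{I \times E}$ and invoke an existence result via weak-strong continuity. First, since $\varepsilon < \min_{e} \tau_e$, the outflow rates $f^-_{i,e}$ on $[\phi, \phi + \varepsilon)$ are completely determined by the portion of $f$ already fixed up to $\phi$ through \eqref{eq:Cont-FlowDefProperties-OpAtCap} and \eqref{eq:FIFO}; hence so are the node inflow functions $b^-_{i,v} \in L^2([\phi, \phi + \varepsilon))$, leaving only the edge inflow vector $x = (x_{i,e})_{i,e} \in H$ to be chosen. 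I would define $K \subseteq H$ as the set of non-negative $x$ obeying flow conservation $\sum_{e \in \delta^+_v} x_{i,e}(\theta) = b^-_{i,v}(\theta)$ a.e.\ for all $i$ and all $v \neq t_i$ (with the analogous inequality at sinks), together with the natural pointwise bound $x_{i,e}(\theta) \leq b^-_{i,v}(\theta)$ for $e = vw$. A direct check shows that $K$ is non-empty (route each commodity's supply along any fixed path to its sink), convex, norm-closed, and bounded in $H$.

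For each $x \in K$, integration and truncation produce queue lengths $q_e^x$ on $[\phi, \phi + \varepsilon)$ via \eqref{eq:queue-dynamic}, and hence travel times $c_e^x$ and labels $\ell_{i,v}^x$ via \eqref{eq:current_shortest_path_distances}. I then define the operator $\A: K \to H$ by
\[
\A(x)_{i,e}(\theta) := c_e^x(\theta) + \ell_{i,w}^x(\theta) - \ell_{i,v}^x(\theta) \quad \text{for } e = vw,
\]
which is non-negative and vanishes precisely on the active edges $E^i_\theta$ induced by $x$. Any solution $x^* \in K$ of the variational inequality $\scalar{\A(x^*)}{y - x^*} \geq 0$ for all $y \in K$ then yields an IDE extension: if on a positive-measure set one had $x^*_{i,e}(\theta) > 0$ with $e \notin E^i_{\theta}$, then locally rerouting this flow onto an active outgoing edge of $v$ (which exists as long as $v$ reaches $t_i$ via $E^i_\theta$, which holds on the relevant subgraph) produces a competitor $y \in K$ with $\scalar{\A(x^*)}{y - x^*} < 0$, a contradiction.

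Existence of such an $x^*$ follows from the standard VI existence theorem on non-empty, convex, bounded, norm-closed subsets of a Hilbert space, provided $\A$ is weak-strong-continuous on $K$. I expect verifying this continuity to be the main technical obstacle. The key observation is that the primitive operator $L^2([\phi,\phi+\varepsilon)) \to C([\phi,\phi+\varepsilon))$, $x \mapsto \int_\phi^{\cdot} x(\xi)\diff \xi$, is compact, so weak convergence $x^k \rightharpoonup x$ in $H$ yields uniform convergence of the corresponding cumulative edge inflows. The queue lengths $q_e^x$ depend Lipschitz-continuously on those primitives (the truncation in \eqref{eq:queue-dynamic} being $1$-Lipschitz), $c_e^x$ is affine in $q_e^x$, and each $\ell_{i,v}^x$ is a min-plus combination of the $c_e^x$ (hence Lipschitz in $c$). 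Chaining these estimates gives $\A(x^k) \to \A(x)$ uniformly on $[\phi, \phi+\varepsilon)$, and hence in $L^2$ by the boundedness of $K$, which completes the argument.
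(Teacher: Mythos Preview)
Your proposal is correct and follows essentially the same approach as the paper: cast the extension as a variational inequality over the convex, closed, bounded set $K \subseteq L^2([\phi,\phi+\varepsilon))^{I\times E}$ of inflow vectors satisfying flow conservation, define $\A$ via the active-edge slack $c_e + \ell_{i,w} - \ell_{i,v}$, invoke Br\'ezis's existence theorem under weak-strong continuity of $\A$, and derive the IDE property by the rerouting-to-active-edges contradiction. The only cosmetic difference is that the paper cites \cite[Lemma~4]{CominettiCL15} for the weak-strong continuity of $x \mapsto q^x$, whereas you sketch it directly via compactness of the integration operator and Lipschitz dependence of the reflected queue on the cumulative inflow; both are fine.
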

In order to prove this, we utilize the following variational inequality:

\paragraph{Variational Inequality.}
Given an interval $[a, b) \subseteq \IR_{\geq 0}$, a number $d \in \IN$, a subset $K \subseteq L^2([a,b))^d$ and a mapping $\A: K \to L^2([a,b))^d$, then the variational inequality $\VI(K,\A)$ is the following:
\begin{equation}\label{eqn:VI}\tag{VI}\text{Find }g \in K \text{ such that } \scalar{\A(g)}{g'-g} \geq 0 \text{ for all } g' \in K.\end{equation}
Conditions to guarantee the existence of such an element $g$ are given by Br\'ezis \cite[Theorem 24]{brezis1968} (see also \cite{Tang2017}):
\begin{theorem} \label{brezis}
Let $K$ be a nonempty, closed, convex and bounded subset of $L^2([a,b))^d$. Let $\A : K \rightarrow L^2([a,b))^d$ be a weak-strong-continuous mapping. Then, the variational inequality $\VI(K,\A)$ has a solution $g^* \in K$.
\end{theorem}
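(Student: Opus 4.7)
The plan is to combine a finite-dimensional Galerkin approximation with weak compactness of bounded sets in the Hilbert space $H := L^2([a,b))^d$. After translating $K$ by a fixed element of $K$, I may assume $0 \in K$. I would then fix an increasing chain of finite-dimensional subspaces $E_1 \subset E_2 \subset \cdots \subset H$ whose union is dense in $H$ and each of which contains $0$, and set $K_n := K \cap E_n$. For all sufficiently large $n$, $K_n$ is a nonempty, compact, convex subset of the finite-dimensional space $E_n$, which is what one needs to run a classical fixed-point argument.

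For each such $n$, I would obtain a Galerkin solution $g_n \in K_n$ satisfying $\scalar{\A(g_n)}{g' - g_n} \geq 0$ for all $g' \in K_n$ by invoking the classical finite-dimensional Hartman--Stampacchia theorem. Hartman--Stampacchia itself is a direct consequence of Brouwer's fixed point theorem applied to the continuous self-map $g \mapsto P_{K_n}(g - P_{E_n}\A(g))$ of $K_n$, where $P_{K_n}$ and $P_{E_n}$ denote the orthogonal projections. Continuity of this composition on the bounded finite-dimensional set $K_n$ follows because on such a set weak convergence coincides with norm convergence, so the weak-strong continuous $\A$ restricts to a norm continuous map there. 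Since $g' - g_n \in E_n$, inserting the projection $P_{E_n}$ into the pairing does not change its value, so $g_n$ truly satisfies the VI tested against all of $K_n$.

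Next I pass to the limit. Because $K$ is bounded in the Hilbert space $H$, the Banach--Alaoglu theorem provides a weakly convergent subsequence $g_{n_k} \rightharpoonup g^* \in H$. Since $K$ is closed and convex, Mazur's theorem implies $K$ is weakly closed, so $g^* \in K$. The weak-strong continuity of $\A$ now pays off: $\A(g_{n_k}) \to \A(g^*)$ \emph{strongly} in $H$, which is exactly the strength needed to pair against something that converges only weakly.

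The main obstacle I anticipate is the construction, for an arbitrary test element $g' \in K$, of approximants $g'_{n_k} \in K_{n_k}$ with $g'_{n_k} \to g'$ strongly, so that they may legally be inserted into the VI satisfied by $g_{n_k}$. Naively, $P_{E_{n_k}}(g')$ converges strongly to $g'$, but it need not lie in $K$. The standard remedy is to form convex combinations of the form $(1 - \varepsilon_{n_k})\, P_{E_{n_k}}(g') + \varepsilon_{n_k} \cdot 0$ with a null sequence $\varepsilon_{n_k} \downarrow 0$ chosen so that each combination lies in $K_{n_k}$ (using convexity of $K$ and $0 \in K$, together with a small inward shrinking argument that is absorbed at the end by closedness of $K$). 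Once such $g'_{n_k}$ are available, the inequality $\scalar{\A(g_{n_k})}{g'_{n_k} - g_{n_k}} \geq 0$ passes to the limit by the strong-weak pairing lemma (strong convergence of the first factor combined with weak convergence of the second), yielding $\scalar{\A(g^*)}{g' - g^*} \geq 0$. Since $g' \in K$ was arbitrary, $g^*$ is the desired solution of $\VI(K,\A)$.
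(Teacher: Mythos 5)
The paper itself does not prove this statement; it is quoted verbatim from Br\'ezis and used as a black box, so your proposal is measured on its own merits. Your overall strategy -- finite-dimensional Hartman--Stampacchia via Brouwer, weak compactness of the bounded closed convex set $K$, and then exploiting weak-strong continuity of $\A$ to pair a strongly convergent factor against a weakly convergent one -- is exactly the classical Galerkin/Browder route, and the finite-dimensional step and the final limit passage are correct as written.

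The gap is in the construction of the approximating sets $K_n = K \cap E_n$, and it is precisely at the point you flagged as the main obstacle: your remedy does not work. Density of $\bigcup_n E_n$ in $H$ does \emph{not} imply that $\bigcup_n (K \cap E_n)$ is dense in $K$. For a concrete failure, take $H = \ell^2$, $E_n = \operatorname{span}(e_1,\dots,e_n)$ and $K = \{\lambda v : \lambda \in [0,1]\}$ with $v = (1, 1/2, 1/4, \dots)$; then $K \cap E_n = \{0\}$ for every $n$, all Galerkin solutions are $g_n = 0$, the limit $g^* = 0$ need not solve the VI, and no approximants $g'_{n}$ of a general $g' \in K$ exist at all. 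Your proposed fix -- shrinking $P_{E_n}(g')$ toward $0$ -- only lands inside $K$ if $0$ is an \emph{interior} point of $K$ (one needs a ball $B(0,r) \subseteq K$ to absorb the error $P_{E_n}(g') - g'$), and closed bounded convex subsets of $L^2$ typically have empty interior; in particular the set $K$ to which the theorem is applied in Lemma 5.6 of this paper (nonnegative functions satisfying flow conservation) has empty interior, so the shrinking argument is vacuous exactly where it is needed. The standard repair is to abandon the subspace intersection: since $L^2([a,b))^d$ is separable, pick a countable dense subset $\{y_1, y_2, \dots\}$ of $K$ and set $K_n := \operatorname{conv}\{y_1,\dots,y_n\}$. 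Each $K_n$ is a nonempty compact convex subset of $K$ lying in a finite-dimensional affine subspace, Brouwer still yields Galerkin solutions $g_n \in K_n$, and now every $g' \in K$ is a norm limit of points $y_j \in K_{n}$ for $n \geq j$, so the limit passage you describe goes through verbatim (first $k \to \infty$ for fixed $y_j$, then $y_j \to g'$). With that substitution your proof is complete.
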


\begin{proof}[Proof of \Cref*{lemma:ExtendingMultiFlows}]
For $\theta \in [\phi, \phi + \varepsilon)$, we define 
	\[
		b^-_{i,v}(\theta) \coloneqq 
			\begin{cases}
				u_i(\theta) + \sum_{e \in \delta^-_v} f_{i,e}^-(\theta), &\text{ if } v=s_i\\
				\sum_{e \in \delta^-_v} f_{i,e}^-(\theta)					&\text{ else,}
			\end{cases}
	\]
where $f_{i,e}^-(\theta)$ is uniquely defined according to \Cref{eq:Cont-FlowDefProperties-OpAtCap,eq:FIFO}. Note that by the choice of $\varepsilon$ these functions $f_{i,e}^-$, and hence $b^-_{i,v}$, do not depend on any inflow function during $[\phi, \phi + \varepsilon)$. 
We set $d = \abs{I} \cdot \abs{E}$ and define $K$ to be the set of all feasible flows over time with FIFO, described by the edge inflow functions only, on the given network during the interval $[\phi, \phi + \varepsilon)$ that satisfy the inflow functions, i.e.,
\[K := \Set{ (g_{i, e})_{i \in I, e \in E} \in L^2([\phi, \phi + \varepsilon))^d | 
\begin{array}{l}
g_{i,e}(\theta) \geq 0,  \sum_{e \in \delta^+(v)}g_{i,e}(\theta) = b^-_{i,v}(\theta) \\
\text{for all } v \neq t_i \text{ and almost all } \theta \in \IR_{\geq 0}
\end{array} }.\]
Note that $K$ is indeed a nonempty, closed, convex and bounded subset of $L^2([\phi, \phi + \varepsilon))^d$.
We define the following mapping $\A : K \rightarrow L^2([\phi, \phi + \varepsilon))^d$ that maps
\[g = (g_{i, e})_{i \in I, e \in E} \mapsto (h_{i, e})_{i \in I, e \in E} \hspace{.5em} \text{ with }\hspace{.5em} h_{i, e}(\theta) := \tau_e + \frac{q_e(\theta)}{\nu_e} + \ell_{i,v}(\theta) - \ell_{i,u}(\theta).\]
Here $e = uv$, $\ell_{i,v}(\theta)$ and $\ell_{i,u}(\theta)$ are the current shortest paths distances from $v$ ($u$, respectively) to $t_i$ at time $\theta$ and $q_e(\theta)$ is the queue length at edge $e$ at time $\theta$ all considering the feasible flow over time $f$ extended by $g$.

This mapping $\A$ is indeed weak-strong-continuous. As shown by Cominetti et\ al.\ \cite[Lemma 4]{CominettiCL15} the mapping $(g_{i, e})_{i \in I, e \in E} \mapsto (q_e)_{e \in E}$ is weak-strong-con\-tinuous and, since $(q_e)_{e \in E} \mapsto (\ell_{i,v})_{i\in I, v \in V}$ is (strong-strong)-continuous, it follows immediately that $\A$ is weak-strong-continuous. 
Applying \Cref{brezis} provides a solution $g^*$ for $\VI(K,\A)$.

We have to show that $f$ extended by $g^*$ is a multi-commodity IDE flow. By the definition of $b_{i,v}^-$ the flow conservation is satisfied. 
Suppose that \cref{eq:Cont-FlowDefProperties-OnlyUseSP} does not hold for almost all $\theta \in [\phi, \phi + \varepsilon)$. Then there is an edge $e$, a commodity $i$, and a set of times $\Theta \subseteq [\phi, \phi + \varepsilon)$ of positive measure, such that $g^*_{e, i}(\theta) > 0$ and $e \not\in E_\theta^i$ for all~$\theta \in \Theta$. It follows that $h^*_{e, i}(\theta) > 0$ for all $\theta \in \Theta$.
Since all functions of $g^*$ and $h^*$ are non-negative we have:
\[\scalar{\A(g^*)}{g^*} \geq \int_{\Theta} h^*_{e, i}(\theta) \cdot g^*_{e, i}(\theta) \diff{\theta} > 0.\]
We define a new flow $g'$ that fulfils $\scalar{\A(g^*)}{g'} = 0$. 
Note that for any flow, and especially for $g^*$ we have the following property:
For every node $v$, every commodity $i$ and every time $\theta$ there exists an outgoing edge $e \in \delta^+_v$ that is active, i.e., $e \in E^i_\theta$. This follows immediately from the fact, that $E^i_\theta$ connects every node $v$ with $t_i$. Furthermore, the sets $\Theta_{i,e} := \set{\theta \in [\phi, \phi + \varepsilon) | e \in E^i_\theta}$ are, by their definition and the continuity of the label functions $\ell_{i,v}$, a union of closed intervals, and therefore measurable. 

We now define $g' \in K$ as follows. At every node $v$, for every commodity $i$ and at every point in time $\theta$, we send all arriving flow at $v$ of commodity $i$ into an edge $e \in E^i_\theta$, where $E^i_\theta$ are the active edges according to $g^*$. It is easy to check that we have $\scalar{\A(g^*)}{g'} = 0$.
Combining these we get
	\[\scalar{h^*}{g'-g^*} = \scalar{h^*}{g'} - \scalar{h^*}{g^*} < 0,\]
which is a contradiction to \eqref{eqn:VI}. To show that  \cref{eq:Cont-FlowDefProperties-OnlyUseSP} is fulfilled for every $\theta$, recall that set $\Theta_0$ of the points in time where this is not satisfied has measure zero. It is possible to modify the edge inflow rates at every $\theta \in \Theta_0$, such that flow conservation and \eqref{eq:Cont-FlowDefProperties-OnlyUseSP} is fulfilled by sending all flow into edges in $E^i_{\theta}$. This has no impact on the queues or the shortest path distances.
\end{proof}

\begin{theorem}
\label{theorem:existence_multi}\label{theorem:existence_arbitrary_inrates}
Consider a multi-source multi-sink network with a finite set of commodities~$I$ and bounded network inflow functions $u_i$ with \mbox{$u_i\big|_{[a,b)} \in L^2([a, b))$} for all~$a < b$ and~$i \in I$. Then there exists an IDE flow $f$ with bounded inflow rate functions $f_{i, e}^+$.
\end{theorem}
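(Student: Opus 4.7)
The plan is to combine Lemma~\ref{lemma:ExtendingMultiFlows} with a straightforward induction on time windows of a fixed size. The crucial feature of the extension lemma, in contrast to the single-sink extension (Lemma~\ref{lemma:ExtendingContIUEFlows}), is that the allowed extension length $\varepsilon$ depends only on the graph data (through $\min\{\tau_e \mid e\in E\}$) and not on the currently constructed flow. This uniform lower bound removes the need for the delicate limit argument used in Theorem~\ref{thm:existence}.

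Concretely, I would set $\tau_{\min}\coloneqq \min\{\tau_e \mid e\in E\}>0$ and fix any $0<\varepsilon_0<\tau_{\min}$. Starting from the (trivially IDE) empty flow at time $0$, I define a sequence $\phi_k \coloneqq k\varepsilon_0$ for $k\in\IN_0$ and construct an IDE flow $f^{(k)}$ up to time $\phi_k$ by induction: given $f^{(k)}$ an IDE flow up to time $\phi_k$, \Cref{lemma:ExtendingMultiFlows} applied to $\phi=\phi_k$ and the chosen $\varepsilon_0$ yields an extension to an IDE flow $f^{(k+1)}$ up to time $\phi_{k+1}$. Since each extension agrees with its predecessor on $[0,\phi_k)$, the pointwise limit $f(\theta)\coloneqq f^{(k)}(\theta)$ for $\theta\in[0,\phi_k)$ is well-defined on all of $\IR_{\geq 0}$, satisfies the feasibility \Cref{eq:Cont-FlowDefProperties-FlowCons,eq:Cont-FlowDefProperties-FlowConsSink,eq:Cont-FlowDefProperties-OpAtCap,eq:FIFO} and the IDE condition \eqref{eq:Cont-FlowDefProperties-OnlyUseSP} on every bounded interval, and hence globally.

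To guarantee boundedness of the constructed inflow rates $f^+_{i,e}$, I would argue inductively: if $M\coloneqq \max_{i\in I}\|u_i\|_\infty<\infty$, then $f^-_e(\theta)\leq \nu_e$ for every $e\in E$ and every $\theta\geq 0$ by \Cref{eq:Cont-FlowDefProperties-OpAtCap}, so the current node inflows $b^-_{i,v}(\theta)$ constructed during the proof of \Cref{lemma:ExtendingMultiFlows} are bounded by $M+\sum_{e\in\delta^-_v}\nu_e$; flow conservation at $v$ then forces each $f^+_{i,e}(\theta)$ with $e\in\delta^+_v$ to be bounded by the same quantity. This bound is uniform across all extension steps.

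The only subtle point is verifying that the iterated extensions glue together into a \emph{single} IDE flow on $\IR_{\geq 0}$, as opposed to only an IDE flow on each $[0,\phi_k)$. This is almost immediate because the extension provided by \Cref{lemma:ExtendingMultiFlows} never modifies the already-fixed data on $[0,\phi_k)$ (it only constructs $g^\ast$ on $[\phi_k,\phi_{k+1})$ given the previous flow), so the inductive sequence is consistent and the direct limit exists. Compared to the single-sink case there is no obstacle analogous to a shrinking $\varepsilon'$, and hence no transfinite/limit argument is required; this is the main gain obtained from allowing general $L^2_{\mathrm{loc}}$ network inflow rates via the variational inequality approach.
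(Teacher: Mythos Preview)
Your proposal is correct and follows essentially the same approach as the paper: fix a uniform step size strictly below $\min_e \tau_e$, start from the empty flow, repeatedly apply \Cref{lemma:ExtendingMultiFlows}, and take the pointwise limit; boundedness of $f^+_{i,e}$ comes from $f^-_e\leq \nu_e$ together with flow conservation and boundedness of the $u_i$. The only cosmetic difference is that the paper picks $\varepsilon=\tfrac{1}{2}\min_e\tau_e$ specifically, whereas you allow any $\varepsilon_0<\min_e\tau_e$.
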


\begin{proof}
Starting with the empty flow which is a feasible IDE flow for the empty set $[0, 0)$, we can repeatedly extend it for $\varepsilon := \frac{1}{2}\min\set{\tau_e | e \in E}$ with \Cref{lemma:ExtendingMultiFlows} to obtain a sequence $(f_k)_{k \in \IN}$, where $f_k$ is an IDE flow for $[0, k \cdot
\varepsilon)$. Taking the pointwise limit for every inflow rate function gives us an IDE flow for all times.
Note that $u_i(\theta)$ and $f_{i, e}^-(\theta) \leq \nu_e$ are bounded at every point in time. Hence, the flow conservation constraint implies that the inflow rate functions $f_{i,e}^+$ are bounded as well.
\end{proof}

\section{Termination of IDE Flows in Multi-Sink Networks}\label{sec:TerminationMultiSink}

We show that there are instances in which all IDE flows do not terminate. We first observe that while the proofs of \Cref{lemma:TerminationAcyclicNetworks} and \Cref{cor:TerminationIfTotalVolumeLess1} can easily be adapted to the multi-sink case (so it is still true that all flows in an acyclic network and all IDE flows with total volume less than $\tauMinDiff\nu_{\min}$ eventually terminate), this is not true for the proof of \Cref{thm:Termination_SingleSink}.

\begin{theorem}\label{thm:NonTermination}
	There is a multi-source multi-sink network with two sinks and all edge transit times and rate capacities equal to $1$, where any IDE flow does not terminate.
\end{theorem}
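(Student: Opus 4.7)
My plan is to exhibit an explicit gadget-based network with two sinks $t_1,t_2$, two (or more) commodities with short, finitely-lasting inflows, and carefully chosen topology so that whatever choices an IDE makes at the tie-breaking moments, a strictly positive amount of flow is forever trapped in a cycle. The overall strategy mirrors the spirit of the example in Figure~\ref{fig:ide}: use cross-traffic between the two sinks to produce growing queues on ``direct'' edges which then force the IDE to route through cyclic alternatives, while simultaneously producing queues on the alternatives that push flow back to the direct edges, so that the roles of ``short'' and ``long'' paths alternate forever.

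Concretely, I would first describe a symmetric two-commodity gadget: commodity $1$ (with sink $t_1$) has a short edge toward $t_1$ and a detour that passes through a node where commodity $2$ is routed, and symmetrically for commodity $2$. The inflow rates (bounded and supported on a short initial interval) are picked large enough that the direct edges get queues faster than they dissipate once the cross-flow starts using them. The unit transit times and unit capacities together with the $\tau_\Delta\cdot\nu_{\min}$ scale of Lemma~\ref{lemma:ShortestPathsWithLowVolumeAreActive} are only used to calibrate queue sizes and the relative path lengths, so that at the critical moments the detour path and the direct path are \emph{exactly tied} in instantaneous length for one commodity while being strictly shortest on a detour for the other. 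This is the technical heart of the construction.

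The proof itself proceeds by a phase analysis. I would partition $[0,\infty)$ into a sequence of intervals $[\theta_k,\theta_{k+1})$ and establish an invariant of the following shape: at time $\theta_k$ there is a fixed positive volume $V_0>0$ of flow present on a specified set of ``cycle edges'', the queue profile on these edges is up to symmetry the same as at $\theta_0$, and the set of currently active edges (for each commodity) forces in any IDE that the inflow into the cycle edges during $[\theta_k,\theta_{k+1})$ equals the outflow, so the cycle volume is preserved. The case analysis over admissible IDE splits at each intermediate node is the main work: at each node one verifies that non-cycling choices would immediately violate \eqref{eq:Cont-FlowDefProperties-OnlyUseSP} because they would either shorten an inactive competing $v$-$t_i$ path or deplete a queue that another commodity's shortest-path computation relies on. Because the invariant recurs with period $\theta_{k+1}-\theta_k$ bounded below, the edge load $F^\Delta(\theta)$ stays at least $V_0>0$ for all $\theta$, so $f$ does not terminate.

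The main obstacle is precisely this uniqueness-free case analysis: since IDE flows are not unique (Remark after Theorem~\ref{thm:existence}), one cannot simply follow one evolution; one must rule out that some clever tie-breaking drains the cycle. I would handle this by designing the gadget so that at every tie-breaking moment the \emph{amount} of flow that an IDE must send into each active outgoing edge is already pinned down by \eqref{eq:diff-invariant1}--\eqref{eq:diff-invariant2} together with the requirement that no currently inactive edge becomes active immediately, leaving the IDE no freedom on the cycle edges. Finally, I would note (as the theorem's second sentence allows) that by splicing the two sources into a common super-source connected by a short unit-transit edge with large capacity, the same construction yields a single-source multi-sink instance with the same non-termination property, matching the last sentence of the results summary.
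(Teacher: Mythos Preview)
Your proposal is a plan, not a proof, and the plan underestimates where the real difficulty lies. The hard part of this theorem is the \emph{construction}: you must exhibit a concrete network in which \emph{every} IDE is forced to cycle forever. You describe only the desiderata (``a symmetric two-commodity gadget'' with direct edges and detours, queues that oscillate so the roles of short and long paths alternate) but never produce the gadget, and there is no indication that a small symmetric gadget of the kind you sketch can be made to work. Recall that the cycling in the introductory example (Figure~\ref{fig:ide}) is transient---that IDE does terminate---so merely arranging cross-traffic that temporarily pushes flow onto a detour is not enough.

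What the paper's proof supplies, and what is absent from your outline, is a \emph{self-referential} architecture: the ``escape paths'' $P_2,P_5,P_7$ out of each basic cycling gadget $A$ are routed through the vertical paths of \emph{other} copies of $A$ (bundled into gadgets $B_j^{+k}$ inside two symmetric super-gadgets $C,C'$). The periodic waiting-time pattern produced by cycling flow in those other copies is exactly what is needed to make $P_5$, then $P_7$, then $P_2$ become uniquely shortest at the right moments, forcing the flow in the original $A$ to keep circling. The induction (Observation~\ref{obs:HenOrEggGadgetA} and Corollary~\ref{cor:HenOrEggGadgetsB}) decouples ``pattern so far $\Rightarrow$ correct waiting times on the next unit interval'' from ``correct waiting times $\Rightarrow$ pattern on the next unit interval'', which is what makes the argument go through for \emph{every} IDE. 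Your proposal to pin down splits via \eqref{eq:diff-invariant1}--\eqref{eq:diff-invariant2} does not by itself achieve this: those conditions constrain an IDE locally but do not prevent a different IDE from draining the cycle unless the active-edge sets are themselves forced, which is precisely what the paper's layered gadget construction arranges (with ties occurring only on a measure-zero set of times). Without an explicit network and a verification that the required waiting-time profiles are actually realized, the proposal has a genuine gap.
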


To construct such an instance we make use of several gadgets. The first one, gadget $A$, will serve as the main building block and is depicted in \Cref{fig:ProofNonTerm:A}. It consists of two cycles with one common edge $v_1v_2$ and one player $i$ with sink node $t$ (outside the gadget and reachable from the nodes $v_2, v_5$ and $v_7$ via some paths $P_2, P_5$ and $P_7$, respectively) and a constant network inflow rate of $2$ on the interval $[0,1)$ at node $v_1$. Our goal will be to embed this gadget into a larger instance in such a way, that for any IDE flow, the flow associated with player $i$ will exhibit the following flow pattern for all $h \in \IN$ (see \Cref{fig:ProofNonTerm:AFlow}):

\begin{description}
	\item[\textbf{1. On the interval $\bm{[5h,5h+1)}$:}] All flow generated at $v_1$ (for $h=0$) or arriving at $v_1$ (for $h>0$) enters the edge to $v_2$ at a rate of $2$, half of it directly starting to travel along the edge, half of it building up a queue of length $1$ at time $5h+1$.
	\item[\textbf{2. On the interval $\bm{[5h+1,5h+2)}$:}] The flow arriving at node $v_2$ enters the edge to $v_3$ because $v_2,v_3,v_4,v_5,P_5$ is currently the shortest path to $t$. The length of the queue of edge $v_1v_2$ decreases until it reaches $0$ at time $5h+2$.
	\item[\textbf{3. On the interval $\bm{[5h+2,5h+3)}$:}] The flow arriving at node $v_2$ enters the edge to $v_6$ because $v_2,v_6,v_7,P_7$ is currently the shortest path to $t$.
	\item[\textbf{4. On the interval $\bm{[5h+4,5h+5)}$:}] The flows arriving at nodes $v_5$ and $v_7$ enter the respective edges towards node $v_1$ because $v_5,v_1,v_2,P_2$ as well as $v_7,v_1,v_2,P_2$ are currently the shortest paths to get to $t$.
	\item[\textbf{5. On the interval $\bm{[5h+5,5h+6)}$:}] There is a total inflow of $2$ at node $v_1$, which enters the edge to $v_2$. Thus, the pattern repeats.
\end{description}

\begin{figure}[h!]
	\begin{center}
		\begin{adjustbox}{max width=1\textwidth}
			\begin{tikzpicture}
	\newcommand{\colComGreen}{green!90!blue!65}
	
	\newcommand{\graphA}[1]{
		\node[namedVertex] (1) at (4,0) {$v_1$};
		\node[namedVertex] (2) at (4,3) {$v_2$};
		\node[namedVertex] (3) at (2,3) {$v_3$};		
		\node[namedVertex] (4) at (0,1.5) {$v_4$};		
		\node[namedVertex] (5) at (2,0) {$v_5$};		
		\node[namedVertex] (6) at (6,3) {$v_6$};		
		\node[namedVertex] (7) at (6,0) {$v_7$};	
		
		\path[edge] (1) -- (2);
		\path[edge] (2) -- (3);
		\path[edge] (3) -- (4);
		\path[edge] (4) -- (5);
		\path[edge] (5) -- (1);
		\path[edge] (2) -- (6);
		\path[edge] (6) -- (7);
		\path[edge] (7) -- (1);
		
		\path[edge,dashed,<-] (1) -- ++(0,-.8);
		\path[edge,dashed] (2) -- ++(0,.8);
		
		\path[draw,->,thick,dashed] (2) -- ++(.6,-.6);
		\path[draw,->,thick,dashed] (5) -- ++(.6,-.6);
		\path[draw,->,thick,dashed] (7) -- ++(.6,.6);
		
		\node at (0.2,3.4) [rectangle,draw] () {\Large$\theta=#1$:};		
	}
	
	\begin{scope}
		\graphA{0}
		
		\draw (1) +(0,-1) node[rectangle, draw, minimum width=1cm, minimum height=.7cm,fill=\colComGreen] (u) {$u=2$}; 
		\draw [->, line width=2pt] (u) -- (1);
	\end{scope}
	
	\begin{scope}[xshift=8cm]
		\node[rectangle, color=\colComGreen, draw, minimum width=2.1cm,minimum height=.2cm, fill=\colComGreen, rotate around={90:(0,0)}] at (4,1.5) {};
		\node[rectangle, color=\colComGreen, draw, minimum width=.5cm,minimum height=1.5cm, fill=\colComGreen, rotate around={90:(0,0)}] at (3.2,0.7) {};
		\node at (2.5,1.5) [] () {$q_{v_1v_2}(1)=1$};
		
		\graphA{5h+1}
	\end{scope}
	
	\begin{scope}[xshift=16cm]
		\node[rectangle, color=\colComGreen, draw, minimum width=2.1cm,minimum height=.2cm, fill=\colComGreen, rotate around={90:(0,0)}] at (4,1.5) {};
		\node[rectangle, color=\colComGreen, draw, minimum width=1.2cm,minimum height=.2cm, fill=\colComGreen] at (3,3) {};
		
		\graphA{5h+2}
	\end{scope}
	
	\begin{scope}[yshift=-5.5cm]
		\node[rectangle, color=\colComGreen, draw, minimum width=1.6cm,minimum height=.2cm, fill=\colComGreen, rotate around={37:(0,0)}] at (1,2.25) {};
		\node[rectangle, color=\colComGreen, draw, minimum width=1.2cm,minimum height=.2cm, fill=\colComGreen] at (5,3) {};
		
		\graphA{5h+3}
	\end{scope}
	
	\begin{scope}[yshift=-5.5cm, xshift=8cm]
		\node[rectangle, color=\colComGreen, draw, minimum width=1.6cm,minimum height=.2cm, fill=\colComGreen, rotate around={-37:(0,0)}] at (1,0.75) {};
		\node[rectangle, color=\colComGreen, draw, minimum width=2.1cm,minimum height=.2cm, fill=\colComGreen, rotate around={90:(0,0)}] at (6,1.5) {};
		
		\graphA{5h+4}
	\end{scope}
	
	\begin{scope}[yshift=-5.5cm, xshift=16cm]
		\node[rectangle, color=\colComGreen, draw, minimum width=1.2cm,minimum height=.2cm, fill=\colComGreen] at (5,0) {};
		\node[rectangle, color=\colComGreen, draw, minimum width=1.2cm,minimum height=.2cm, fill=\colComGreen] at (3,0) {};
		
		\graphA{5h+5}
	\end{scope}

\end{tikzpicture}
		\end{adjustbox}
	\end{center}
	\vspace{-0.5cm}
	\caption[format=hang]{The desired flow pattern in gadget $A$ at times $\theta=0,1,2,3,4,5,\dots$.}\label{fig:ProofNonTerm:AFlow}
\end{figure}

The effect of this behavior is, that other particles outside the gadget, who want to travel through this gadget along the central vertical path, will estimate an additional waiting time as indicated by the diagram displayed inside gadget $A$ in \Cref{fig:ProofNonTerm:A} (next to the vertical red path). Now, in order to actually guarantee the described behavior, we need to embed gadget $A$ into a larger instance in such a way, that for any IDE flow the following assumptions hold:
\begin{enumerate}
	\item The only edges leaving $A$ are the start edges of the four dashed paths indicated in \Cref{fig:ProofNonTerm:A}.
	\item The three (blue) paths $P_2, P_5$ and $P_7$ are of the same length $L$ (w.r.t. $\tau_e$).
	\item For all $h \in \IN$ 
	\begin{itemize}
		\item the unique shortest $v_2$-$t$ path for all $\theta \in [5h+1,5h+2)$ is $v_2,v_3,v_4,v_5,P_5$,
		\item the unique shortest $v_2$-$t$ path for all $\theta \in (5h+2,5h+3]$ is $v_2,v_6,v_7,P_7$,
		\item the unique shortest $v_5$-$t$ path for all $\theta \in [5h+4,5h+5]$ is $v_5,v_1,v_2,P_2$ and
		\item the unique shortest $v_7$-$t$ path for all $\theta \in [5h+4,5h+5]$ is $v_7,v_1,v_2,P_2$.
	\end{itemize}
\end{enumerate} 

Note that at time $\theta = 5h+2$ we do not require that there is only one unique $v_2$-$t$ path. This is due to the fact that waiting times always change continuously and therefore when the shortest $v_2$-$t$ path changes from one path to another, there needs to be a time where both paths are equally long. Thus, there cannot always be a unique shortest path. However, this does not influence the overall flow pattern, since those discrete points in time form a set of measure zero and thus only allow for flow of volume zero to escape the overall flow pattern.

In order to satisfy the assumptions 1.-3., we will now construct three types of gadgets $B_2, B_5$ and $B_7$ for the three paths $P_2, P_5$ and $P_7$, each of equal length and on which any IDE flow induces waiting times as shown by the respective diagrams on the right side in \Cref{fig:ProofNonTerm:A}.

\begin{figure}[h!]\centering
	\begin{adjustbox}{max width=.8\textwidth}
		\begin{tikzpicture}
	\node () at (0,4) {\LARGE $A$};
	
	\draw (-0.75,-1.55) rectangle (6.75,5);
	
	\node[namedVertex] (1) at (4,0) {$v_1$};
	\node[namedVertex] (2) at (4,4) {$v_2$};
	\node[namedVertex] (3) at (2,4) {$v_3$};		
	\node[namedVertex] (4) at (0,2) {$v_4$};		
	\node[namedVertex] (5) at (2,0) {$v_5$};		
	\node[namedVertex] (6) at (6,4) {$v_6$};		
	\node[namedVertex] (7) at (6,0) {$v_7$};	
	
	\path[edge,color=red] (1) -- (2);
	\path[edge] (2) -- (3);
	\path[edge] (3) -- (4);
	\path[edge] (4) -- (5);
	\path[edge] (5) -- (1);
	\path[edge] (2) -- (6);
	\path[edge] (6) -- (7);
	\path[edge] (7) -- (1);
	
	\path[edge,dashed,color=red] (4,-2) -- (1);
	\path[edge,dashed,color=red] (2) -- ++(0,1.5);
	
	\node[namedVertex,dashed,color=blue,fill=white] (t) at (14,1) {t};
	
	\path[draw,-,thick,dashed,color=blue] (2) -- ++(1,-1) -- node[pos=1,right](P2) {$P_2$} ++(2.5,0);
	\path[draw,-,thick,dashed,color=blue] (5) -- ++(0,-1) -- node[pos=1,right](P5) {$P_5$} ++(5.5,0);
	\path[draw,-,thick,dashed,color=blue] (7) -- node[pos=1,right](P7) {$P_7$} ++(1.5,1);		
	
	\path[draw,-,thick,dashed,color=blue] (P2) -- node[pos=1,right](P2plot) {
		\begin{tikzpicture}[anchor=center,scale=.6,solid,black,
		declare function={
			f(\x)= and(\x>0, \x<=1) * (4*\x)   
			+ and(\x>1, \x<=3) * (4)
			+ and(\x>3, \x<=4) * (16-4*\x);
		}
		]
		\begin{axis}[xmin=0,xmax=5,xtick={0,1,2,3,4,5}, xticklabels={$0$,$1$,$2$,$3$,$4$,$5$}, ymax=4.1,samples=500,width=6cm,height=4cm,font=\Large]
		\addplot[blue, ultra thick,domain=0:5] {f(x)};
		\end{axis}
		\end{tikzpicture}
	} ++(.43,0);
	\path[edge,dashed,color=blue] (P2plot) -- ++(3,0) -- (t);
	\path[draw,-,thick,,dashed,color=blue] (P5) -- node[pos=1,right](P5plot) {
		\begin{tikzpicture}[anchor=center,scale=.6,solid,black,
		declare function={
			f(\x)= and(\x>=0, \x<=1) * (3-3*\x)   
			+ and(\x>3, \x<=4) * (-9+3*\x)
			+ and(\x>4, \x<=5) * (3)
			+ and(\x>5, \x<=6) * (18-3*\x);
		}
		]
		\begin{axis}[xmin=1,xmax=6,xtick={1,2,3,4,5,6}, xticklabels={$1$,$2$,$3$,$4$,$5$,$6$},ymax=4.1,samples=500,width=6cm,height=4cm,font=\Large]
		\addplot[blue, ultra thick,domain=1:6] {f(x)};
		\end{axis}
		\end{tikzpicture}
	} ++(.96,0);
	\path[edge,dashed,color=blue] (P5plot) -- ++(2.5,0) -- (t);
	\path[draw,-,thick,,dashed,color=blue] (P7) -- node[pos=1,right](P7plot) {
		\begin{tikzpicture}[anchor=center,scale=.6,solid,black,
		declare function={
			f(\x)= and(\x>=0, \x<=3) * (3-\x)   
			+ and(\x>3, \x<=4) * (-9+3*\x)
			+ and(\x>4, \x<=5) * (3)
			+ and(\x>5, \x<=8) * (8-\x);
		}
		]
		\begin{axis}[xmin=1,xmax=6,xtick={1,2,3,4,5,6}, xticklabels={$1$,$2$,$3$,$4$,$5$,$6$},ymax=4.1,samples=500,width=6cm,height=4cm,font=\Large]
		\addplot[blue, ultra thick,domain=1:6] {f(x)};
		\end{axis}
		\end{tikzpicture}
	} ++(.96,0);
	\path[edge,dashed,color=blue] (P7plot) -- ++(2.5,0) -- (t);
	
	\node () at (3,2) {
		\begin{tikzpicture}[anchor=center,scale=.6,rotate=90,
		declare function={
			f(\x)= and(\x>0, \x<=1) * (\x)   
			+ and(\x>1, \x<=2) * (2-\x)
			+ and(\x>5, \x<=6) * (-5+\x) 
			+ and(\x>6, \x<=7) * (7-\x);
		}
		]
		\begin{axis}[xmin=0,xmax=5,ymax=1,samples=500,width=6cm,height=4cm,font=\Large]
		\addplot[red, ultra thick,domain=0:5] {f(x)};
		\end{axis}
		\end{tikzpicture}
	};
\end{tikzpicture}
	\end{adjustbox}
	\caption{Gadget $A$ (the dashed paths and nodes are not part of the gadget). The (red) diagram inside the box $A$ indicates the waiting time on edge $v_1v_2$ (and therefore on the (red) vertical path through the gadget), provided that the flow originating inside this gadget follows the flow pattern indicated in \Cref{fig:ProofNonTerm:AFlow}. The (blue) diagrams on the right indicate the desired waiting times on the paths $P_2, P_5$ and $P_7$, respectively, which in turn ensure that the flow inside the gadget does indeed follow the desired flow pattern.}\label{fig:ProofNonTerm:A}
\end{figure}
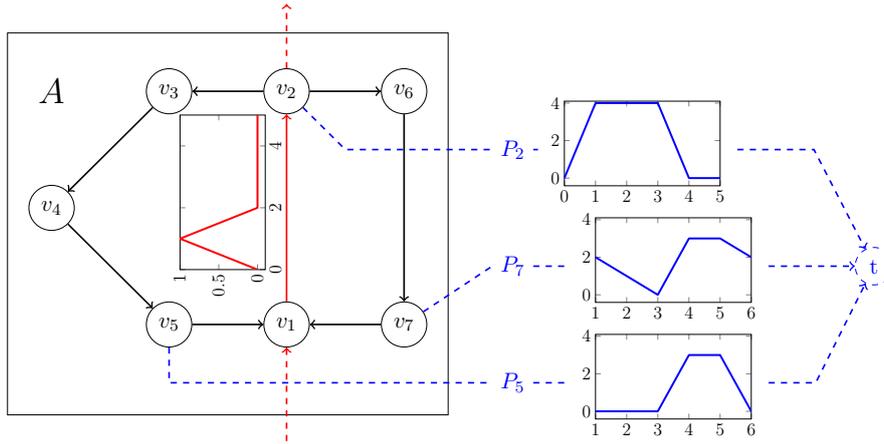

To build these gadgets we need time shifted versions of gadget $A$, which we denote by $A^{+k}$. Such a gadget is constructed the same way as gadget $A$ above, with the only difference that the support of the network inflow rate function $u_i$ is shifted to the interval $[k \mod 5, k\mod 5 +1)$. Gadget $B_2$ now consists of the concatenation of four gadgets of type $A^{+0}$, four gadgets of type $A^{+1}$ and four gadgets of type $A^{+2}$ in series along their vertical paths through them with three edges between each two gadgets (see \Cref{fig:ProofNonTerm:B2}). 

\begin{figure}[h!]\centering
	\floatbox[{\capbeside\thisfloatsetup{capbesideposition={right,top},capbesidewidth=.30\textwidth}}]{figure}[\FBwidth]
	{\caption{Gadget $B_2$ consisting of four copies of each of the types $A^{+0},A^{+1},A^{+2}$. The diagram inside the box of gadget $B_2$ indicates the waiting time on the vertical path through gadget $B_2$, provided that within all of the used gadgets $A$, the flow follows the flow pattern from \Cref{fig:ProofNonTerm:AFlow}. The dashed parts are not part of the gadgets and only sketch how this gadget needs to be embedded in a larger instance.}\label{fig:ProofNonTerm:B2}}
	{
		\begin{adjustbox}{max width=.6\textwidth}
			\begin{tikzpicture}
	\draw (-4.5,-.5) rectangle (3.5,11.5);
	
	\node () at (-3.5,10.5) {\LARGE $B_2$};
	
	\node (1) at (-2.5,5) {
		\begin{tikzpicture}[anchor=center,scale=1.2,rotate=90,
		declare function={
			f(\x)= and(\x>0, \x<=1) * (4*\x)   
			+ and(\x>1, \x<=3) * (4)
			+ and(\x>3, \x<=4) * (16-4*\x);
		}
		]
		\begin{axis}[xmin=0,xmax=5,ymax=4.1,samples=500,width=8cm,height=4cm]
		\addplot[red, ultra thick,domain=0:5] {f(x)};
		\end{axis}
		\end{tikzpicture}
	};
	
	\node () at (1,1) {\LARGE $A^{+0}$};
	
	\draw (0,0) rectangle ++(3,2);

	\node (1) at (2,1) {
		\begin{tikzpicture}[anchor=center,scale=.3,rotate=90,
		declare function={
			f(\x)= and(\x>0, \x<=1) * (\x)   
			+ and(\x>1, \x<=2) * (2-\x)
			+ and(\x>5, \x<=6) * (-5+\x) 
			+ and(\x>6, \x<=7) * (7-\x);
		}
		]
		\begin{axis}[xmin=0,xmax=5,ymax=1,samples=500,width=6cm,height=4cm,xtick=\empty,ytick=\empty]
		\addplot[red, ultra thick,domain=0:5] {f(x)};
		\end{axis}
		\end{tikzpicture}
	};
	
	\node[vertex,red](1a) at (2,2.5){};
	
	\node[vertex,red](1b) at (2,3.5){};
	
	\node () at (1,5) {\LARGE $A^{+0}$};
	
	\draw (0,4) rectangle ++(3,2);
	
	\node (2) at (2,5) {
		\begin{tikzpicture}[anchor=center,scale=.3,rotate=90,
		declare function={
			f(\x)= and(\x>0, \x<=1) * (\x)   
			+ and(\x>1, \x<=2) * (2-\x)
			+ and(\x>5, \x<=6) * (-5+\x) 
			+ and(\x>6, \x<=7) * (7-\x);
		}
		]
		\begin{axis}[xmin=0,xmax=5,ymax=1,samples=500,width=6cm,height=4cm,xtick=\empty,ytick=\empty]
		\addplot[red, ultra thick,domain=0:5] {f(x)};
		\end{axis}
		\end{tikzpicture}
	};
	
	\node[vertex,red](2a) at (2,6.5){};
	
	\node (3) at (1.5,7.5) {\LARGE $\vdots$};
	
	\node[vertex,red](3a) at (2,8.5){};
	
	\node () at (1,10) {\LARGE $A^{+2}$};
	
	\draw (0,9) rectangle ++(3,2);

	\node (4) at (2,10) {
		\begin{tikzpicture}[anchor=center,scale=.3,rotate=90,
		declare function={
			f(\x)= and(\x>2, \x<=3) * (-2+\x)   
			+ and(\x>3, \x<=4) * (4-\x);
		}
		]
		\begin{axis}[xmin=0,xmax=5,ymax=1,samples=500,width=6cm,height=4cm,xtick=\empty,ytick=\empty]
		\addplot[red, ultra thick,domain=0:5] {f(x)};
		\end{axis}
		\end{tikzpicture}
	};
	
	\draw[draw,<-,thick,red,dashed] (1) -- ++(0,-2);
	
	\draw[edge,red] (1) -- (1a);
	\draw[edge,red] (1a) -- (1b);
	\draw[edge,red] (1b) -- (2);
	
	\draw[edge,red] (2) -- (2a);
	\draw[draw,-,thick,red,dotted] (2a) -- ++(0,.5);
	
	\draw[draw,-,thick,red,dotted] (3a) -- ++(0,-.5);
	\draw[edge,red] (3a) -- (4);
	
	\draw[edge,thick,red,dashed] (4) -- ++(0,2);
	
	\node[namedVertex,dashed,color=blue,fill=white] (t) at (10,5) {t};
	
	\node[vertex,dashed,blue](P2) at (6,4) {}; 
	\node[vertex,dashed,blue](P7) at (6,3) {}; 
	\node[vertex,dashed,blue](P5) at (6,2) {}; 
	
	\draw[edge,blue,dashed] (1.north east) -- ++(2,0) -- (P2);
	\draw[edge,blue,dashed] (2.north east) -- ++(2,0) -- (P2);
	\draw[edge,blue,dashed] (P2) -- ++(2,0) node[pos=.5,above]() {\LARGE$P_2$} -- (t);
	
	\draw[edge,blue,dashed] (1.east) -- ++(2,0) -- (P7);
	\draw[edge,blue,dashed] (2.east) -- ++(2,0) -- (P7);
	\draw[edge,blue,dashed] (P7) -- ++(2,0) node[pos=.5,above]() {\LARGE$P_7$} -- (t);
	
	\draw[edge,blue,dashed] (1.south east) -- ++(2,0) -- (P5);
	\draw[edge,blue,dashed] (2.south east) -- ++(2,0) -- (P5);
	\draw[edge,blue,dashed] (P5) -- ++(2,0) node[pos=.5,above]() {\LARGE$P_5$} -- (t);

	\node[vertex,blue,dashed](P2b) at (6,10) {}; 
	\node[vertex,blue,dashed](P7b) at (6,9) {}; 
	\node[vertex,blue,dashed](P5b) at (6,8) {}; 
	
	\draw[edge,blue,dashed] (4.north east) -- ++(2,0) -- (P2b);
	\draw[edge,blue,dashed] (P2b) -- ++(2,0) node[pos=.5,above]() {\LARGE$P_2^{+2}$} -- (t);
	
	\draw[edge,blue,dashed] (4.east) -- ++(2,0) -- (P7b);
	\draw[edge,blue,dashed] (P7b) -- ++(2,0) node[pos=.5,above]() {\LARGE$P_7^{+2}$} -- (t);
	
	\draw[edge,blue,dashed] (4.south east) -- ++(2,0) -- (P5b);
	\draw[edge,blue,dashed] (P5b) -- ++(2,0) node[pos=.5,above]() {\LARGE$P_5^{+2}$} -- (t);
	
\end{tikzpicture}	
		\end{adjustbox}
	}
\end{figure}
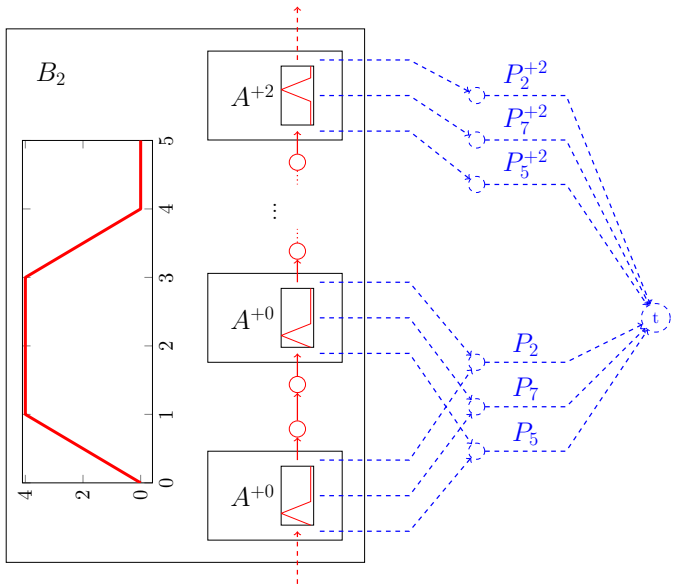	

Similarly, gadget $B_5$ consists of three copies of $A^{+3}$-type gadgets, three copies of $A^{+4}$-type gadgets and additional $6\cdot 4$ edges to ensure that the vertical path has the same length as the one of gadget $B_2$. Finally, gadget $B_7$ consists of three copies of $A^{+3}$-type gadgets, three copies of $A^{+4}$-type gadgets, two copies of $A^{+5}$-type gadgets, one copy of $A^{+6}$-type gadgets and additional $3\cdot 4$ edges to ensure that the vertical path has the same length as the one of gadget $B_2$. 

We again use the notation $B_j^{+k}$ to refer to a time shifted version of gadget $B_j$ -- i.e. with all used gadgets $A$ shifted by additional $k$ time steps. Next, we build a gadget $C$ by just taking one copy of each $B_j^{+k}$ for all $j \in \set{2,5,7}$ and $k=0,1,2,3,4$ (see \Cref{fig:ProofNonTerm:C}).

\begin{figure}[h]\centering
	\begin{adjustbox}{max width=.9\textwidth}
		\begin{tikzpicture}
	\draw (-1,0) rectangle (15,6.5);
	
	\node () at (-0.3,5.5) {\LARGE $C$};
	
	\draw (.5,1) rectangle ++(2,5);
	
	\node () at (1,5.5) {\large$B_2$};
	
	\node (B2) at (1.5,3.3) {
		\begin{tikzpicture}[anchor=center,scale=.5,rotate=90,
		declare function={
			f(\x)= and(\x>0, \x<=1) * (4*\x)   
			+ and(\x>1, \x<=3) * (4)
			+ and(\x>3, \x<=4) * (16-4*\x);
		}
		]
		\begin{axis}[xmin=0,xmax=5,ymax=4.1,samples=500,width=9cm,height=4cm,xtick=\empty,ytick=\empty]
		\addplot[red, ultra thick,domain=0:5] {f(x)};
		\end{axis}
		\end{tikzpicture}
	};
	
	\draw[edge,red,dashed] (B2) -- ++(0,3.8);
	\draw[edge,<-,red,dashed] (B2) -- ++(0,-3.8);

	\draw (4,1) rectangle ++(2,5);
	
	\node () at (4.5,5.5) {\large$B_2^{+1}$};
	
	\node (B21) at (5,3.3) {
		\begin{tikzpicture}[anchor=center,scale=.5,rotate=90,
		declare function={
			f(\x)= and(\x>=1, \x<=2) * (-4+4*\x)   
			+ and(\x>2, \x<=4) * (4)
			+ and(\x>4, \x<=5) * (20-4*\x);
		}
		]
		\begin{axis}[xmin=0,xmax=5,ymax=4.1,samples=500,width=9cm,height=4cm,xtick=\empty,ytick=\empty]
		\addplot[red, ultra thick,domain=0:5] {f(x)};
		\end{axis}
		\end{tikzpicture}
	};
	
	\draw[edge,red,dashed] (B21) -- ++(0,3.8);
	\draw[edge,<-,red,dashed] (B21) -- ++(0,-3.8);
	
	\node () at (8,3.3) {\LARGE $\dots$};
	
	\draw (10,1) rectangle ++(2,5);
	
	\node () at (10.5,5.5) {\large$B_7^{+4}$};
	
	\node (B74) at (11,3.3) {
		\begin{tikzpicture}[anchor=center,scale=.5,rotate=90,
		declare function={
			f(\x) = and(\x>=0, \x<=2) * (2-\x)   
			+ and(\x>2, \x<=3) * (-6+3*\x)
			+ and(\x>3, \x<=4) * (3)
			+ and(\x>4, \x<=7) * (7-\x);
		}
		]
		\begin{axis}[xmin=0,xmax=5,ymax=4.1,samples=500,width=9cm,height=4cm,xtick=\empty,ytick=\empty]
		\addplot[red, ultra thick,domain=0:5] {f(x)};
		\end{axis}
		\end{tikzpicture}
	};
	
	\draw[edge,red,dashed] (B74) -- ++(0,3.8);
	\draw[edge,<-,red,dashed] (B74) -- ++(0,-3.8);

	\node[namedVertex,blue,inner sep=2pt,fill=white](v20) at (14,1) {$v_{2,0}$};
	\node[namedVertex,blue,inner sep=2pt,fill=white](v50) at (14,2.2) {$v_{5,0}$};
	\node[namedVertex,blue,inner sep=2pt,fill=white](v70) at (14,3.4) {$v_{7,0}$};
	\node[namedVertex,blue,inner sep=2pt,fill=white](v21) at (14,4.6) {$v_{2,1}$};
	\node[blue]() at (14,5.8) {$\bm\vdots$};
	
	\node(B2-P20) at (3.3,.25) {};
	\node(B2-P21) at (3.7,.5) {};
	\node(B21-P21) at (6.3,.75) {};
	
	\draw[draw,-,blue,thick] (2.5,1.5) -- ++(0.3,0);
	\draw[draw,-,blue,thick] (2.5,1.9) -- ++(0.3,0);
	\draw[draw,-,blue,thick] (2.5,2.3) -| (B2-P20.center);
	\draw[edge,blue,thick] (B2-P20.center) -- (13,.25) |- (v20);
	
	\draw[draw,-,blue,thick] (2.5,3.1) -- ++(0.3,0);
	\draw[draw,-,blue,thick] (2.5,3.5) -- ++(0.3,0);
	\draw[draw,-,blue,thick] (2.5,3.9) -| (B2-P21.center);
	\draw[edge,blue,thick] (B2-P21.center) -- (12.75,.5) |- (v21);
	
	\draw[draw,-,blue,thick] (2.5,4.7) -- ++(0.3,0);
	\draw[draw,-,blue,thick] (2.5,5.1) -- ++(0.3,0);
	\draw[draw,-,blue,thick] (2.5,5.5) -- ++(0.3,0);
	
	\draw[draw,-,blue,thick] (6,1.5) -| (B21-P21.center);
	\draw[edge,blue,thick] (B21-P21.center) -- (12.5,.75) |- (v21);

	\node[namedVertex,dashed,color=blue,fill=white] (t) at (20,3.4) {t};
	\draw[edge,dashed,blue] (v20) -- node[above,pos=.8]() {\Large$P_2$} ++(4,0) -- (t);
	\draw[edge,dashed,blue] (v50) -- node[above,pos=.8]() {\Large$P_5$} ++(4,0) -- (t);
	\draw[edge,dashed,blue] (v70) -- node[above,pos=.8]() {\Large$P_7$} ++(4,0) -- (t);
	\draw[edge,dashed,blue] (v21) -- node[above,pos=.8]() {\Large$P_2^{+1}$} ++(4,0) -- (t);
\end{tikzpicture}
	\end{adjustbox}
	\caption{Gadget $C$}\label{fig:ProofNonTerm:C}
\end{figure}
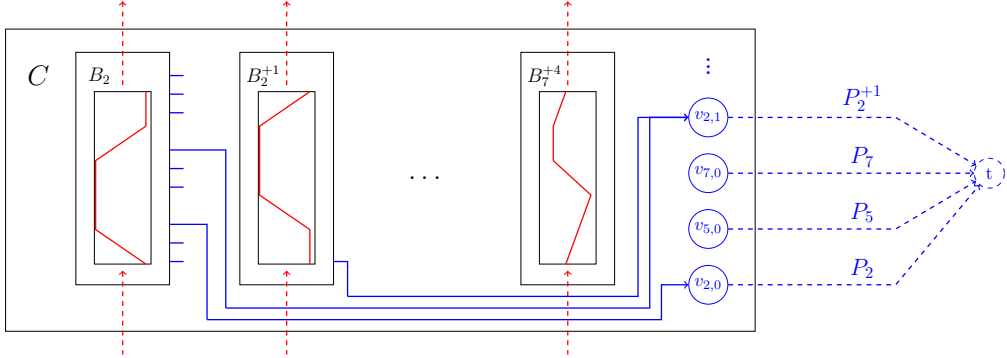	

Finally, taking two copies of this gadget, $C$ and $C'$, and two additional nodes, $t$ and $t'$, where $t$ will be the sink node for all players in $C$ and $t'$ the sink node for all players in $C'$, we can build our entire graph as indicated by \Cref{fig:ProofNonTerm:Graph}. We connect the top edges of the gadgets $B_j^{+k}$ in gadget $C'$ with the sink $t$ and use those gadgets' respective vertical paths as the $P_j^{+k}$ paths for gadget $C$ and vice versa. 

\begin{figure}[h]\centering
	\begin{adjustbox}{max width=.9\textwidth}
		\begin{tikzpicture}
	\draw (0,0) rectangle ++(9,3);
	\node () at (0.5,2.5) {\LARGE $C$};
	
	\draw (1,.5) rectangle ++(1,2);
	\node () at (1.5,1.5) {\Large$B_2$};
	
	\node () at (3,1.5) {\Large$\cdots$};
	
	\draw (4,.5) rectangle ++(1,2);
	\node () at (4.5,1.5) {\Large$B_5$};
	
	\node () at (6,1.5) {\Large$\cdots$};
	
	\node[namedVertex,red,fill=white] (ts) at (4.5,4.5) {$t'$};	
	
	\draw[edge,red] (1.5,2.5) -- ++(0,1) -- (ts);
	\draw[edge,red] (4.5,2.5) -- ++(0,1) -- (ts);
	
	\node[vertex,blue] (v20) at (8.5,.5) {};
	\node[vertex,blue] (v50) at (8.5,1) {};
	\node[blue] () at (8.5,2) {\large$\bm\vdots$};
	
	\draw[edge,blue] (v20) -| ++(1,-.8) -| ++(3,.8);
	\draw[edge,blue] (v50) -| ++(2,-1.6) -| ++(5,1.1);
	
	\draw (11,0) rectangle ++(9,3);
	\node () at (11.5,2.5) {\LARGE $C'$};
	
	\draw (12,.5) rectangle ++(1,2);
	\node () at (12.5,1.5) {\Large$B_2$};
	
	\node () at (14,1.5) {\Large$\cdots$};
	
	\draw (15,.5) rectangle ++(1,2);
	\node () at (15.5,1.5) {\Large$B_5$};
	
	\node () at (17,1.5) {\Large$\cdots$};
	
	\node[namedVertex,blue,fill=white] (t) at (15.5,4.5) {$t$};	
	
	\draw[edge,blue] (12.5,2.5) -- ++(0,1) -- (t);
	\draw[edge,blue] (15.5,2.5) -- ++(0,1) -- (t);
	
	\node[vertex,red] (v20s) at (19.5,.5) {};
	\node[vertex,red] (v50s) at (19.5,1) {};
	\node[red] () at (19.5,2) {\large$\bm\vdots$};
	
	\draw[edge,red] (v20s) -| ++(1,-1.4) -| ++(-19,1.4);
	\draw[edge,red] (v50s) -| ++(1.5,-2.2) -| ++(-16.5,1.7);
	
\end{tikzpicture}		
	\end{adjustbox}
	\caption{The whole graph}\label{fig:ProofNonTerm:Graph}
\end{figure}
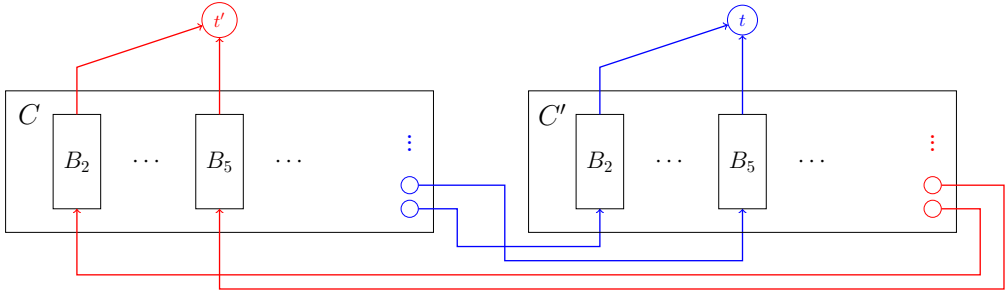

In order to prove the correctness of our construction (i.e. that any IDE flow on this instance does not terminate) we need the following important observation:

\begin{obs}\label{obs:HenOrEggGadgetA}
	If a flow in some $A^{+k}$-type gadget (with $k \in \set{0,1,2,3,4}$) follows the desired flow pattern for all unit time intervals between $k$ and some $\theta \in \IN_0, \theta \geq k$, the induced waiting time on edge $v_1v_2$ of this gadget (and therefore on the vertical path through this gadget) will follow the waiting time function indicated by the diagram in \Cref{fig:ProofNonTerm:A} (shifted by $k$) for the next unit time interval $[\theta,\theta+1)$, independent of the evolution of the flow in this interval.
\end{obs}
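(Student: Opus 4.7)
The plan is to split the argument into a \emph{determinacy} part and a \emph{computation} part.

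The key structural observation is that inside any gadget $A^{+k}$ the node $v_1$ has exactly one outgoing edge, namely $v_1v_2$ (and no edge leaves the gadget at $v_1$). Hence flow conservation~\eqref{eq:Cont-FlowDefProperties-FlowCons} forces
\[
    f^+_{v_1v_2}(s) \;=\; b^-_{v_1}(s) \qquad \text{for all } s\in[\theta,\theta+1),
\]
where $b^-_{v_1}(s)$ is the total arrival rate at $v_1$ at time $s$. For the determinacy part I would note that, because every edge of the gadget has $\tau_e=1$, the outflow rule~\eqref{eq:Cont-FlowDefProperties-OpAtCap} expresses the contribution of each incoming edge of $v_1$ to $b^-_{v_1}(s)$ purely in terms of the inflow rate and queue length of that edge at time $s-1\in[\theta-1,\theta)\subseteq[k,\theta)$, while the remaining contribution is the exogenous network inflow $u_i(s)$. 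All of these data are fixed by the assumed pattern on $[k,\theta)$ and in particular do not depend on any flow decision taken during $[\theta,\theta+1)$. Together with the initial value $q_{v_1v_2}(\theta)$, computed from past flow via~\eqref{eq:Cont-FlowDefProperties-QueueLengthWithF}, the queue-dynamics ODE~\eqref{eq:queue-dynamic} therefore determines $q_{v_1v_2}$ uniquely on $[\theta,\theta+1)$, and the waiting time $q_{v_1v_2}/\nu_{v_1v_2}$ is independent of the evolution of the flow during that interval.

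For the computation part I would trace the pattern step by step. The critical input is that the rate-$2$ injection into $v_1v_2$ during step~1 of any cycle is throttled by the capacity $\nu_{v_1v_2}=1$, so flow reaches $v_2$ at rate only $1$ during steps~2 and~3; this rate then propagates through the pipelines $v_2v_3v_4v_5$ and $v_2v_6v_7$ so that the inflow rates into $v_5v_1$ and $v_7v_1$ during step~4 are exactly $1$ each. Since these match the capacities of the return edges, no queue accumulates on $v_5v_1$ or $v_7v_1$, and they act as pure delay lines, delivering combined rate $2$ to $v_1$ precisely on step~1 of the next cycle and nothing elsewhere. Combined with the network inflow (active only during $[k,k+1)$) this yields $b^-_{v_1}(s)=2$ on each interval $[5h+k,5h+k+1)$, $h\ge 0$, and $b^-_{v_1}(s)=0$ elsewhere. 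A direct integration of~\eqref{eq:queue-dynamic} on each of the five subintervals of a cycle, starting from $q_{v_1v_2}(5h+k)=0$, then yields the triangular waiting-time profile of the diagram: a linear rise from $0$ to $1$ on $[5h+k,5h+k+1)$, a linear fall from $1$ back to $0$ on $[5h+k+1,5h+k+2)$, and value $0$ on $[5h+k+2,5h+k+5)$.

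The one subtle step is the throttling observation -- verifying that the return edges $v_5v_1$ and $v_7v_1$ see inflow rate $1$ (not $2$), so that they carry no queue and the outflow rule~\eqref{eq:Cont-FlowDefProperties-OpAtCap} reduces to $f^-_{v_jv_1}(s)=f^+_{v_jv_1}(s-1)$ for $j\in\{5,7\}$. Once this is recorded, the rest is a mechanical case distinction on which of the five subintervals of the cycle $[\theta,\theta+1)$ belongs to.
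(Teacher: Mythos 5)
Your proof is correct and follows essentially the same route as the paper's: a case analysis over the five positions of the cycle, driven by the facts that $v_1v_2$ is the unique edge out of $v_1$, that its unit capacity throttles the rate-$2$ inflow to rate $1$ downstream, and that every other node has a single incoming edge of capacity $1$ so no secondary queues form and the return edges deliver rate $1$ each back to $v_1$. Your explicit ``determinacy'' step (outflows during $[\theta,\theta+1)$ are fixed by data before $\theta$ because all $\tau_e=1$, and $v_1$ has no routing choice) is a clean packaging of the paper's implicit justification of ``independent of the evolution of the flow in this interval''; just note that the containment $[\theta-1,\theta)\subseteq[k,\theta)$ fails in the base case $\theta=k$, where one instead uses that the gadget carries no flow before time $k$.
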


\begin{proof}[Proof of \Cref*{obs:HenOrEggGadgetA}]
	If $(\theta-k) \equiv 0 \mod 5$, then over the following interval $[\theta,\theta+1)$ we will have an inflow of 2 at node $v_1$. Either because it originates here (if $\theta=k$) or because (by assumption) the flow pattern already holds for the previous unit time interval and, thus, flow has entered edges $v_5v_1$ and $v_7v_1$, both at rate $1$, in this interval. Since the edge $v_1v_2$ is the only one leaving $v_1$, all flow will enter this edge and thereby starting to build up a queue of length $1$ at time $\theta+1$.
	
	If $(\theta-k) \equiv 1 \mod 5$, we start with a queue of length $1$ at edge $v_1v_2$, which linearly decreases to $0$ over the course of the interval as no new flow arrives at $v_1$ and flow leaves the edge at $v_2$ at rate $1$. As all edges leaving $v_2$ are currently empty, no new queues can form, regardless of which edge the flow actually uses.
	
	In all other cases, we start with empty queues at all edges. As no flow arrives at node $v_1$ and at all other nodes flow arrives with at most rate $1$ (since all other nodes only have one incoming edge with rate capacity $1$), no new queues can form.
\end{proof}

\begin{cor}\label{cor:HenOrEggGadgetsB}
	If a flow in some $B_j^{+k}$-type gadget follows the desired flow pattern for all unit time intervals up to some $\theta \in \IN_0$, the induced waiting time on the vertical path through this gadget will follow the waiting time function indicated by the respective diagram on the right in \Cref{fig:ProofNonTerm:A} (shifted by $k$) for the next unit time interval $[\theta,\theta+1)$, independent of the evolution of the flow in this interval. \qedhere
\end{cor}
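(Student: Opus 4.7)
The plan is to derive the corollary by applying Observation~\ref{obs:HenOrEggGadgetA} simultaneously to every $A^{+m}$-type sub-gadget making up $B_j^{+k}$, and then summing the individual waiting-time contributions.

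First, I would recall the structure of $B_j^{+k}$: by construction it is a series concatenation of several $A^{+m}$-type sub-gadgets (with $m$ depending on $j$ and $k$), each attached along their vertical $v_1$-$v_2$ path via bridges of three length-$1$ edges. These connecting edges carry constant transit time and form no queues; moreover, the total number of such bridges in $B_2^{+k}$, $B_5^{+k}$ and $B_7^{+k}$ was chosen exactly so that all three have the same physical length $L$. Consequently, the instantaneous waiting time experienced along the vertical path through $B_j^{+k}$ at any time $\vartheta$ equals the sum of the waiting times on the edges $v_1v_2$ of its constituent sub-gadgets at time~$\vartheta$.

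Second, I would invoke Observation~\ref{obs:HenOrEggGadgetA} on each of these sub-gadgets. The hypothesis of the corollary is precisely that the desired flow pattern holds throughout $B_j^{+k}$ on every unit interval up to $\theta$; in particular it holds in each $A^{+m}$-type sub-gadget. The observation then gives, independently of the flow inside each sub-gadget during $[\theta,\theta+1)$, the waiting time on its edge $v_1v_2$ on that interval as the shifted red diagram of Fig.~\ref{fig:ProofNonTerm:A}. Since all sub-gadgets can be treated in parallel, we obtain a deterministic formula for each summand in the decomposition of the previous paragraph.

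Third, pointwise summation of the shifted unit triangles produced by each sub-gadget yields the aggregated diagram claimed for $B_j^{+k}$. The counts of $A^{+m}$-type sub-gadgets (for example, four copies of each of $A^{+0}, A^{+1}, A^{+2}$ in $B_2$) were deliberately chosen so that the rising edges of later sub-gadgets exactly offset the falling edges of earlier ones, producing the displayed trapezoidal plateau of height $4$ (respectively $3$) and the prescribed linear ramps. The main obstacle, and indeed the only real work, is the case analysis: one must verify, for each $j\in\{2,5,7\}$ and each residue class of $\theta-k \bmod 5$, that this sum matches the diagram in Fig.~\ref{fig:ProofNonTerm:A}. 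This is routine arithmetic, mirroring the verification sketched for $B_2$ after Fig.~\ref{fig:ProofNonTerm:B2}, so the bulk of the proof is the setup above rather than the summation itself.
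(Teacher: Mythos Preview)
Your proposal is correct and matches the paper's intended argument: the paper states the corollary with an immediate \emph{qed} and no written proof, treating it as a direct consequence of Observation~\ref{obs:HenOrEggGadgetA} applied to each constituent $A$-gadget and summed along the vertical path. Your write-up simply spells out this summation explicitly, which is exactly what the paper leaves implicit.
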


We now want to prove that any IDE flow in the constructed instance does not terminate. To do that we will take a generic $A^{+k}$-type gadget from this instance and show by induction that the flow originating in $v_1$ of this gadget will follow the flow pattern described at the beginning of the construction and indicated in \Cref{fig:ProofNonTerm:AFlow}.

\begin{proof}[Proof of \Cref*{thm:NonTermination}]
	Let $\tilde{A}$ be a copy of gadget $A$, time shifted by some $k \in \set{0,1,2,3,4}$ and w.l.o.g. in $C$. We then need to show that all flow in this gadget will obey the pattern described above (time shifted by $k$) for all unit time intervals between $k$ and $\theta$ for all $\theta \in \IN_0, \theta \geq k$. As our induction basis we take $\theta=k$, for which the claim trivially holds.
	
	For our induction step we assume that the flow in this gadget (and therefore all other gadgets $A$ in the instance) follows the desired pattern for all unit time intervals up to some $\theta \in \IN_0, \theta \geq k$ and want to show that it also does so for the next unit time interval $[\theta,\theta+1)$. 
	
	\begin{description}
		\item[\textbf{Case 1: $\bm{(\theta-k) \equiv 0 \mod 5}$}] See the respective case in the proof of \Cref{obs:HenOrEggGadgetA}.
		
		\item[\textbf{Case 2: $\bm{(\theta-k) \equiv 1 \mod 5}$}] Over the following unit time interval, flow will arrive at node $v_2$ at a rate of $1$ while the queue on edge $v_1v_2$ decreases. We need to show that all arriving flow will enter edge $v_2v_3$, as $v_2,v_3,v_4,v_5,P_5$ is currently the shortest $v_2$-$t$ path (w.r.t. instantaneous travel time). By induction and \Cref{cor:HenOrEggGadgetsB}, we already know that all paths $P_j^{+k'}$ will exhibit the waiting time pattern indicated by the respective diagram on the right in \Cref{fig:ProofNonTerm:A} (shifted by $k$). As, by construction, all those paths have a common length $L$ (w.r.t. $\tau_e$), we can calculate the length of all possible $v_2$-$t$ paths:
		\begin{itemize}
			\item The path $v_2,v_3,v_4,v_5,P_5$ has length $3+L$ and an additional waiting time of $0$.
			\item The path $v_2,P_2$ has length $L$ and an additional waiting time of $4$.
			\item The path $v_2,v_6,v_7,P_7$ has length $2+L$ and an additional waiting time between $2$ and $1$.
			\item All paths leaving gadget $\tilde{A}$ through the vertical path have a length of at least $3+1+L$ (three edges between the current gadget and the next one, one edge through this gadget and $L$ edges for whatever gadget $B_j^{+k'}$ is finally used to get to $t$) and possibly additional waiting times.
		\end{itemize}
		So the path beginning with edge $v_2v_3$ has the shortest total instantaneous travel time and even uniquely so for all times except $\theta+1$. Therefore all flow arriving at $v_2$ must enter edge $v_2v_3$ for the whole interval $[\theta,\theta+1)$.
		
		\item[\textbf{Case 3: $\bm{(\theta-k) \equiv 2 \mod 5}$}] Over the following unit time interval, flow will arrive at rate $1$ at node $v_3$, which has to enter edge $v_3v_4$ as this is the only one leaving $v_3$, and at node $v_2$. We now need to show that all this flow enters edge $v_2v_6$ (except possibly at time $\theta$). We will do this in the same way as in case 2, i.e. by calculating the instantaneous travel times for all relevant paths with the help of \Cref{cor:HenOrEggGadgetsB}:
		\begin{itemize}
			\item The path $v_2,v_3,v_4,v_5,P_5$ has length $3+L$ and no additional waiting time.
			\item The path $v_2,P_2$ has length $L$ and an additional waiting time of $4$.
			\item The path $v_2,v_6,v_7,P_7$ has length $2+L$ and an additional waiting time between $1$ and $0$.
			\item All paths leaving gadget $\tilde{A}$ immediately again have a length of at least $4+L$.
		\end{itemize}
		So all flow must enter edge $v_2v_6$ as the path beginning with this edge is the unique shortest $v_2-t$ path.
		
		\item[\textbf{Case 4: $\bm{(\theta-k) \equiv 3 \mod 5}$}] Over the following unit time interval, flow arrives at rate $1$ at the nodes $v_4$ and $v_6$. Since those only have one edge leaving them, the flow will just follow the only possible path.
		
		\item[\textbf{Case 5: $\bm{(\theta-k) \equiv 4 \mod 5}$}] Over the following unit time interval, flow arrives at rate $1$ at the nodes $v_5$ and $v_7$. We need to show that all this flow enters the edges $v_5v_1$ and $v_7v_1$, respectively. So as in case 2 and 3 we need to calculate the instantaneous travel times on all relevant $v_5$-$t$ and $v_7$-$t$ paths - again using \Cref{cor:HenOrEggGadgetsB}:
		\begin{itemize}
			\item The path $v_5,v_2,P_2$ has length $2+L$ and an additional waiting time of $0$.
			\item The path $v_5,P_5$ has length $L$ and an additional waiting time of $3$.
			\item The path $v_7,v_2,P_2$ has length $2+L$ and an additional waiting time of $0$.
			\item The path $v_7,P_7$ has length $L$ and an additional waiting time of $3$.
		\end{itemize}
	\end{description}
	
	With the induction completed we have shown that for any IDE flow and any copy of gadget $A$ within the given instance all flow generated at $v_1$ will arrive back at its start node after five unit time steps at which point the whole network is in exactly the same state as before. Thus, every IDE flow cycles and does not terminate.
\end{proof}

Although we made use of several distinct source nodes, this is in fact not necessary in order to get a network with non-terminating IDEs. I.e. \Cref{thm:NonTermination} can be strengthened as follows.

\begin{theorem}\label{thm:NonTerminationSingleSource}
	There exists a single-source multi-sink network with two sinks, where any IDE flow does not terminate.
\end{theorem}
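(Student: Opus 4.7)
The plan is to augment the two-sink multi-source instance from Theorem~\ref{thm:NonTermination} with a single new source $s$ plus a bundle of private delay-chains that reproduce each of the original per-source inflow schedules. Recall that in the proof of that theorem, every $A^{+k}$-gadget has a dedicated source node $v_1^{(\ell,k)}$ receiving a constant inflow of rate $2$ on $[o_k, o_k+1)$ with $o_k := k\bmod 5$, and that once these boluses are delivered the flow cycles forever inside the gadgets without ever reaching either sink. It therefore suffices to build a single-source two-sink instance in which any IDE delivers exactly these boluses at the same times and rates.

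Concretely, I would introduce $s$ and, for each $A^{+k}$-gadget at location $\ell$, attach a private chain $C^{(\ell,k)}$ of $o_k$ edges of transit time $1$ and capacity $2$ from $s$ to $v_1^{(\ell,k)}$ through pairwise disjoint fresh intermediate nodes; move the original commodity onto $s$ with inflow rate $2$ on $[0,1)$, keeping its sink $t_i\in\{t,t'\}$. To pin down the flow split at $s$, pad the post-$v_1^{(\ell,k)}$ segment of each gadget's route towards $t_i$ by $o_{\max}-o_k$ extra unit-length edges (where $o_{\max}=4$), so that the instantaneous $s$-$t_i$ distance via any of commodity $c_{t_i}$'s chains becomes a single constant $L_{t_i}$ at time~$0$. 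All chain-starting edges from $s$ for $c_{t_i}$ are then simultaneously active and indistinguishable up to relabeling, so the thin-flow conditions \eqref{eq:thin_flow_conservation}--\eqref{eq:thin_flow_tight} at $s$ force any IDE to split $c_{t_i}$'s inflow equally across its $n_{t_i}$ chains: by \eqref{eq:thin_flow_tight} all active chain-starting edges must share the same value of $g_e(x_e)/\nu_e$, and since their aggregate capacity $2 n_{t_i}$ equals the total inflow, the unique feasible value is $x_e=2$. Hence each chain transports flow unimpeded at rate $2$ and delivers the required bolus at $v_1^{(\ell,k)}$ during $[o_k, o_k+1)$, after which the in-gadget induction of Theorem~\ref{thm:NonTermination} transfers verbatim: the paddings only add the uniform offset $L_{t_i}$ to every candidate $v_2$-to-$t_i$ route and thus preserve the relative orderings exploited there. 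Consequently, no IDE terminates.

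The main obstacle is the equalisation step. Unlike the multi-source setting, where each source directly prescribes a local inflow, a single-source instance leaves the IDE free to split the total inflow among simultaneously-active outgoing edges, so the symmetric split must be enforced by the structure itself. The delicate point is that the $P_2, P_5, P_7$ connectors in the original construction are shared across several $A$-gadgets of different shift types, and a uniform padding of a shared connector would affect all of them at once; the padding must therefore be broken into per-gadget tails of individually tuned length, and one has to verify simultaneously that each chain is genuinely shortest at $s$ for its commodity, that the augmentation creates no exotic shortcut between different $A$-gadgets, and that the $B$-gadget waiting-time schedules continue to match those used in the proof of Theorem~\ref{thm:NonTermination}. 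These are finite but careful length-accounting checks, and they are where the real work of the proof lies.
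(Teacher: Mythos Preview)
Your high-level strategy---augment the instance of Theorem~\ref{thm:NonTermination} with a single super-source and machinery that reproduces each original per-gadget inflow---matches the paper's. The mechanisms, however, differ substantially. You propose private delay-chains of length $o_k$ together with per-gadget padding on the $v_1$-to-$t_i$ side to equalise the $s$-$t_i$ distances and thereby force a symmetric split at $s$. The paper instead keeps the original network $G$ completely untouched: it first moves every source to a new $\tilde s_i$ via a single delay edge (so all network inflows are on $[0,1)$), and then attaches a super-source $\hat s$ together with, for each $i$, a small three-node gadget containing a direct edge $w_i\to t_i$ of tiny capacity. At time~$0$ the short bypass paths $\hat s,w_i,t_i$ and $\hat s,v_i,w_i,t_i$ (length~$3$) are strictly shorter than anything entering $G$, so the same capacity-matching argument you invoke forces the split at $\hat s$; thereafter the queue accumulating on $w_i t_i$ grows until it exactly equalises with the path through $G$, at which point flow of rate~$2$ is released into $\tilde s_i$, reproducing the original inflow without ever touching a node label inside $G$.

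The genuine gap in your proposal is precisely the part you flag as ``where the real work lies''. The padding you need must be placed downstream of each $v_1^{(\ell,k)}$, but every such route runs through the shared $P_2,P_5,P_7$ connectors and through the vertical paths of neighbouring $B$-gadgets. Adding $4-o_k$ edges on a per-gadget basis shifts the labels $\ell_v$ inside that gadget by $4-o_k$, and since adjacent $A$-gadgets in the same $B$-gadget carry different shifts, the comparative length bounds used in the five-case induction of Theorem~\ref{thm:NonTermination} (e.g.\ ``all paths leaving $\tilde A$ through the vertical path have length at least $3+1+L$'') no longer hold with a single constant $L$. You would have to redo the entire case analysis with shift-dependent connector lengths and verify that no new shortcut appears between differently-padded gadgets---none of which you carry out. (There is also a minor issue: for $o_k=0$ your chain has zero edges, so several $v_1$'s would coincide with $s$.) The paper's queue-based metering sidesteps all of this by never altering any distance within $G$; the equalisation happens entirely through waiting times on the external $w_i t_i$ edges, which is why no padding or re-verification is needed.
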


\ifarxiv
\begin{proof}
	To prove this theorem we extend the network from \Cref{thm:NonTermination} in such a way that after some initial warm-up time any IDE flow behaves exactly as in the initial network. First, we make sure that all network inflow rate functions $u_i$ have the interval $[0,1)$ as their support. This can be accomplished by introducing a new source node $\tilde{s}_i$ for every commodity $i$ and connecting that new source node with an edge of length $r_i$ and rate capacity $2$ to the original source node $s_i$ (see \Cref{fig:ProofNonTermSingleSink:SameReleaseTime}). Also note that the two sink nodes $t$ and $t'$ have no outgoing edges.
	
	\begin{figure}[h]\centering
		\begin{adjustbox}{max width=.8\textwidth}
			\begin{tikzpicture}
	\newcommand{\colComGreen}{green!90!blue!65}
	
	\useasboundingbox (-2,-.5) rectangle (11,3);
	
	\begin{scope}
		\node[namedVertex](si) at (0,0) {$s_i$};
		\path[edge,dashed] (si) -- +(2,0);
		
		\node[rectangle, color=black, thick, draw, minimum width=.8cm,minimum height=.6cm, fill=\colComGreen] (inflow) at ($(si)+(0,1.2)$) {$u_i = 2 \cdot \mathds{1}_{[r_i, r_i+1)}$};
		
		\path[edge,ultra thick] (inflow) -- (si);
	\end{scope}

	\begin{scope}[xshift=9cm]
		\node[namedVertex](si) at (0,0) {$s_i$};
		\path[edge,dashed] (si) -- +(2,0);
		\node[namedVertex](ssi) at (-3,0) {$\tilde{s}_i$};
		\path[edge] (ssi) --node[above]{$(r_i,2)$} (si);
		
		\node[rectangle, color=black, thick, draw, minimum width=.8cm,minimum height=.6cm, fill=\colComGreen] (inflow) at ($(ssi)+(0,1.2)$) {$u'_i = 2 \cdot \mathds{1}_{[0, 1)}$};
		
		\path[edge,ultra thick] (inflow) -- (ssi);
	\end{scope}			
\end{tikzpicture}			
		\end{adjustbox}
		\caption{Changing the graph from the proof of \Cref{thm:NonTermination} (left) in such a way that all release times are within the interval $[0,1)$ (right).}\label{fig:ProofNonTermSingleSink:SameReleaseTime}
	\end{figure}
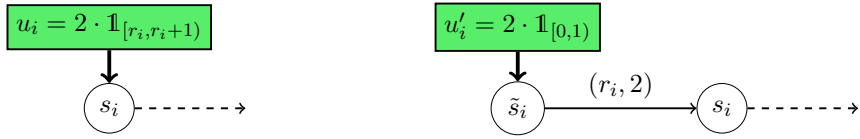
	
	Next, we add one node $\hat{s}$, which will be our super source. Then for every commodity  $i$ we add two distinct nodes $v_i, w_i$ and edges as indicated in \Cref{fig:ProofNonTermSingleSink:NewEdges} (where $w_i$ is always connected to $t_i$, the sink node of commodity $i$). Finally, we replace the commodities $I$ by two new commodities $0$ and $0'$ with common source node $\hat{s}$. Commodity $0$ has $t$ as its sink node and the following network inflow rate function
		
		\[\hat{u}_0(\theta) = \begin{cases}\sum_{\substack{i \in I\\t_i = t}}(\tau(P_i)+5), &\theta \in [0,1)\\0, &\text{else}\end{cases},\]
		
	where for every commodity $i \in I$ with sink node $t$, $P_i$ is a shortest $\tilde{s}_i$-$t$ path in $G$. Commodity $0'$ has $t'$ as its sink node and an analogous inflow rate function $\hat{u}_{0'}$.
	
	\begin{figure}[h]\centering
		\begin{adjustbox}{max width=.7\textwidth}
			\begin{tikzpicture}
	\newcommand{\colComGreen}{green!90!blue!65}
		
	\useasboundingbox (-10.5,-6) rectangle (5.5,4);
	
	\draw[rounded corners=10, thick, dashed] (-3,-4) rectangle (4, 2);
	\node() at (3.5,-1) {\Large$G$};
	
	\node[namedVertex](t) at (3,0.3) {$t$};
	\node[namedVertex](ts) at (3,-3) {$t'$};
	
	\node[namedVertex](si) at (-2,1) {$\tilde{s}_i$};
	\path[edge,dashed] (si) --node[above,sloped]{$P_i$}node[below,sloped,pos=.3]{\tiny shortest $\tilde{s}_i$-$t$ path} (t);
	\node[namedVertex](sj) at (-2,-0.5) {$\tilde{s}_j$};
	\path[edge,dashed] (sj) --node[below,sloped]{$P_j$}node[above,sloped,pos=.3]{\tiny shortest $\tilde{s}_j$-$t$ path} (t);
	\node[namedVertex](sk) at (-2,-3) {$\tilde{s}_k$};
	\path[edge,dashed] (sk) --node[below,sloped]{$P_k$}node[above,sloped]{\tiny shortest $\tilde{s}_k$-$t$ path} (ts);
	
	\node[namedVertex](shat) at (-8,-1) {$\hat{s}$};

	\node[namedVertex](vi) at (-8,2.5) {$v_i$};
	\node[namedVertex](wi) at (-4.5,2.5) {$w_i$};

	\node[namedVertex](vj) at (-4.5,-2) {$v_j$};
	\node[namedVertex](wj) at (-4.5,0) {$w_j$};	

	\node[namedVertex](vk) at (-8,-4.5) {$v_k$};
	\node[namedVertex](wk) at (-4.5,-4.5) {$w_k$};	
	
	\path[edge] (shat) --node[above,sloped]{$(1,6)$} (vi);
	\path[edge] (vi) --node[below,sloped]{$(1,3)$} (wi);
	\path[edge] (shat) --node[above,sloped]{$(2,\tau(P_i)-1)$} (wi);
	\path[edge] (wi) --node[above,sloped,near start]{$(2,2)$} (si);
	\path[edge] (wi) to [bend left=45] node[below,sloped]{$(1,1)$} (t);
	\path[edge] (shat) --node[below,sloped]{$(1,6)$} (vj);
	\path[edge] (vj) --node[below,sloped]{$(1,3)$} (wj);
	\path[edge] (shat) --node[below,sloped]{$(2,\tau(P_j)-1)$} (wj);
	\path[edge] (wj) --node[above,sloped,near start]{$(2,2)$} (sj);
	\path[edge] (wj) to [bend right=45] node[above,sloped]{$(1,1)$} (t);
	\path[edge] (shat) --node[below,sloped]{$(1,6)$} (vk);
	\path[edge] (vk) --node[above,sloped]{$(1,3)$} (wk);
	\path[edge] (shat) --node[below,sloped]{$(2,\tau(P_k)-1)$} (wk);
	\path[edge] (wk) --node[below,sloped,near start]{$(2,2)$} (sk);
	\path[edge] (wk) to [bend right=45] node[above,sloped]{$(1,1)$} (ts);
	
	\node[rectangle, color=black, thick, draw, minimum width=1cm,minimum height=.8cm, fill=\colComGreen] (inflow) at ($(shat)+(-1.5,0)$) {};
	
	\path[edge,ultra thick] (inflow) -- (shat);
	
\end{tikzpicture}			
		\end{adjustbox}
		\caption{The modified network with only one single source and two sinks. After an initial warm-up phase (see \Cref{fig:ProofNonTermSingleSink:FlowEvolution}) the flow inside $G$ behaves just as in the original network from the proof of \Cref{thm:NonTermination}.}\label{fig:ProofNonTermSingleSink:NewEdges}
	\end{figure}
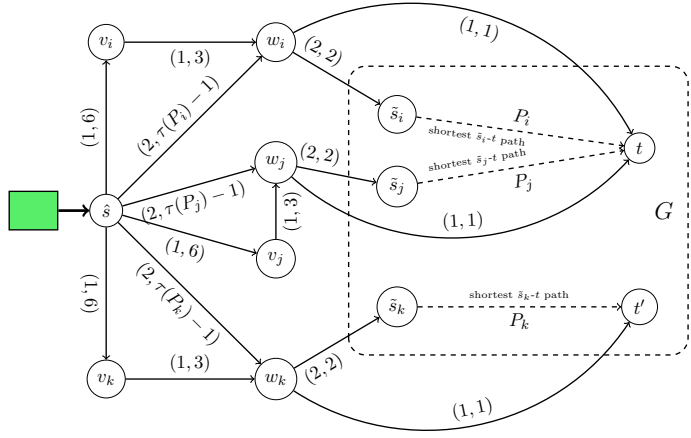
	
	Now at time $\theta=0$, the shortest $\hat{s}$-$t$ paths are $\hat{s},w_i,t$ and $\hat{s},v_i,w_i,t$ (for $i$ with $t_i=t$) with length $3$ since every other path has to go through $G$ and therefore has a length of at least $4$. As the inflow rate of commodity $0$ exactly matches the total rate capacity of the first edges of all those paths, the flow of commodity splits between these edges $\hat{s},w_i$ and $\hat{s},v_i$ using all of them with full capacity.

	At time $\theta=1$, the flow arriving at node $v_i$ with rate $6$ enters the edge $v_iw_i$ and starts forming a queue on that edge. At time $\theta=2$, flow arrives with rate $\tau(P_i)+2$ at node $w_i$ and starts entering the edge $w_i,\hat{t}_i$, building up a queue there. At time $\theta=3$, this queue has reached a length of $\tau(P_i)+\revised{1}$, at which point the paths $w_i,t$ and $w_i,\tilde{s}_i,P_i,t$ have the same instantaneous travel time (note that by this time no flow has yet entered $G$ and, thus, no waiting times occur within $G$). Thus, from now on the flow arriving at node $w_i$ at rate $3$ splits between the edges $w_i,t$ and $w_i,\tilde{s}_i$ proportional to the respective rate capacities. Thus, between time $\theta=5$ and $\theta=6$, flow arrives at node $\tilde{s}_i$ at a rate of $2$. 
	
		\begin{figure}[h]\centering
			\begin{adjustbox}{max width=.95\textwidth}
				\begin{tikzpicture}
	\newcommand{\colComBlue}{blue!60}
	\newcommand{\colComRed}{red!30}	
	\newcommand{\colComGreen}{green!90!blue!65}

	\newcommand{\network}[1]{
		\draw[rounded corners=10, thick, dashed] (-3,-4) rectangle (4, 2);
		\node() at (3.5,-1) {\Large$G$};
		
		\node[namedVertex](t) at (3,0.3) {$t$};
		\node[namedVertex](ts) at (3,-3) {$t'$};
		
		\node[namedVertex](si) at (-2,1) {$\tilde{s}_i$};
		\path[edge,dashed] (si) --node[above,sloped]{$P_i$} (t);
		\node[namedVertex](sj) at (-2,-0.5) {$\tilde{s}_j$};
		\path[edge,dashed] (sj) --node[below,sloped]{$P_j$} (t);
		\node[namedVertex](sk) at (-2,-3) {$\tilde{s}_k$};
		\path[edge,dashed] (sk) --node[below,sloped]{$P_k$} (ts);
		
		\node[namedVertex](shat) at (-7,-1) {$\hat{s}$};
		
		\node[namedVertex](vi) at (-7,2.5) {$v_i$};
		\node[namedVertex](wi) at (-4.5,2.5) {$w_i$};
		
		\node[namedVertex](vj) at (-4.5,-2) {$v_j$};
		\node[namedVertex](wj) at (-4.5,0) {$w_j$};	
		
		\node[namedVertex](vk) at (-7,-4.5) {$v_k$};
		\node[namedVertex](wk) at (-4.5,-4.5) {$w_k$};	
		
		\path[edge] (shat) -- (vi);
		\path[edge] (vi) -- (wi);
		\path[edge] (shat) -- (wi);
		\path[edge] (wi) -- (si);
		\path[edge] (wi) to [bend left=45] (t);
		\path[edge] (shat) -- (vj);
		\path[edge] (vj) -- (wj);
		\path[edge] (shat) -- (wj);
		\path[edge] (wj) -- (sj);
		\path[edge] (wj) to [bend right=45] (t);
		\path[edge] (shat) -- (vk);
		\path[edge] (vk) -- (wk);
		\path[edge] (shat) -- (wk);
		\path[edge] (wk) -- (sk);
		\path[edge] (wk) to [bend right=45] (ts);
		
		\node at (-6.5,3.5) [rectangle,draw] () {\Large$\theta=#1$:};		
	}


	\begin{scope}
		\network{1}
		
		\path[draw,line width=12pt,color=\colComBlue] (shat) -- (vi);
		\path[draw,line width=20pt,color=\colComBlue] (shat) -- ($(shat)!.5!(wi)$);
		\path[draw,line width=12pt,color=\colComBlue] (shat) -- (vj);
		\path[draw,line width=20pt,color=\colComBlue] (shat) -- ($(shat)!.5!(wj)$);
		\path[draw,line width=12pt,color=\colComRed] (shat) -- (vk);
		\path[draw,line width=20pt,color=\colComRed] (shat) -- ($(shat)!.5!(wk)$);
		
		\network{1}
	\end{scope}
	
	\begin{scope}[xshift=13cm]
		\network{2}
		
		\path[draw,line width=6pt,color=\colComBlue] (vi) -- (wi);
		\node[rectangle, color=\colComBlue, draw, minimum width=.7cm,minimum height=.6cm, fill=\colComBlue, rotate around={90:(0,0)}] at ($(vi)!.3!(wi)!-.55cm!90:(wi)$) {};
		\path[draw,line width=20pt,color=\colComBlue] ($(shat)!.5!(wi)$) -- (wi);

		\path[draw,line width=6pt,color=\colComBlue] (vj) -- (wj);		
		\node[rectangle, color=\colComBlue, draw, minimum width=.7cm,minimum height=.6cm, fill=\colComBlue, rotate around={90:(0,0)}] at ($(vj)!.3!(wj)!-.55cm!270:(wj)$) {};
		\path[draw,line width=20pt,color=\colComBlue] ($(shat)!.5!(wj)$) -- (wj);

		\path[draw,line width=6pt,color=\colComRed] (vk) -- (wk);		
		\node[rectangle, color=\colComRed, draw, minimum width=.7cm,minimum height=.6cm, fill=\colComRed, rotate around={90:(0,0)}] at ($(vk)!.3!(wk)!-.55cm!270:(wk)$) {};
		\path[draw,line width=20pt,color=\colComRed] ($(shat)!.5!(wk)$) -- (wk);
		
		\network{2}
	\end{scope}
	
	\begin{scope}[yshift=-11cm]
		\network{3}
		
		\path[draw,line width=6pt,color=\colComBlue] (vi) -- (wi);
		\path[draw,line width=2pt,color=\colComBlue] (wi) to [bend left=45] (t);
		\node[rectangle, color=\colComBlue, draw, minimum width=.7cm,minimum height=1.4cm, fill=\colComBlue, rotate around={23:(0,0)}] at ($(shat)!1.28!(wi)!-.3cm!90:(wi)$) {};
		
		\path[draw,line width=6pt,color=\colComBlue] (vj) -- (wj);
		\path[draw,line width=2pt,color=\colComBlue] (wj) to [bend right=45] (t);
		\node[rectangle, color=\colComBlue, draw, minimum width=.7cm,minimum height=1.4cm, fill=\colComBlue, rotate around={-32:(0,0)}] at ($(shat)!1.00001!(wj)!-1.6cm!-90:(wj)$) {};

		\path[draw,line width=6pt,color=\colComRed] (vk) -- (wk);
		\path[draw,line width=2pt,color=\colComRed] (wk) to [bend right=45] (ts);
		\node[rectangle, color=\colComRed, draw, minimum width=.7cm,minimum height=1.4cm, fill=\colComRed, rotate around={-28:(0,0)}] at ($(shat)!1.28!(wk)!-.4cm!-90:(wk)$) {};
		
		\network{3}
	\end{scope}
	
	\begin{scope}[yshift=-11cm, xshift=13cm]
		\network{4}
		
		\path[draw,line width=4pt,color=\colComBlue] (wi) -- ($(wi)!.5!(si)$);
		\path[draw,line width=2pt,color=\colComBlue] (wi) to [bend left=45] (t);
		\node[rectangle, color=\colComBlue, draw, minimum width=.7cm,minimum height=1.4cm, fill=\colComBlue, rotate around={23:(0,0)}] at ($(shat)!1.28!(wi)!-.3cm!90:(wi)$) {};
		
		\path[draw,line width=4pt,color=\colComBlue] (wj) -- ($(wj)!.5!(sj)$);
		\path[draw,line width=2pt,color=\colComBlue] (wj) to [bend right=45] (t);
		\node[rectangle, color=\colComBlue, draw, minimum width=.7cm,minimum height=1.4cm, fill=\colComBlue, rotate around={-32:(0,0)}] at ($(shat)!1.00001!(wj)!-1.6cm!-90:(wj)$) {};

		\path[draw,line width=4pt,color=\colComRed] (wk) -- ($(wk)!.5!(sk)$);
		\path[draw,line width=2pt,color=\colComRed] (wk) to [bend right=45] (ts);
		\node[rectangle, color=\colComRed, draw, minimum width=.7cm,minimum height=1.4cm, fill=\colComRed, rotate around={-28:(0,0)}] at ($(shat)!1.28!(wk)!-.4cm!-90:(wk)$) {};
		
		\network{4}
	\end{scope}
\end{tikzpicture}			
			\end{adjustbox}
			\caption{The evolution of any IDE flow in the modified network. Between times $5$ and $6$, flow will arrive at a rate of $2$ at all nodes $\tilde{s_i}$, while the flow on the edges $w_it$ or $w_it'$ does not interfere with the flow inside $G$ from now on.}\label{fig:ProofNonTermSingleSink:FlowEvolution}
		\end{figure}
	
	As the same flow evolution happens for commodity $0'$, at time $\theta=5$, the new network is in the same state as the original network at time $\theta=0$. In particular, all the flow entering $G$ at one of the nodes $\tilde{s}_i$ will stay inside the network forever and so the flow will never terminate.
\end{proof}
\fi


\section{Summary and Open Problems}
We introduced in this paper the concept of IDE flows and
investigated two key questions: existence and termination of
IDE flows.
Regarding the former, we gave in \Cref{sec:ExistenceSingleSink} an extension-algorithm leading to the existence of IDE flows
for single-sink instances.  While the extension-algorithm is constructive,
it is not clear if finitely many calls of the algorithm suffice to
compute an IDE -- at the moment existence relies on a limit argument.
Especially for restricted graph classes (series-parallel graphs or acyclic
graphs) we expect finiteness.  
For multi-source multi-sink instances,
we gave in \Cref{sec:ExistenceMultiSink} a general existence 
theorem -- also based on an extension property. 
For piece-wise constant network inflow rate functions this extension can be achieved by solving a set of equations (called IDE thin flows), which we can obtain algorithmically by solving a linear mixed-integer program. 
The extension property for general network inflow rates, however, relies on a
solution to a variational inequality rendering it non-constructive.

Regarding termination of IDE flows, we showed in \Cref{thm:Termination_SingleSink} that for single-sink networks, all IDE flows terminate. In a forthcoming paper \cite{GrafHarks19}, we give quantitative upper bounds of $\BigO(U\tau(G))$ for the time such an IDE flow needs to terminate, where $U \coloneqq \sum_{i \in I}\int_{0}^{\infty}u_i(\theta)d\theta$ is the total network inflow and $\tau(G) \coloneqq \sum_{e \in E}\tau_e$ the sum of all edge transit times. On the other hand, we can give lower bounds of order $\Theta(U\log\tau(G))$, see \cite{GrafHarks19} for details. 
In \Cref{sec:TerminationMultiSink}, we gave an example for a multi-sink network, where no IDE flow terminates. By \Cref{thm:NonTerminationSingleSource}, we know that only a single-source and two sinks are needed for this effect to appear (and by \Cref{thm:Termination_SingleSink} we also know that this is the minimal number of sinks necessary). However, the underlying graph is quite complex and it would be interesting to see, whether there are certain graph classes  beside acyclic ones, where termination is guaranteed even in the multi-sink case (e.g., planar graphs, series parallel graphs, \dots).

\paragraph*{Acknowledgments:}\ We thank the anonymous reviewers for their careful proofreading and suggestions on how to improve this paper. 
	We also thank Kathrin Gimmi for proofreading the first drafts as well as Marcel Gabor, Johannes Hagenmaier and Georg Kraus for their helpful comments on previous versions of this paper. Finally, we thank the Deutsche Forschungsgemeinschaft (DFG) for their financial support.

\clearpage
\bibliographystyle{plain}
\bibliography{literatur}
%
\appendix
\revised{\section{List of Symbols}

\begin{tabular}{llp{8cm}}
	\textbf{Symbol}			& \textbf{Name}
		& \textbf{Description} \\\hline
	$u,v,w \in V$				& vertices 
		& \\
	$e \in E$				& edges 
		& \\
	$\delta_v^+ \subseteq E$			& outgoing edges
		& the set of edges leaving $v$, \newline i.e. $\delta_v^+ \coloneqq \set{e \in E | e = vw}$ \\ 
	$\delta_v^- \subseteq E$			& incoming edges 
		& the set of edges entering $v$, \newline i.e. $\delta_v^- \coloneqq \set{e \in E | e = uv}$ \\
	$\tau_e \in \IRnn$ 				& physical travel time 
		& the physical travel time on edge $e$\\
	$\nu_e \in \IRnn$ 				& edge capacity
		& the capacity of edge $e$\\
	$i \in I$				& commodities 
		& the set of all commodities \\
	$s_i \in V$					& source (node)		
		& the node at which particles of commodity $i$ enter the network \\
	$t_i \in V$					& sink (node)
		& the destination of particles of commodity $i$ \\
	$\theta \in \IR_{\geq 0}$ 				& time
		& the time is usually denoted by $\theta$ \\
	$u_i(\theta)\in \IR_{\geq 0}$			& network inflow rate
		& inflow rate of particles of commodity $i$ at node $s_i$ at time $\theta$ \\
	$f^+_{e,i}(\theta)\in \IR_{\geq 0}$ 	& (edge) inflow rate
		& the rate at which particles of commodity $i$ enter edge $e$ at time $\theta$, $f_e^+$ is the total inflow rate, aggregated over all commodities, i.e. $f^+_{e}(\theta) \coloneqq \sum_{i \in I}f^+_{e,i}(\theta)$ \\
	$f^-_{e,i}(\theta)\in \IR_{\geq 0}$ 	& (edge) outflow rate
		& the rate at which particles of commodity $i$ leave edge $e$ at time $\theta$, $f_e^-$ is the total outflow rate, aggregated over all commodities, i.e. $f^-_{e}(\theta) \coloneqq \sum_{i \in I}f^-_{e,i}(\theta)$\\
	$F^+_{e,i}(\theta) \in \IR_{\geq 0}$ & cumulative (edge) inflow 
		& the total volume of flow having entered edge $e$ up to time $\theta$, i.e. $F_{e,i}^+(\theta) \coloneqq \int_{0}^{\theta}f_{e,i}^+(\zeta)d\zeta$. For the aggregated variant we use  $F_{e}^+(\theta) \coloneqq \int_{0}^{\theta}f_{e}^+(\zeta)d\zeta$ \\ 
	$F^-_{e,i}(\theta) \in \IR_{\geq 0}$ & cumulative (edge) outflow 
		& the total volume of flow having left edge $e$ up to time $\theta$, i.e. $F_{e,i}^-(\theta) \coloneqq \int_{0}^{\theta}f_{e,i}^-(\zeta)d\zeta$. For the aggregated variant we use $F_{e}^-(\theta) \coloneqq \int_{0}^{\theta}f_{e}^-(\zeta)d\zeta$\\
	$q_e(\theta) \in \IR_{\geq 0}$ 			& queue length 
		& the length of the queue on edge $e$ at time $\theta$, defined as $q_e(\theta) \coloneqq F^+_e(\theta)-F^-_e(\theta+\tau_e)$ \\
	$c_e(\theta) \in \IRnn$			& instantaneous travel time 
		& the current or instantaneous travel time at time $\theta$ over edge $e$, defined as $c_e(\theta) \coloneqq \tau_e + q_e(\theta)/\nu_e$ \\
	$\ell_{i,v}(\theta) \in \IR_{\geq 0}$	& node labels
		& the current distance from $v$ to $t_i$ at time $\theta$, i.e. the length of a shortest $v$-$t_i$ path w.r.t. $c_e(\theta)$. If all commodities share a common sink, we write $\ell_{v}(\theta) \coloneqq \ell_{i,v}(\theta)$ \\
	$E_{\theta}^i \subseteq E$ 			& active edges 
		& the set of all edges active for commodity $i$ at time $\theta$, i.e. \newline $E_{\theta}^i \coloneqq \set{vw \in E | \ell_{i, v}(\theta) = \ell_{i,w}+c_{vw}(\theta)}$ \\ 
\end{tabular}}

\end{document}